%% file: main.tex
\newif\ifreport
\reporttrue   
\PassOptionsToPackage{draft,bookmarks=false}{hyperref}
\documentclass[conference]{IEEEtran}
\IEEEoverridecommandlockouts
\usepackage{cite}
\usepackage{amsmath,amssymb,amsfonts}
\usepackage{amsmath}
\usepackage{algorithmic}
\usepackage{graphicx}
\usepackage{textcomp}
\usepackage{xcolor}
\usepackage{subcaption}
\usepackage{verbatim}
\usepackage{mathtools}
\usepackage{xurl}
\usepackage[hidelinks]{hyperref}
\newcommand{\amax}{\operatorname*{arg\,max}}
\newcommand{\amin}{\operatorname*{arg\,min}}

\usepackage{color}

\def\red{\color{red}}

\newcommand{\ignore}[1]{}

\newtheorem{assumption}{Assumption}
\newtheorem{proposition}{Proposition}

\newtheorem{lemma}{Lemma}
\newtheorem{theorem}{Theorem}

\def\BibTeX{{\rm B\kern-.05em{\sc i\kern-.025em b}\kern-.08em
    T\kern-.1667em\lower.7ex\hbox{E}\kern-.125emX}}

\makeatletter
\newcommand{\linebreakand}{%
  \end{@IEEEauthorhalign}
  \hfill\mbox{}\par
  \mbox{}\hfill\begin{@IEEEauthorhalign}
}
\makeatother

\begin{document}

\title{





Distributed Cross-Layer Optimization for Covert Multi-Hop, Multi-Modal Networks: Exponentially Fast Convergence and Robust Tracking

\thanks{$^*$Co-primary authors.

This work was supported in part by the ARL cooperative agreement W911NF-24-2-0205, the NSF Grant CNS-2239677, and a hardware donation from NVIDIA.}
}

\author{
\IEEEauthorblockN{Sirin Chakraborty$^{*\dag}$, Andrea Panebianco$^{*\dag}$, Yuchen Tian$^{\dag}$, Kevin S. Chan$^\ddag$,\\ Fikadu Dagefu$^\ddag$, Yin Sun$^{\dag}$, and Ness B. Shroff~\!$^\S$}
{$^\dag$\textit{Dept. ECE, Auburn University, Alabama, USA}}\\
{$^{\ddag}$\textit{DEVCOM Army Research Laboratory, Maryland, USA}}\\
{$^\S$\textit{Depts. ECE and CSE, The Ohio State University, Ohio, USA}}

}
\maketitle

\begin{abstract}
This paper develops the first distributed cross-layer algorithm
for joint congestion control, routing, scheduling, and power
control in covert multi-hop, multi-modal wireless networks, where adversarial wardens  (Willies) monitor radio modalities via energy detection. The Detection Error Probability (DEP), the probability that a Willie fails to reliably detect ongoing transmissions, is generally non-concave in the transmit powers,  making DEP-based covert network optimization challenging. We resolve this by constructing the tightest concave lower bound on the log-DEP, yielding a conservative convex problem that guarantees satisfaction of the original DEP constraints and unifies hard covertness constraints and covertness-utility maximization in a single problem. We develop a Parallel Proximal Alternating Direction Method of Multipliers (PP-ADMM) algorithm for the resulting cross-layer problem and prove global Q-linear convergence, i.e., exponentially fast convergence, to the set of optimal solutions under standard regularity conditions. Numerical results confirm linear convergence and demonstrate robust tracking performance under channel fading and Willie mobility.
\ignore{
This paper studies distributed cross-layer control for covert multi-hop, multi-modal wireless networks, where adversarial wardens monitor radio modalities through energy detection. These Detection Error Probability (DEP) constraints couple the transmit powers of all active links, rendering classical Network Utility Maximization (NUM) non-separable. 
We formulate two complementary problems, one with hard covertness constraints and one
with a concave covertness utility, and unify them into a single framework that subsumes hard and soft covertness models. We develop a parallel proximal Alternating Direction Method of Multipliers (ADMM) algorithm to solve the resulting problem. Under standard regularity conditions, we prove global Q-linear convergence in iterate distance to the optimal set through a level-set Lyapunov contraction. Simulation results confirm solution accuracy and convergence efficiency, and demonstrate robustness under node mobility and Jakes fading, with covertness feasibility maintained during adaptation to mobility-induced channel variations.}
\end{abstract}

\begin{IEEEkeywords}
covert communication, multi-modal wireless networks, cross-layer optimization,  detection error probability.
\end{IEEEkeywords}

\vspace{-5mm}
\section{Introduction}
\label{sec:intro}
Wireless tactical networks require reliable and covert information
exchange in contested environments, where adversarial wardens (Willies)
monitor ongoing transmissions. Covert communication offers physical-layer techniques to limit transmission detectability, quantified by the Detection
Error Probability (DEP), the probability that a Willie fails to reliably detect ongoing transmissions~\cite{kong2019distributed,yan2019,kong2022covert}.
In multi-hop, multi-modal networks, where nodes communicate over
heterogeneous radio modalities, covert network control becomes a distributed cross-layer optimization problem. Each Willie monitors one or more radio modalities via energy detection and observes the aggregate received signal energy from all active covert links on each monitored modality. Because the
DEP on each modality depends on the transmit power of all links sharing that modality, the per-Willie covertness constraint couples the power-control
decisions across the links. Moreover, covert power control further interacts
with higher-layer congestion control, routing, and scheduling decisions.

Cross-layer optimization for multi-hop wireless networks has been widely
studied through the Network Utility Maximization (NUM) framework~\cite{eryilmaz2006joint,lin2006utility,neely2010stochastic,wang2017fast},
whereas these NUM formulations do not address DEP-based covertness
constraints. Network-level covert studies address multi-hop routing~\cite{sheikholeslami2018multihop}, joint routing and resource allocation in
heterogeneous networks~\cite{kong2024covert}, multi-flow routing and resource
allocation~\cite{kong2025multiflow}, and decentralized cluster-based
multi-modal routing~\cite{haque2026decor}. However, these studies do not address cross-layer network optimization
under DEP constraints determined by the aggregate received signal energy
from covert links in a multi-hop network. To the
best of our knowledge, this paper presents the first distributed cross-layer
algorithm for joint congestion control, routing, scheduling, and power
control in covert multi-hop, multi-modal wireless networks. 
The technical contributions are  as follows:
\begin{itemize}


    \item The DEP is generally non-concave in the transmit powers, making DEP-based covert network optimization challenging. We resolve this by constructing the tightest concave lower bound on the log-DEP, yielding a conservative convex problem that jointly optimizes congestion control, routing, scheduling, and power control while guaranteeing satisfaction of the original DEP constraints. Moreover, this framework unifies hard DEP constraints and covertness-utility maximization in a single problem.

    \item We develop a Parallel Proximal Alternating Direction Method of Multipliers (PP-ADMM) algorithm for the resulting distributed cross-layer covert network control problem. Under standard regularity conditions, we prove that this distributed algorithm achieves global Q-linear convergence to the set of optimal solutions.


    \item Our numerical results confirm that the PP-ADMM algorithm converges exponentially fast to the optimum and exhibits robust tracking performance under channel fading and Willie mobility.
    

\end{itemize}
For the convergence analysis, the most relevant prior work is~\cite{wang2017fast}, which provides a linearly convergent NUM algorithm
via a two-block sequential ADMM. Our paper makes three advances beyond~\cite{wang2017fast}. First, because our capacity
and covertness constraints are nonlinear, the piecewise-polyhedral
calmness argument used in~\cite{wang2017fast} is not applicable.
Instead, we use a Karush–Kuhn–Tucker (KKT) metric-subregularity error bound and derive boundedness of the primal–dual iterates from sufficient descent.
Second, our proximal regularization enables fully parallel (Jacobi)
primal-block updates, whereas~\cite{wang2017fast} requires sequential (Gauss-Seidel) primal updates. Third, our PP-ADMM solves each local proximal
subproblem directly, without the inexact Uzawa approximation used
in~\cite{wang2017fast}. The resulting multi-block parallel proximal ADMM algorithm makes the convergence analysis substantially more challenging than that in~\cite{wang2017fast}.

\ignore{
\textcolor{blue}{This work makes three technical advances over~\cite{wang2017fast} and the link-level covert communication
literature~\cite{yan2019,kong2022covert}. First, the log-DEP
function is non-concave, and prior work~\cite{yan2019,kong2022covert} analyze it only at
the link level; we construct the highest concave lower bound on the
log-DEP that is tight on its concave region
(Lemma~\ref{lem:log_dep_curvature} and~\eqref{eq:g_definition}),
turning the per-Willie covertness requirement into a convex
constraint for network-level control. Second, our capacity and
covertness constraints are nonlinear in the transmit powers $p$ and bandwidth fractions $y$, so the
piecewise-polyhedral calmness of~\cite{wang2017fast} does not apply;
we use Robinson's metric regularity~\cite{robinson1976} with compactness of the iterate
set to obtain a uniform KKT error bound. Third, our multi-block
decomposition with consensus constraints separates blocks at the
problem level, removing the need for the inexact Uzawa linearization
that~\cite{wang2017fast} uses to make its two-block routing
subproblem separable; subproblems retain $C_{ml}$ and $g_{w,m}$
directly, and fully parallel (Jacobi-style) updates achieve global
linear convergence under the proximal regularization sized in
Assumption~\ref{assump:prox}. Together, these yield the first distributed parallel ADMM for
cross-layer covert network control, with Q-linear convergence under
aggregate energy-detection constraints.}
}
\section{System Model} \label{sec:formulation}

\subsection{Multi-Modal, Multi-Hop Network Model}
Consider a multi-hop wireless network represented by a directed graph $\mathcal{G} = (\mathcal{N}, \mathcal{L})$, where $\mathcal{N}$ is the set of nodes and $\mathcal L\subseteq\mathcal N\times\mathcal N$ is the set
of directed transmitter--receiver links between distinct nodes.

Let $\mathcal{F}$ denote the set of traffic flows in the network. Each flow $f\in\mathcal{F}$ is specified by a source node $s_f\in\mathcal{N}$, a destination node $d_f\in\mathcal{N}\setminus\{s_f\}$, and an admitted rate $x_f$ satisfying
\begin{equation}
    x_f^{\min} \leq x_f \leq x_f^{\max},
    \label{eq:flow_bounds}
\end{equation}
where $x_f^{\min}$ and $x_f^{\max}$ are the minimum and maximum admissible rates of flow $f$, respectively. Associated with each flow $f$ is a utility function $U_f(x_f)$, which quantifies the benefit obtained by admitting rate $x_f$ into the network. We assume that $U_f(\cdot)$ is nondecreasing, twice continuously differentiable, and strongly concave on $[x_f^{\min},x_f^{\max}]$. 


Each link may operate over one or more communication modalities, such as VHF, UHF, and cellular. 
Let $\mathcal{M}$ denote the set of available communication modalities. Each link $l\in\mathcal{L}$ is associated with a modality set $\mathcal{M}_l\subseteq\mathcal{M}$ containing the modalities supported by link $l$. For each modality $m\in\mathcal{M}$, let $\mathcal{L}_m \triangleq \{l\in\mathcal{L}: m\in\mathcal{M}_l\}$
denote the set of links that can operate on modality $m$.

Let $r_{ml}^d \geq 0$ denote the rate transmitted over link $l\in\mathcal{L}$ and modality $m\in\mathcal{M}_l$ for traffic destined for node $d\in\mathcal{N}$. 
The link rates satisfy the following flow-conservation constraints:
\begin{align}
    \!\!\!\!\sum_{l\in\mathcal{O}(n)} \sum_{m \in \mathcal{M}_l}\!\! r_{ml}^d
    -\!\!\!
    \sum_{l\in\mathcal{I}(n)} \sum_{m \in \mathcal{M}_l}\!\! r_{ml}^d
    &=\!\!
    \sum_{f\in\mathcal{F}}
    \mathbf{1}_{\{s_f=n,\, d_f=d\}} x_f, \!
    \label{eq:flow_conservation}
\end{align}
for all $n,d\in\mathcal{N}, n\neq d,$ where $\mathcal{O}(n)$ and $\mathcal{I}(n)$ denote the sets of outgoing and
incoming links of node $n$, respectively. 

For each link $l\in\mathcal{L}$ and modality $m\in\mathcal{M}_l$, let
$y_{ml}\in[0,1]$ denote the fraction of the bandwidth of modality $m$
allocated to link $l$, and let $p_{ml}\in [0,p^{\max}_{ml}]$ denote the
transmit power. The Shannon capacity of link $l$ under modality $m$ is
\begin{equation}
    C_{ml}(p_{ml},y_{ml})
    =
    y_{ml}\Omega_m 
    \log_2\!\left(
        1+
        \frac{p_{ml}|h_{ml}|^2}
        {y_{ml}\Omega_m N_{0,m}}
    \right),
    \label{eq:capacity}
\end{equation}
where $\Omega_m$ denotes the total bandwidth of modality $m$,
$y_{ml}\Omega_m$ is the bandwidth allocated to link $l$ under modality
$m$, $h_{ml}$ denotes the channel coefficient of link $l$ under modality
$m$, and $N_{0,m}$ denotes the receiver noise Power Spectral Density
(PSD). The function
$C_{ml}(p,y)$ is the perspective transform ~\cite[Section 3.2.6]{boyd2004convex}
of a logarithmic function and is therefore jointly concave in $(p,y)$.
We adopt the continuous extension $C_{ml}(p,0)=0$ when $y=0$.
For each link $l\in\mathcal L$ and modality
$m\in\mathcal M_l$, let
$q_{ml}=\sum_{d\in\mathcal N}r_{ml}^{d}$ denote the aggregate rate
carried by link $l$ over
modality $m$. The aggregate rate must not exceed the corresponding link
capacity:
\begin{equation}\label{eq:capacity_bound}
q_{ml} \leq C_{ml}(p_{ml},y_{ml}),
\qquad \forall l\in\mathcal L,\ m\in\mathcal M_l.
\end{equation}
We assume that the channel state $h_{ml}$ is perfectly known at the
transmitter of link $l$.


\subsection{Multi-Modal, Node-Exclusive Interference Model}

We adopt a bandwidth-fraction version of the node-exclusive
interference model~\cite{eryilmaz2006joint,lin2006utility}.
For each node $n$ and modality $m$, the total bandwidth fraction
allocated to its incoming and outgoing links cannot exceed one.
Thus, links incident to each node $n$ use nonoverlapping portions
of the bandwidth of modality $m$, although they may operate
simultaneously over different portions of that
bandwidth.

Recall that not every link supports every modality. Therefore, the node-exclusive interference model is characterized by the following constraints:
\begin{equation}
    \sum_{l\in (\mathcal{I}(n)\cup\mathcal{O}(n))\cap\mathcal{L}_m}
    y_{ml}
    \leq 1,~~~\forall\, n\in\mathcal{N},\; \forall\, m\in\mathcal{M}.
    \label{eq:degree}
\end{equation}
where, for each node $n$ and modality $m$, the corresponding interference constraint involves only the links in $\mathcal L_m$ that are outgoing and incoming links of node $n$.

\subsection{Covert Energy Detection Model}
 
Let $\mathcal{W}$ denote the set of adversarial Willies. For each modality $m$, let $\mathcal{W}_m \subseteq \mathcal{W}$ denote the subset of Willies monitoring modality $m$. If the legitimate network nodes do not know which Willies are monitoring modality $m$, we adopt the worst-case assumption that $\mathcal{W}_m = \mathcal{W}$, meaning that all Willies monitor each modality $m$. Since each Willie
$w \in \mathcal{W}_m$ has no prior knowledge of the bandwidth fractions $y_{ml}$, it cannot isolate individual links. Instead, Willie monitors the entire bandwidth $\Omega_m$ of modality $m$ and observes the superposition of all signals transmitted over that modality. Willie $w$ applies an energy detector to this aggregate received signal to distinguish between silence and transmission over modality $m$. 

For each modality $m\in\mathcal{M}$, define
$\mathbf{p}_m\triangleq(p_{ml})_{l\in\mathcal{L}_m}$ as the vector of
transmit powers for all links operating on modality $m$. The aggregate per-sample signal-to-noise ratio (SNR) observed by Willie $w$ on modality $m$ is then given by
\begin{equation}
s_{w,m}=\sum_{l\in\mathcal{L}_m}A_{w,ml}\,p_{ml},
\qquad
A_{w,ml}\triangleq\frac{|h_{w,ml}|^{2}}{\Omega_m N_{0,w,m}},
\label{eq:snr}
\end{equation}
where $h_{w,ml}$ is the channel coefficient from the transmitting node of link $l$ to Willie $w$, and
$N_{0,w,m}$ is the noise PSD at Willie $w$, both on modality $m$.
 
For energy detection on modality $m$, Willie $w$ observes the channel over a duration of $T_{w,m}$ seconds. The number of degrees of freedom, or equivalently, the number of complex baseband samples, is given by the effective time-bandwidth product:
\begin{equation}
    L_{w,m} \triangleq \zeta_{w,m} T_{w,m}\Omega_m,
    \label{eq:samples}
\end{equation}
where $\zeta_{w,m} \in (0,1]$ denotes the bandwidth utilization factor capturing potentially non-ideal sampling effects.

Consider the binary hypothesis testing problem at Willie $w$ between silence and transmission over modality $m$. Willie $w$ assumes equal prior probabilities $1/2$ for silence and transmission. Let $\theta>0$ denote the detection
threshold. Following the standard covert-communication convention, the Detection Error Probability (DEP) at threshold $\theta$ is defined as the sum of the false-alarm and missed-detection probabilities~\cite{yan2019,kong2022covert}:
\begin{equation}
    \xi(\theta) = P_{FA}(\theta) + P_{MD}(\theta).
\end{equation}
The minimum DEP under the optimal threshold is given by
\begin{equation}
    \mathrm{DEP}^*(s, L) \triangleq \min_{\theta > 0}\, \xi(\theta),
    \label{eq:dep_min}
\end{equation}
which admits the following closed-form expression~\cite{yan2019}:
\begin{equation}
\begin{aligned}
\!\!\!\!    \mathrm{DEP}^*(s, L)
    &= 1
    - \frac{\gamma\!\big(L,\, L(1 + \tfrac{1}{s})\ln(1+s)\big)}{\Gamma(L)}\\
    &\quad + \frac{\gamma\!\big(L,\, \tfrac{L}{s}\ln(1+s)\big)}{\Gamma(L)},
    \quad s > 0,\; L > 0,\!\!\!
\end{aligned}
\label{eq:dep_star}
\end{equation}
where $s$ denotes Willie’s aggregate received SNR, $L$ denotes the number of degrees of freedom for the energy detector, $\Gamma(a) = \int_0^\infty t^{a-1} e^{-t} dt$ for $a>0$ is the complete gamma function, and $\gamma(a, x) = \int_0^x t^{a-1} e^{-t} dt$ for $a>0$ and $x \ge 0$ is the lower incomplete gamma function.

The DEP at Willie $w$ on modality $m$ is a function of the
transmit-power vector $\mathbf{p}_m$, given by
\begin{equation}
\mathrm{DEP}_{w,m}(\mathbf{p}_m)
=
\mathrm{DEP}^{*}(s_{w,m},L_{w,m}),
\label{eq:dep_modality}
\end{equation}
where $s_{w,m}$ and $L_{w,m}$ are defined in~\eqref{eq:snr} and~\eqref{eq:samples}, respectively.
Under the optimal energy detector, $\mathrm{DEP}_{w,m}(\mathbf{p}_m)\in[0,1]$. 
A larger DEP indicates stronger covertness, while a smaller DEP indicates that Willie can detect the transmission more reliably.

\begin{figure}[!t]
\centering
\includegraphics[width=0.7\linewidth]{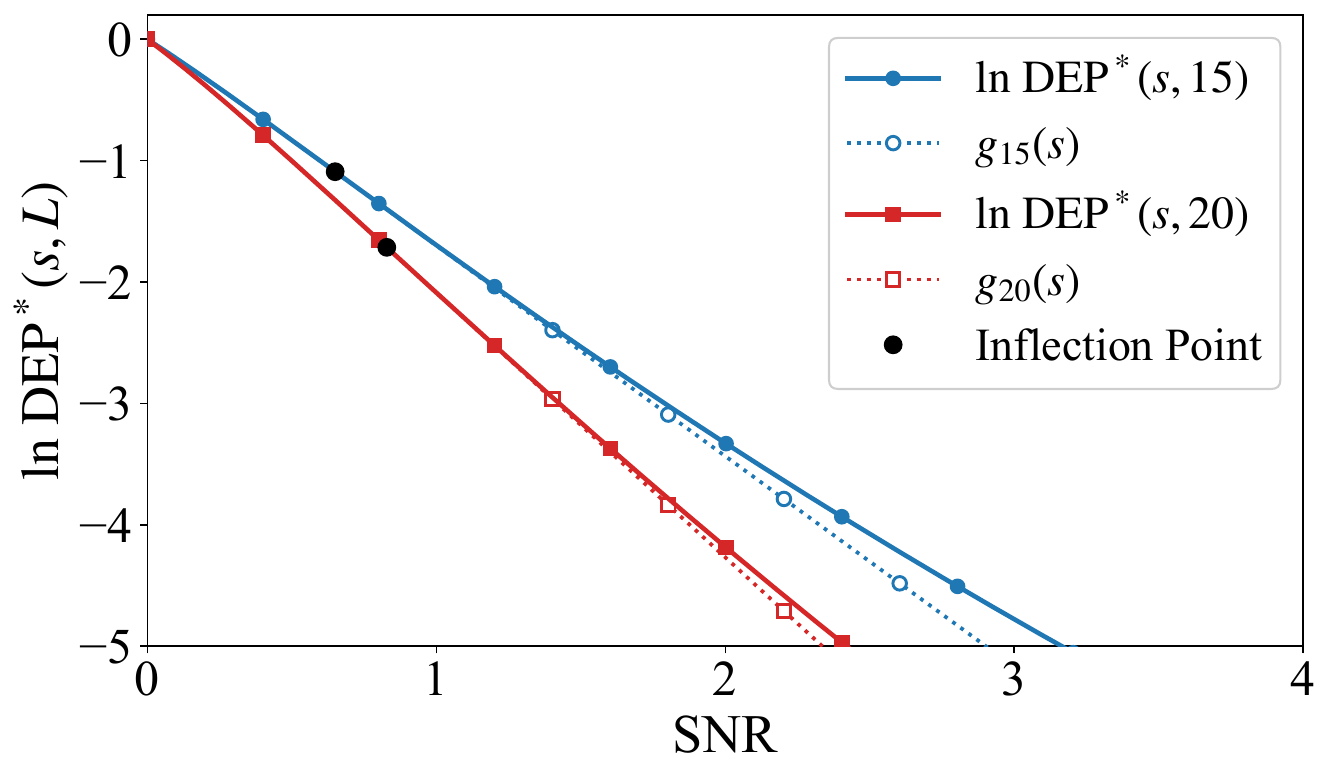}
\caption{The log-DEP function $s \mapsto\ln[\mathrm{DEP}^*(s,L)]$, which is concave on $[0,{s_L}]$ and
convex on $[{s_L},\infty)$, and its tightest concave lower bound $s \mapsto g_L({s})$, for
$L\in\{15,20\}$. }
\label{fig:lndep}
\vspace{-4mm}
\end{figure}

\section{Concave Covertness Bound and\\ Cross-Layer Covert Network Control Problems}

In this section, we construct a  concave lower bound on the log-DEP and formulate two cross-layer optimization problems: one with hard covertness constraints, and another that incorporates  both covertness utilities and hard covertness constraints.


\subsection{Tightest Concave Lower Bound on the log-DEP}
The function $s\mapsto\mathrm{DEP}^{*}(s, L)$ is non-concave. 
Consequently, the DEP 
function $\mathbf{p}_m \mapsto \mathrm{DEP}_{w,m}(\mathbf{p}_m)$ in \eqref{eq:dep_modality} is also non-concave. This non-concavity poses a key challenge in developing cross-layer covert network control algorithms. 

To enable a convex optimization formulation, we approximate the DEP function using a concave lower bound in the logarithmic domain. Specifically, we consider the log-DEP function $s \mapsto \ln [\mathrm{DEP}^{*}(s, L)]$, which is nearly concave (see Fig.~\ref{fig:lndep}) and exhibits the following curvature property. 
\begin{lemma} \label{lem:log_dep_curvature}
For each $L \geq 1$, the following assertions are true: 
\begin{itemize}
\item[(a)] $h_L(s) \triangleq \ln[\mathrm{DEP}^*(s, L)]$
is continuous on $[0,\infty)$ and twice continuously
differentiable on $(0,\infty)$;

\item[(b)] $h_L''(0^+)\triangleq
\lim_{s\downarrow0}h_L''(s)=K(1-K)$, where
$K \triangleq L^L e^{-L}/\Gamma(L)$;

\item[(c)] $h_L''(s) \sim L/s^2>0$ as $s \to \infty$.
Consequently, $h_L(s)$ is convex for all sufficiently large $s$.
\end{itemize}
\end{lemma}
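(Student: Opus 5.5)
The plan is to work directly with the closed form \eqref{eq:dep_star}. Write $P(L,x)\triangleq\gamma(L,x)/\Gamma(L)$, so that $\mathrm{DEP}^*(s,L)=D(s)\triangleq 1-P(L,a(s))+P(L,b(s))$, where
\[
b(s)=\tfrac{L}{s}\ln(1+s),\qquad a(s)=(1+s)\,b(s)=b(s)+L\ln(1+s).
\]
We have $\partial_xP(L,x)=x^{L-1}e^{-x}/\Gamma(L)$, and the relation $e^{-a}=e^{-b}(1+s)^{-L}$ gives
\[
a^{L-1}e^{-a}=\frac{b^{L-1}e^{-b}}{1+s}.
\]
Combining this with $a'=(1+s)b'+b$, the chain rule makes the $b'$ terms cancel. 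The expected outcome is the compact formula
\begin{equation*}
D'(s)=-\frac{b(s)^L e^{-b(s)}}{\Gamma(L)\,(1+s)},\qquad s>0 .
\end{equation*}
This single identity drives all three parts. I would verify it first, since everything else rests on it.

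For part (a), $a$ and $b$ extend continuously to $s=0$ with $a(0)=b(0)=L$, so $D(0)=1$ and $D$ is continuous on $[0,\infty)$. On $(0,\infty)$, $D$ is a composition of smooth functions, hence $C^2$. To take the logarithm we need $D>0$. From the formula, $D'<0$. We also have $D(s)\to 0$ as $s\to\infty$, because $a\to\infty$ and $b\to0$. So $D$ decreases strictly to $0$, which forces $D>0$ on $[0,\infty)$, and therefore $h_L=\ln D$ is continuous on $[0,\infty)$ and $C^2$ on $(0,\infty)$.

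For part (b), I would use $h''=D''/D-(D'/D)^2$ together with the logarithmic derivative
\[
(\ln(-D'))'=\Big(\tfrac{L}{b}-1\Big)b'-\tfrac{1}{1+s}.
\]
As $s\downarrow0$, we have $b\to L$ and $b'\to -L/2$. Hence $D'\to -L^Le^{-L}/\Gamma(L)=-K$, and $(\ln(-D'))'\to -1$, which gives $D''\to K$. Since $D\to1$, it follows that $h_L''(0^+)=K-K^2=K(1-K)$.

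Part (c) is the main obstacle, because one must differentiate an asymptotic relation. As $s\to\infty$, $b\sim L\ln s/s\to0$ and $a\sim L\ln s$. Then
\[
P(L,b)=\frac{b^L}{\Gamma(L+1)}\big(1+O(b)\big),
\]
while $1-P(L,a)=O\big((\ln s)^{L-1}s^{-L}\big)$ is smaller by a factor of order $1/\ln s$. So $D\sim b^L/\Gamma(L+1)$ and
\[
h_L'(s)=\frac{D'}{D}=-\frac{L}{1+s}\,R(s),\qquad R\triangleq\frac{b^Le^{-b}}{L\,\Gamma(L)\,D}\to1 .
\]
To obtain $h_L''$ without differentiating $R$ blindly, I would compute
\[
h_L''=\frac{D''}{D}-\Big(\frac{D'}{D}\Big)^2=\frac{D'}{D}\Big[(\ln(-D'))'-\frac{D'}{D}\Big].
\]
For large $s$, $b'=-b/s+L/(s(1+s))\approx -b/s$, so $(\ln(-D'))'=(L/b-1)b'-1/(1+s)\approx -(L+1)/s$, up to corrections of relative size $O(1/\ln s)$. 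Also $D'/D\approx -L/s$. Hence
\[
h_L''\approx \frac{L}{s}\cdot\frac{L+1-L}{s}=\frac{L}{s^2}.
\]
The delicate point is that the bracket is a difference of two $\Theta(1/s)$ quantities whose leading terms partially cancel, leaving $1/s$. I would therefore track the error terms carefully and show each is $o(1/s)$: the tail $1-P(L,a)$ relative to $D$, the $O(b)$ correction in $P(L,b)$, and the $L/(s(1+s))$ term in $b'$. Each appears to carry at least a factor $1/\ln s$ or $b$, so each should be $o(1)$ relative to the main term. This yields $h_L''(s)\sim L/s^2>0$, and convexity for all large $s$ follows.
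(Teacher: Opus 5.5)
Your proposal is correct, and it takes a genuinely different route from the paper.

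\textbf{The key identity.} Your argument turns on the closed form $D'(s)=-b(s)^Le^{-b(s)}/(\Gamma(L)(1+s))$, which checks out. It is the envelope identity: at the optimal threshold, only the explicit $s$-dependence of the missed-detection term survives. The paper never derives it. Instead it keeps the two chain-rule terms $-\varphi_L(a)a'+\varphi_L(b)b'$ separate.

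\textbf{Part (b).} Because of this, the paper needs second-order Taylor expansions of both $a$ and $b$, plus a cancellation of the $\varphi_L'(L)$ terms in $D''$. You only need $b\to L$ and $b'$ bounded near $0$.

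\textbf{Part (c).} The paper writes $D=\frac{L^{L-1}}{\Gamma(L)}\bigl(\frac{\ln(1+s)}{s}\bigr)^L\bigl[1+\frac{1}{\ln(1+s)}+\varepsilon_L(s)\bigr]$ and then asserts $O(\cdot)$ bounds on $\varepsilon_L'$ and $\varepsilon_L''$. So it really does differentiate an asymptotic expansion, and that step is only sketched. Your identity $h_L''=(D'/D)\bigl[(\ln(-D'))'-D'/D\bigr]$ sidesteps this:
\begin{itemize}
\item $(\ln(-D'))'$ is an explicit elementary function equal to $-(L+1)/s+O\bigl(\frac{1}{s\ln s}\bigr)$.
\item $D'/D=-\frac{L}{1+s}R$, so the bracket equals $-\frac{1}{s}+O\bigl(\frac{1}{s\ln s}\bigr)+\frac{L}{1+s}(R-1)$.
\end{itemize}
Hence you need only $R\to1$, i.e., the leading-order relation $D\sim b^L/\Gamma(L+1)$, with no rate at all. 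Despite your remark that one "must differentiate an asymptotic relation," your method never does. The error tracking you flag as delicate is therefore milder than you expect. As a by-product, your route also gives $D'<0$, hence strict monotonicity of $h_L$.

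\textbf{Trade-off.} The paper's route is more mechanical, since no identity has to be spotted. It also displays the next-order correction $-L/(s^2\ln s)$ in $h_L''$ explicitly, though it then absorbs it into $o(s^{-2})$.

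\textbf{One small point in (a).} The closed form is stated only for $s>0$. So $\mathrm{DEP}^*(0,L)=1$ should come from the detection model, as the paper argues: at $s=0$ the two hypotheses coincide, so $P_{FA}+P_{MD}=1$ for every threshold. Substituting $a(0)=b(0)=L$ into the formula only shows that the limit as $s\downarrow0$ equals $1$.
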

\begin{IEEEproof}
\ifreport
See Appendix~\ref{apx:log_dep_curvature}.
\else
See \cite[App. A.]{tian2026covertfull}.
\fi
\end{IEEEproof}

Numerical results further suggest that there exists a point
\(s_L \geq 0\) such that \(h_L(s)\) is concave on
\([0,s_L]\) and convex on \([s_L,\infty)\).
When \(s_L>0\), it is the unique root of \(h_L''(s)=0\), which can be
computed efficiently by bisection.
When $L<6.45$, $K<1$ and $h_L''(0^+)>0$. In this case, $s_L=0$.
Conversely, when $L>6.45$, we have $K>1$ and $h_L''(0^+)<0$.
In this case, $s_L>0$.

We use these results to construct a concave lower bound \(g_L(s)\) of
\(h_L(s)\) by retaining its concave region and replacing its convex tail
with a tangent line, as shown in Fig.~\ref{fig:lndep}:
\begin{equation}
\!\!\! g_L(s)
=
\begin{cases}
h_L(s), 
&\!\!\! 0 \le s \le s_L,\\[0.5em]
h_L(s_L)
+
h_L'(s_L)(s-s_L),
&\!\!\! s > s_L.
\end{cases}\!\!\!
\label{eq:g_definition}
\end{equation}
Since $h_L(s)$ is convex for $s \ge s_L$, its tangent line at $s_L$
provides a lower bound on $[s_L,\infty)$. Therefore,
\begin{equation}
g_L(s)
\le
h_L(s)
=
\ln[\mathrm{DEP}^{*}(s,L)],
\quad s \ge 0.
\label{eq:g_star_lower_bound}
\end{equation}
Moreover, \(s \mapsto g_L(s)\) is concave because it
coincides with the concave function \(h_L(s)\) on \([0,s_L]\)
and extends linearly after \(s_L\) with the same slope. In fact,
\(g_L(s)\) is the tightest concave lower bound of the function $\ln[\mathrm{DEP}^*(s,L)]$. Define
\begin{equation}
g_{w,m}(\mathbf{p}_m)
\triangleq
g_{L_{w,m}}(s_{w,m}).
\label{eq:dep_bound_modality}
\end{equation}
Because $s_{w,m}$ is an affine function of $\mathbf p_m$ and
$g_L(s)$ is concave, the function
$\mathbf{p}_m \mapsto g_{w,m}(\mathbf{p}_m)$ is also concave.
Furthermore, by \eqref{eq:dep_modality},
\eqref{eq:g_star_lower_bound}, and
\eqref{eq:dep_bound_modality},
\begin{equation}
g_{w,m}(\mathbf p_m)
\le
\ln[\mathrm{DEP}_{w,m}(\mathbf p_m)].
\label{eq:g_lower_bound}
\end{equation}

\subsection{Problem 1: NUM with Hard Covertness Constraints}
The first NUM formulation imposes covertness through the hard constraints in \eqref{eq:p1-covert}. 
The resulting NUM problem is

\begin{subequations}\label{prob:P1}
\begin{align}
 &\max_{\{x_f,r_{ml}^{d},q_{ml},y_{ml},p_{ml},s_{w,m}\}}
    \sum_{f \in \mathcal{F}} U_f(x_f)
    \label{eq:P1_obj}\\
    \text{s.t.}~~
    & ~~\sum_{l\in\mathcal{O}(n)} \sum_{m \in \mathcal{M}_l} r_{ml}^d
    -
    \sum_{l\in\mathcal{I}(n)} \sum_{m \in \mathcal{M}_l} r_{ml}^d \nonumber\\
    &~~= \sum_{f \in \mathcal{F}}
    \mathbf{1}_{\{s_f=n,\,d_f=d\}} x_f,~~
    \forall\, n, d \in \mathcal{N},\, n \neq d,
    \label{eq:P1_flow}\\
    &~~q_{ml}=\sum_{d\in\mathcal N}r_{ml}^{d},\quad
\forall l\in\mathcal L,\ m\in\mathcal M_l,
\label{eq:P1_rate}\\
&~~q_{ml}
\leq C_{ml}(p_{ml},y_{ml}),
\quad
\forall l\in\mathcal L,\ m\in\mathcal M_l,
\label{eq:P1_cap}\\
    &~~ y_{ml}\geq0,\quad
\forall l\in\mathcal L,\ m\in\mathcal M_l,
\nonumber\\[-1mm]
&~~ \sum_{l\in(\mathcal I(n)\cup\mathcal O(n))\cap\mathcal L_m}
\!\!\!\!y_{ml}\leq1,\!\!\quad
\forall n\in\mathcal N,\ m\in\mathcal M,
\label{eq:P1_sched}\\
    &~~s_{w,m} = \sum_{l\in \mathcal{L}_m} A_{w,ml}\, p_{ml}, \forall m\in\mathcal{M},\ w\in\mathcal{W}_m, \label{eq:p1-snragg}\\
    &~~g_{L_{w,m}}(s_{w,m}) \ge \ln\eta_{w,m}, \forall m\in\mathcal{M},\ w\in\mathcal{W}_m, \label{eq:p1-covert}\\
    &~~ x_f^{\min} \leq x_f \leq x_f^{\max},
    \quad \forall\, f \in \mathcal{F},
    \label{eq:P1_xbnd}\\
    &~~ r^d_{ml} \geq 0,
    \quad \forall\, l \in \mathcal{L},\forall m\in\mathcal{M}_l, \forall\, d \in \mathcal{N},
    \label{eq:P1_rbnd}\\
    &~~ 0 \leq p_{ml} \leq p^{\max}_{ml},
    \quad \forall l\in\mathcal{L},\; \forall m\in\mathcal{M}_l.
    \label{eq:P1_pbnd}
\end{align}
\end{subequations}
The objective \eqref{eq:P1_obj} maximizes the total network utility. 
Constraints \eqref{eq:P1_flow}--\eqref{eq:P1_sched} enforce flow conservation, aggregate-rate constraint, link-capacity feasibility, and node-exclusive scheduling, respectively. Constraint \eqref{eq:p1-snragg} defines the aggregate
SNR observed by Willie $w$ on modality $m$, while~\eqref{eq:p1-covert} requires the corresponding DEP to satisfy the prescribed
covertness level $\eta_{w,m}$, where $\eta_{w,m}$ is the minimum acceptable DEP for
Willie $w$ and modality $m$. Since $g_{w,m}(\mathbf{p}_m)$ is
a lower bound on $\ln[\mathrm{DEP}_{w,m}(\mathbf{p}_m)]$, \eqref{eq:dep_bound_modality}, \eqref{eq:g_lower_bound},
and~\eqref{eq:p1-covert} together guarantee:
\begin{align}
    \mathrm{DEP}_{w,m}(\mathbf{p}_{m})
    \ge \eta_{w,m}.
\end{align}
\subsection{Problem 2: NUM with Covertness Utilities and Constraints}
The second formulation further incorporates covertness into the objective through a utility function. Let $G_w(\cdot)$ be a nondecreasing, twice continuously differentiable, and strongly concave function. The second problem is formulated as
\begin{subequations}\label{prob:unified}
\begin{align}
   &\max_{
\{x_f,r_{ml}^{d},q_{ml},y_{ml},p_{ml},s_{w,m}\}
} \sum_{f \in \mathcal{F}} U_f(x_f) \nonumber\\&\hspace{27mm}+\!\! \sum_{m \in \mathcal{M}} \sum_{w \in \mathcal{W}_m} \!\!G_w\big(g_{L_{w,m}}(s_{w,m})\big) \!\!\! \label{eq:P2_obj} \\
    &\qquad\qquad\ \text{s.t.} ~~~~~~ \hspace{2mm}\eqref{eq:P1_flow} - \eqref{eq:P1_pbnd}. \label{eq:P2_con}
\end{align}
\end{subequations}


\ignore{

\subsection{Unified Problem}
\label{sec:unification}
Problems~1 and~2 can be combined into the unified formulation
\begin{subequations}\label{prob:unified}
\begin{align}
    \max_{\{x_f,\, r^d_l,\, y_{ml},\, y^{(n)}_{ml},\, p_{ml}\}}\;
    & \sum_{f \in \mathcal{F}} U_f(x_f) \nonumber\\
      &+ \sum_{m \in \mathcal{M}}\sum_{w \in \mathcal{W}_m}
       G_w\!\big(g_{w,m}(\mathbf{p}_m)\big)
    \label{eq:unified_obj} \\
    \text{s.t.}\;
    & \eqref{eq:P1_flow},\, \eqref{eq:P1_cap},\, \eqref{eq:P1_sched},\,
      \eqref{eq:P1_consensus},\;\; \eqref{eq:P1_xbnd}\text{--}\eqref{eq:P1_pbnd},
    \nonumber\\
    & g_{w,m}(\mathbf{p}_m) \geq \ln\delta_w, \notag \\
    &\hspace{2cm} \forall\, m \in \mathcal{M},\; \forall\, w \in \mathcal{W}_m.
    \label{eq:covert_hard}
\end{align}
\end{subequations}
Problem~1 is recovered by setting $G_w \equiv 0$ and enforcing~\eqref{eq:covert_hard} using a Lagrange multiplier $\nu_{w,m} \ge 0$. Problem~2 is recovered by retaining $G_w(g_{w,m}(\mathbf{p}_m))$ and setting $\nu_{w,m} \equiv 0$. A single algorithm designed for~\eqref{prob:unified} therefore solves both Problems~1 and~2 through appropriate choice of $G_w(.)$ and $\nu_{w,m}$.
}
\section{Distributed Cross-Layer Algorithm Design}
\label{sec:algorithm}

We develop a Parallel Proximal ADMM  algorithm to solve these two problems.

\subsection{Consensus Constraints for Distributed Scheduling}
To enable distributed scheduling, we introduce local copies
of the scheduling variables so that each node updates its own
scheduling variables locally and exchanges information only
with neighboring nodes. For each link $l=(i,j)\in\mathcal{L}$
and modality $m\in\mathcal{M}_l$, each endpoint
$n\in\{i,j\}$ maintains a local copy $y_{ml}^{(n)}$ of the link-level
scheduling variable $y_{ml}$. The two endpoint copies satisfy
the consensus constraint
\begin{equation}
y_{ml}^{(i)}=y_{ml}^{(j)},
\quad
\forall\,l=(i,j)\in\mathcal L,\quad m\in\mathcal M_l.
\label{eq:consensus}\tag{16k}
\end{equation}

The link-capacity constraint \eqref{eq:P1_cap} and the
node-exclusive constraint \eqref{eq:P1_sched} can be
equivalently expressed as \eqref{eq:consensus} and the
following two local constraints on each node's own copies:
\begin{align}
\!\!\!\!&0\leq q_{ml}
\!\leq \!C_{ml}\bigl(p_{ml},y_{ml}^{(i)}\bigr),
\ \forall l=(i,j)\in\mathcal{L},
m\in\mathcal{M}_l,
\label{eq:local_cap}\tag{16l}\\
& y_{ml}^{(n)}\geq0,\quad
\forall n\in\mathcal N,\ m\in\mathcal M,\ 
l\in(\mathcal I(n)\cup\mathcal O(n))\cap\mathcal L_m,
\nonumber\\[-1mm]
& \sum_{l\in(\mathcal I(n)\cup\mathcal O(n))\cap\mathcal L_m}
y_{ml}^{(n)}\leq1,\quad
\forall n\in\mathcal N,\ m\in\mathcal M,
\label{eq:local_exclusive}\tag{16m}
\end{align}
Here, \eqref{eq:local_cap} uses the transmitter $i$'s local
copy $y_{ml}^{(i)}$. Therefore, Problems~\eqref{prob:P1} and~\eqref{prob:unified}
refer to these equivalent endpoint-local forms.

\subsection{Augmented Lagrangian Formulation}
Define the dual variables $\lambda_n^d\in\mathbb R$ for
\eqref{eq:P1_flow}, $\mu_{ml}\in\mathbb R$ for
\eqref{eq:P1_rate}, $\chi_{w,m}\in\mathbb R$ for
\eqref{eq:p1-snragg}, and $\omega_{ml}\in\mathbb R$ for  \eqref{eq:consensus}.
Since each of these four constraints is a linear equality, all
multipliers are unconstrained.
Let $\boldsymbol\lambda$, $\boldsymbol\mu$,
$\boldsymbol\chi$, and $\boldsymbol\omega$ denote the vectors
that stack the dual variables $\lambda_n^d$, $\mu_{ml}$,
$\chi_{w,m}$, and $\omega_{ml}$ over their respective index
sets. Collect the primal variables $(x_f)_{f\in\mathcal F}\!$,
$(r_{ml}^{d})_{l\in\mathcal L,\,m\in\mathcal M_l,\,d\in\mathcal N}$,
$(p_{ml})_{l\in\mathcal L,\,m\in\mathcal M_l}$,
$(q_{ml})_{l\in\mathcal L,\,m\in\mathcal M_l}$,
$(y_{ml}^{(n)})_{l=(i,j)\in\mathcal L,\,m\in\mathcal M_l,\,
n\in\{i,j\}}$, and
$(s_{w,m})_{m\in\mathcal M,\,w\in\mathcal W_m}$ into the
block vectors $\mathbf x,\mathbf r,\mathbf p,\mathbf q,
\mathbf y^{(n)},$ and $\mathbf s$, respectively.


The augmented Lagrangian is given by
\begin{align}
&\mathcal L_\rho\!\big(
\mathbf x,\mathbf r,\mathbf p,\mathbf q,
\mathbf y^{(n)},\mathbf s,
\boldsymbol\lambda,\boldsymbol\mu,
\boldsymbol\chi,\boldsymbol\omega
\big) \nonumber\\
&= \sum_{f\in\mathcal{F}} U_f(x_f)
+ \sum_{m\in\mathcal{M}}\sum_{w\in\mathcal{W}_m}
G_w\big(g_{L_{w,m}}(s_{w,m})\big) \nonumber\\
& - \frac{\rho}{2}
\sum_{\substack{n,d\in\mathcal{N}\\ n\ne d}}
\Bigg[
\sum_{l\in\mathcal{O}(n)}\sum_{m\in\mathcal{M}_l} r_{ml}^d
- \sum_{l\in\mathcal{I}(n)}\sum_{m\in\mathcal{M}_l} r_{ml}^d
\nonumber\\
&\hspace{9.2em}
- \sum_{f\in\mathcal{F}}
\mathbf{1}_{\{s_f=n,\,d_f=d\}}x_f
+ \frac{\lambda_n^d}{\rho}
\Bigg]^2
\nonumber\\
& - \frac{\rho}{2}
\sum_{l\in\mathcal{L}}\sum_{m\in\mathcal{M}_l}
\Bigg[
\sum_{d\in\mathcal{N}} r_{ml}^d
- q_{ml}
+ \frac{\mu_{ml}}{\rho}
\Bigg]^2 \nonumber\\
& - \frac{\rho}{2}
\sum_{m\in\mathcal{M}}\sum_{w\in\mathcal{W}_m}
\Bigg[
s_{w,m}
- \sum_{l\in\mathcal{L}_m} A_{w,ml}p_{ml}
+ \frac{\chi_{w,m}}{\rho}
\Bigg]^2 \nonumber\\
&-\frac{\rho}{2}
\sum_{l=(i,j)\in\mathcal L}
\sum_{m\in\mathcal M_l}
\left[
y_{ml}^{(i)}-y_{ml}^{(j)}
+\frac{\omega_{ml}}{\rho}
\right]^2,
\label{eq:auglag}
\end{align}
where $\rho>0$ is the penalty parameter. For Problem~\eqref{prob:P1}, the term
$\sum_{m\in\mathcal{M}}\sum_{w\in\mathcal{W}_m}
G_w(g_{L_{w,m}}(s_{w,m}))$
is omitted. Hence, \eqref{eq:auglag} penalizes only the four linear equalities,
constraints~\eqref{eq:local_cap} and \eqref{eq:p1-covert} remain hard constraints.

\ignore{
We first formulate the augmented Lagrangian of~\eqref{prob:unified} used by parallel proximal ADMM. ADMM requires each primal variable to be updated using only local information. Two features of~\eqref{prob:unified} prevent this: the scheduling variable $y_{ml}$ is shared between the two endpoints of link $l$, so neither endpoint can update it locally, and the constraints~\eqref{eq:P1_cap} and~\eqref{eq:P1_covert} are inequalities, while ADMM is formulated for equality constraints. We resolve these through \emph{consensus splitting} and \emph{slack elimination}, respectively.
\subsubsection{Consensus splitting}
For each link $l = (i,j) \in \mathcal{L}$, the transmitter $i$ and receiver $j$ each maintain a local copy $y^{(n)}_{ml}$, $n \in \{i,j\}$, satisfying
\begin{equation}
    y^{(n)}_{ml} = y_{ml},
    \qquad \forall\, m \in \mathcal{M},\; l \in \mathcal{L},\; n \in \{i,j\}.
    \label{eq:P1_consensus}
\end{equation}
Let $\mathbf{y}^{(n)}_m \triangleq (y^{(n)}_{ml})_{l \in (\mathcal{I}(n) \cup \mathcal{O}(n)) \cap \mathcal{L}_m}$ define node $n$'s local copies on modality $m$. Constraint~\eqref{eq:P1_sched} reduces to $\mathbf{y}^{(n)}_m \in \mathcal{S}_n$, where $\mathcal{S}_n \triangleq \{ \mathbf{y}^{(n)}_m \geq 0 \,:\, \sum_{l \in (\mathcal{I}(n) \cup \mathcal{O}(n)) \cap \mathcal{L}_m} y^{(n)}_{ml} \leq 1 \}$ is the capped simplex. Projection onto $\mathcal{S}_n$ has a closed form, allowing each node to update its local copies independently.

\subsubsection{Slack elimination} The inequality constraints~\eqref{eq:P1_cap} and~\eqref{eq:P1_covert} are converted into equalities by introducing a non-negative slack into each constraint and minimizing it out analytically. This yields a squared-positive-part penalty $\tfrac{\rho}{2}[\,\cdot\,]_+^2$ with penalty parameter $\rho > 0$, where $[.]_+ \triangleq \max(., 0)$.

\subsubsection{Augmented Lagrangian} Define dual variables $\lambda^d_n$ for~\eqref{eq:P1_flow}, $\mu_l \geq 0$ for~\eqref{eq:P1_cap}, $\omega^{(n)}_{ml}$ for~\eqref{eq:P1_consensus}, and $\nu_{w,m} \geq 0$ for~\eqref{eq:P1_covert}. The augmented Lagrangian is
\begin{align}
    \mathcal{L}_\rho\bigl(\mathbf{x}, \mathbf{r}, \mathbf{y}, & \mathbf{y}^{(n)}, \mathbf{p}, \boldsymbol{\lambda}, \boldsymbol{\omega}, \boldsymbol{\mu}, \boldsymbol{\nu}\bigr) \notag \\
    = & \sum_{f} U_f(x_f) + \sum_{m,w} G_w\!\bigl(g_{w,m}(\mathbf{p}_m)\bigr) \notag \\
    &- \frac{\rho}{2} \sum_{n,d} \bigg\| \sum_{l \in \mathcal{O}(n)} r^d_l - \sum_{l \in \mathcal{I}(n)} r^d_l \notag\\
    &- \sum_{f} \mathbf{1}_{\{s_f=n,\,d_f=d\}} x_f + \frac{\lambda^d_n}{\rho} \bigg\|^2 \notag \\
    &- \frac{\rho}{2} \sum_{m,n,l} \bigg\| y^{(n)}_{ml} - y_{ml} + \frac{\omega^{(n)}_{ml}}{\rho} \bigg\|^2 \notag \\
    &- \frac{\rho}{2} \sum_{l} \bigg[ \sum_{d \in \mathcal{N}} r^d_l - \sum_{m \in \mathcal{M}_l} C_{ml}(p_{ml}, y_{ml}) + \frac{\mu_l}{\rho} \bigg]_+^{\!2} \notag \\
    &- \frac{\rho}{2} \sum_{m,w} \bigg[ \ln\eta_{w,m} - g_{w,m}(\mathbf{p}_m) + \frac{\nu_{w,m}}{\rho} \bigg]_+^{\!2}. \label{eq:auglag}
\end{align}
}

\subsection{Parallel Proximal ADMM Algorithm}\label{sec:algo_design}
We now present the PP-ADMM algorithm, which is a distributed
cross-layer optimization algorithm for solving both problems.
At iteration $k+1$, the local primal variables associated with
(i) transport-layer congestion control,
(ii) network-layer routing,
(iii) link-layer power, rate, and scheduling control,
and (iv) aggregate SNR are updated in parallel by solving the corresponding
block subproblems induced by the augmented Lagrangian
$\mathcal{L}_{\rho}$ in \eqref{eq:auglag}, using iterate-$k$
values of the other primal variables. The dual variables are then updated by gradient ascent. To
stabilize the parallel local updates and ensure strong
concavity of each local primal subproblem, we include proximal
regularization terms
$\frac{\alpha}{2}(x_f-x_f^k)^2$,
$\frac{\alpha}{2}(r_{ml}^d-r_{ml}^{d,k})^2$,
$\frac{\alpha}{2}(p_{ml}-p_{ml}^k)^2$,
$\frac{\alpha}{2}(q_{ml}-q_{ml}^k)^2$,
$\frac{\alpha}{2}(y_{ml}^{(n)}-y_{ml}^{(n),k})^2$, and
$\frac{\alpha}{2}(s_{w,m}-s_{w,m}^k)^2$, where
$\alpha>0$ is the proximal weight.

In the proposed \textbf{Parallel Proximal ADMM
(PP-ADMM) Algorithm}, the updates at iteration $k+1$ are given below.



\subsubsection{Congestion Control}
The source node of flow $f$ solves
\vspace{-3mm}
\begin{align}
&x_f^{k+1} = \arg\max_{x_f^{\min} \leq x \leq x_f^{\max}} \bigg\{
U_f(x)
- \frac{\alpha}{2}\left(x - x_f^k\right)^2 \nonumber \\
&- \frac{\rho}{2}\bigg(
\sum_{l \in \mathcal{O}(s_f)}\sum_{m \in \mathcal{M}_l}\!\! r_{ml}^{d_f,k}\!
-\!\!\!\! \sum_{l \in \mathcal{I}(s_f)}\sum_{m \in \mathcal{M}_l} \!\!\!r_{ml}^{d_f,k}\!
- \! x \! + \! \frac{\lambda_{s_f}^{d_f,k}}{\rho}
\bigg)^2
\bigg\}.
\label{eq:step_x}
\end{align}
\ignore{
\begin{align}
   \!\!\! &x^{k+1}_f =  \amax_{x_f^{\min} \leq x \leq x_f^{\max}}
    \bigg[ U_f(x)
    - \lambda^{d_f,k}_{s_f}\, x - \frac{\alpha_f}{2}(x - x^k_f)^2  \notag \\
     & ~~~~~- \frac{\rho}{2}
      \Bigl(\sum_{l \in \mathcal{O}(s_f)} \!\!r^{d_f,k}_l- \!\!\sum_{l \in \mathcal{I}(s_f)} \!\!r^{d_f,k}_l\!
      - x\Bigr)^{\!2}\! \bigg].\!\!\!
    \label{eq:step_x}
\end{align}
}
\vspace{-6mm}
\subsubsection{Routing}
For each link
$l = (i,j) \in \mathcal{L}$, modality $m \!\in\! \mathcal{M}_l$, and destination
$d \in \mathcal{N}$, the routing update is 
given by \vspace{-3mm}\begin{align}
&r_{ml}^{d,k+1} \nonumber\\
&\!\!=\! \arg\min_{r_{ml}^{d} \ge 0}\! \bigg\{ \!\frac{\rho}{2}\mathbf{1}_{\{i\neq d\}}\bigg(\!r_{ml}^{d} + z_i^{d,k} - r_{ml}^{d,k}\!\bigg)^{2}
 \!\!+ \!\frac{\rho}{2}\mathbf{1}_{\{j\neq d\}}\bigg(\!r_{ml}^{d} \nonumber\\
&\quad- z_j^{d,k} - r_{ml}^{d,k}\!\bigg)^{2} \! + \frac{\rho}{2}\Big(r_{ml}^{d} + \!\!\sum_{d'\in\mathcal{N}\setminus\{d\}} r_{ml}^{d',k} - q_{ml}^{k} + \frac{\mu_{ml}^{k}}{\rho}\Big)^{2} \nonumber\\
&\quad + \frac{\alpha}{2}\Big(r_{ml}^{d} - r_{ml}^{d,k}\Big)^{2} \bigg\}, \label{eq:routing}
\end{align}
where $z_d^{d,k}\triangleq0$, and, for $n\neq d$, 
\begin{align}
z_n^{d,k} \triangleq{}& \sum_{l'\in\mathcal{O}(n)}\sum_{m'\in\mathcal{M}_{l'}} r_{m'l'}^{d,k}
 - \sum_{l'\in\mathcal{I}(n)}\sum_{m'\in\mathcal{M}_{l'}} r_{m'l'}^{d,k} \nonumber\\
&- \sum_{f\in\mathcal{F}} \mathbf{1}_{\{s_f=n,\, d_f=d\}}\, x_f^{k} + \frac{\lambda_n^{d,k}}{\rho}.\label{eq:fb-source-residual}
\end{align} 
\ignore{

in closed-form (see Appendix \ref{} for its derivation)
\begin{equation}
    r^{d,k+1}_l = \bigg[ r^{d,k}_l
    + \frac{\lambda^{d,k}_i - \lambda^{d,k}_j - \pi^k_l}
    {\rho\beta_l} \bigg]_+,
    \tag{A2}\label{eq:step_r}
\end{equation}
where 
\begin{equation}
    \pi^k_l \triangleq \bigg[\mu^k_l + \rho\bigg(
    \sum_{d \in \mathcal{N}} r^{d,k}_l
    - \sum_{m \in \mathcal{M}_l}
    C_{ml}(p^k_{ml}, y^{(i),k}_{ml})
    \bigg)\bigg]_+.
    \label{eq:projected_price}
\end{equation}
}
\ignore{
Following the inexact Uzawa approach~\cite{wang2017fast}, we replace these couplings by a linear price $\pi^k_l$ at iterate $k$ plus a diagonal proximal term with weight $\beta_l$, where
\begin{equation}
    \pi^k_l \triangleq \bigg[\mu^k_l + \rho\bigg(
      \sum_{d \in \mathcal{N}} r^{d,k}_l
      - \sum_{m \in \mathcal{M}_l} C_{ml}(p^k_{ml}, y^k_{ml})
    \bigg)\bigg]_+.
    \label{eq:projected_price}
\end{equation}
The price $\pi^k_l$ is identical across destinations on link $l$. For each link $l = (i,j) \in \mathcal{L}$ and destination $d \in \mathcal{N}$, the closed-form update is
\begin{equation}
    r^{d,k+1}_l =
    \bigg[ r^{d,k}_l
    + \frac{\lambda^{d,k}_i - \lambda^{d,k}_j - \pi^k_l}{\rho\beta_l}
    \bigg]_+,
    \tag{A2}\label{eq:step_r}
\end{equation}
where $\lambda^{d,k}_i - \lambda^{d,k}_j$ acts as a generalized backpressure~\cite{neely2010stochastic}.
}
\vspace{-5mm}
\subsubsection{Power Control, Aggregate Rate, and Scheduling}
The per-link, per-modality capacity constraint
\eqref{eq:local_cap} couples $p_{ml}$, $q_{ml}$, and the
transmitter's local scheduling copy $y_{ml}^{(i)}$, so these
variables are updated jointly at each node. For each node
$n\in\mathcal{N}$ and modality $m\in\mathcal{M}$, node $n$
jointly updates the transmit powers and aggregate rates of its
outgoing links and its local scheduling variables
$(y_{ml}^{(n)})_{l\in(\mathcal{I}(n)\cup\mathcal{O}(n))
\cap\mathcal{L}_m}$
by solving
\begin{align}
&\Big(
\big(y_{ml}^{(n),k+1}\big)_{
l\in(\mathcal I(n)\cup\mathcal O(n))\cap\mathcal L_m},\
\big(p_{ml}^{k+1},q_{ml}^{k+1}\big)_{
l\in\mathcal O(n)\cap\mathcal L_m}
\Big)\nonumber\\
&=\arg\min_{\substack{
(y_{ml}^{(n)})_{
l\in(\mathcal I(n)\cup\mathcal O(n))\cap\mathcal L_m},\\
(p_{ml},q_{ml})_{l\in\mathcal O(n)\cap\mathcal L_m}}}
\Bigg\{
\frac{\rho}{2}
\sum_{l\in\mathcal{O}(n)\cap\mathcal{L}_m}
\Bigg[
\sum_{d\in\mathcal{N}}r_{ml}^{d,k}
\nonumber\\&\quad-q_{ml}+\frac{\mu_{ml}^{k}}{\rho}
\Bigg]^2+
\frac{\rho}{2}
\sum_{w\in\mathcal{W}_m}
\Bigg[
s_{w,m}^{k}
-\sum_{l\in\mathcal{O}(n)\cap\mathcal{L}_m}
A_{w,ml}p_{ml}
\nonumber\\
&\quad
-\sum_{l'\in\mathcal{L}_m\setminus\mathcal{O}(n)}
A_{w,ml'}p_{ml'}^{k}
+\frac{\chi_{w,m}^{k}}{\rho}
\Bigg]^2
\nonumber\\
&\quad+
\frac{\rho}{2}
\sum_{l=(n,j)\in\mathcal O(n)\cap\mathcal L_m}
\left[
y_{ml}^{(n)}-y_{ml}^{(j),k}
+\frac{\omega_{ml}^{k}}{\rho}
\right]^2
\nonumber\\[-1mm]
&\quad+
\frac{\rho}{2}
\sum_{l=(i,n)\in\mathcal I(n)\cap\mathcal L_m}
\left[
y_{ml}^{(n)}-y_{ml}^{(i),k}
-\frac{\omega_{ml}^{k}}{\rho}
\right]^2
\nonumber\\
&\quad+
\frac{\alpha}{2}
\sum_{l\in\mathcal{O}(n)\cap\mathcal{L}_m}
\left[
\big(p_{ml}-p_{ml}^{k}\big)^2
+\big(q_{ml}-q_{ml}^{k}\big)^2
\right]
\nonumber\\
&\quad+
\frac{\alpha}{2}
\sum_{l\in(\mathcal{I}(n)\cup\mathcal{O}(n))
\cap\mathcal{L}_m}
\big(y_{ml}^{(n)}-y_{ml}^{(n),k}\big)^2
\Bigg\}
\nonumber\\
&\text{s.t.}\quad
y_{ml}^{(n)}\geq0,\quad
\forall l\in(\mathcal I(n)\cup\mathcal O(n))\cap\mathcal L_m,
\nonumber\\
&\hspace{2.8em}
\sum_{l\in(\mathcal I(n)\cup\mathcal O(n))\cap\mathcal L_m}
y_{ml}^{(n)}\leq1,
\nonumber\\
&\hspace{2.8em}
0\leq p_{ml}\leq p_{ml}^{\max},\quad
0\leq q_{ml}\leq
C_{ml}\big(p_{ml},y_{ml}^{(n)}\big),
\nonumber\\
&\hspace{2.8em}
\forall l\in\mathcal O(n)\cap\mathcal L_m.
\label{eq:nodeupdate}
\end{align}

\subsubsection{Aggregate-SNR Update}
For each modality $m\in\mathcal{M}$ and each Willie $w\in\mathcal{W}_m$,
the auxiliary variable is updated in parallel by solving the scalar
subproblem
\vspace{-2mm}
\begin{align}
s_{w,m}^{k+1}&= \arg\max_{\substack{s\geq0,\\g_{L_{w,m}}(s)\geq\ln\eta_{w,m}}}\nonumber
\bigg\{\ G_w\big(g_{L_{w,m}}(s)\big)
\\&- \frac{\rho}{2}\left[s - \sum_{l\in\mathcal{L}_m} A_{w,ml}\, p_{ml}^{k} + \frac{\chi_{w,m}^{k}}{\rho}\right]^{2} \nonumber\\
&- \frac{\alpha}{2}\big(s - s_{w,m}^{k}\big)^{2} \bigg\}. \label{eq:s_update}
\end{align}
For Problem \eqref{prob:P1}, the covertness-utility
term $G_w(g_{L_{w,m}}(s))$ is omitted.

\ignore{For  Problem~\eqref{prob:unified}, we add a utility term 
$\!\!\!- \sum_{w} G_w\!( g^*\!( A_{w,ml}p_{ml} + \sum_{l'\in\mathcal L_m\setminus\{l\}} A_{w,ml'}p_{ml'}^k, L_{w,m} ))$ to the right hand side of \eqref{eq:step_p}.}

\ignore{
Let $A_{w,ml} \triangleq |h_{w,ml}|^2 / (\Omega_m N_{0,w,m})$,
$\bar{\Delta}_{w,m,l}^k \triangleq \sum_{l' \in \mathcal{L}_m \setminus \{l\}}
A_{w,ml'} p_{ml'}^k$, and
\begin{align}
    s_l^k \triangleq \sum_{d \in \mathcal{N}} r_l^{d,k}
    - \sum_{\substack{m' \in \mathcal{M}_l \\ m' \neq m}}
    C_{m'l}(p_{m'l}^k, y_{ml}^k) + \frac{\mu_l^k}{\rho}.
    \label{eq:slack}
\end{align}
For each $l \in \mathcal{L}_m$, the per-link power update is
\begin{align}
    p_{ml}^{k+1} = \argmax_{0 \leq p_{ml} \leq p_{ml}^{\max}}
    \bigg\{
    &- \frac{\rho}{2}\Big[s_l^k
    - C_{ml}(p_{ml}, y_{ml}^k)\Big]_+^2 \notag \\
    &- \frac{\rho}{2} \sum_{w \in \mathcal{W}_m}
    \Big[\ln\delta_w - g^*\!\big(A_{w,ml} p_{ml}
    + \bar{\Delta}_{w,m,l}^k,\, L_{w,m}\big)
    + \frac{\nu_{w,m}^k}{\rho}\Big]_+^2 \notag \\
    &- \frac{\gamma_m}{2}(p_{ml} - p_{ml}^k)^2
    \bigg\}, \tag{A3}\label{eq:step_p}
\end{align}
which is a one-dimensional strongly concave maximization in
$p_{ml}$. All links $l \in \mathcal{L}_m$ are therefore
updated independently and in parallel. For Problem~2, the
covertness penalty is replaced by $G_w(g^*(\cdot))$.

For each modality $m$, the power vector $\mathbf{p}_m$ is updated jointly across all links in $\mathcal{L}_m$, since  $g_{w,m}$ couples the link powers $p_{ml}$ through the aggregate SNR observed by each Willie. With $\mathbf{p}^k_{m'}$ for $m' \neq m$ and $\mathbf{y}^k$ fixed at iterate $k$, the update is

\begin{align}
    \mathbf{p}^{k+1}_m = & \amax_{0 \leq \mathbf{p}_m \leq \mathbf{p}^{\max}_m} \bigg[ 
    - \frac{\rho}{2} \sum_{l \in \mathcal{L}_m} \bigg[ \sum_{d \in \mathcal{N}} r^{d,k}_l - C_{ml}(p_{ml}, y^k_{ml}) \notag \\
    &- \sum_{\substack{m' \in \mathcal{M}_l \\ m' \neq m}} C_{m'l}(p^k_{m'l}, y^k_{m'l}) + \frac{\mu^k_l}{\rho} \bigg]_+^{\!2} \!\!- \frac{\rho}{2} \sum_{w \in \mathcal{W}_m} \bigg[ \ln\delta_w \notag \\
    & - g_{w,m}(\mathbf{p}_m) + \frac{\nu^k_{w,m}}{\rho} \bigg]_+^{\!2} - \frac{\gamma_m}{2} \|\mathbf{p}_m - \mathbf{p}^k_m\|^2 \bigg], \tag{A3} \label{eq:step_p}
\end{align}
which is strongly concave in $\mathbf{p}_m$ and admits a unique solution. For Problem~2, the term $-\frac{\rho}{2} \sum_{w \in \mathcal{W}_m} [\,\cdot\,]_+^2$ in~\eqref{eq:step_p} is replaced by the utility $G_w(\cdot)$. 
}

\subsubsection{Feasible Transmission Schedule}
For each link--modality pair $(m,l)$ with
$l=(i,j)\in\mathcal{L}_m$, the two endpoint nodes update their
local scheduling copies in \eqref{eq:nodeupdate}. During the iterations, a common node-exclusive schedule can
be formed as



\begin{align}
\!\!\!\!\!\bar{y}_{ml}^{k+1}
&\!=\!
\min\!\left\{
y_{ml}^{(i),k+1},
y_{ml}^{(j),k+1}
\right\},\!
\forall l=\!(i,j)\!\in\!\mathcal{L},
m\!\in\!\mathcal{M}_l.
\label{eq:feasible_schedule}
\end{align}
Since $\bar{y}_{ml}^{k+1}\leq y_{ml}^{(n),k+1}$ for
$n\in\{i,j\}$, the schedule in \eqref{eq:feasible_schedule}
satisfies the node-exclusive constraints. The two local copies agree at convergence.

\subsubsection{Dual Updates}
The dual variables are updated by gradient ascent on
(\ref{eq:auglag})
\begin{align}
&\lambda_n^{d,k+1} \!=\! \lambda_n^{d,k} + \tau\rho\Big(\sum_{l\in\mathcal{O}(n)}\sum_{m\in\mathcal{M}_l} \!r_{ml}^{d,k+1} \!-\! \sum_{l\in\mathcal{I}(n)}\sum_{m\in\mathcal{M}_l} \!r_{ml}^{d,k+1} \nonumber\\
& \qquad- \sum_{f\in\mathcal{F}}\mathbf{1}_{\{s_f=n,\, d_f=d\}}\, x_f^{k+1}\Big),\ \ \!\!\!\!\forall n,d\in\mathcal{N},\ n\ne d, \label{eq:lamupd}\\
&\mu_{ml}^{k+1} = \mu_{ml}^{k} + \tau\rho\Big(\sum_{d\in\mathcal{N}} r_{ml}^{d,k+1} - q_{ml}^{k+1}\Big),\nonumber \\ &\hspace{11.5em}\forall l\in\mathcal{L},\ \forall m\in\mathcal{M}_l, \label{eq:muupd}\\
&\chi_{w,m}^{k+1} = \chi_{w,m}^{k} + \tau\rho\Big(s_{w,m}^{k+1} - \sum_{l\in\mathcal{L}_m} A_{w,ml}\, p_{ml}^{k+1}\Big),\ \nonumber\\&\hspace{11.5em} \forall m\in\mathcal{M},\ w\in\mathcal{W}_m, \label{eq:chiupd}\\
&\omega_{ml}^{k+1} = \omega_{ml}^{k} + \tau\rho\Big(y_{ml}^{(i),k+1} - y_{ml}^{(j),k+1}\Big),\nonumber \\&\hspace{11.5em} \!\forall l=(i,j)\in\!\mathcal{L},\! \forall m\in\mathcal{M}_l, \label{eq:omegaupd}
\end{align}
where $\tau > 0$ is the dual step size. 

\ignore{
\subsubsection{Queue Length Update} The queue length is updated by 
\begin{align}
Q_{n}^{d,k+1} = & \bigg[Q_{n}^{d,k} + \bigg(\sum_{l \in \mathcal{O}(n)} \sum_{m \in \mathcal{M}_l}\!\! r^{d,k+1}_{ml} - \sum_{l \in \mathcal{I}(n)}\sum_{m \in \mathcal{M}_l}\!\! r^{d,k+1}_{ml} \nonumber \\
    & \quad\ \quad\quad\quad - \sum_{f \in \mathcal{F}} \mathbf{1}_{\{s_f=n,\,d_f=d\}} x^{k+1}_f\bigg)\bigg]_+.
\end{align}
}

\ignore{
\section{Algorithm Design}
\label{sec:algorithm}
In this section, we develop a fully distributed algorithm that jointly solves the cross-layer covert network optimization formulated in Section~\ref{sec:formulation}.

\subsection{Unified Problem}
\label{sec:unification}
We begin by unifying Problems~1 and~2 into a single framework. Problems~1 and~2 differ in only two components: (i) the covertness term that Problem~1 enforces as a hard constraint, and (ii) Problem~2 admits as an objective reward. We unify these two problems in a single algorithm with unified formulation:
\begin{subequations}\label{prob:unified}
\begin{align}
\max_{\{x_f, r^d_l, y_{ml}, y^{(n)}_{ml}, p_{ml}\}}\;
& \sum_{f \in \mathcal{F}} U_f(x_f)\notag\\
  + &\sum_{m \in \mathcal{M}}\sum_{w \in \mathcal{W}_m}\!  G_w\!\Big(g_{w,m}\big(\textstyle\sum_{l} A_{w,ml}\, p_{ml}\big)\Big)
  \label{eq:unified} \\
\text{s.t.}\;
& \eqref{eq:P1_flow}\text{--}\eqref{eq:P1_consensus},\;\;
  \eqref{eq:P1_xbnd}\text{--}\eqref{eq:P1_pbnd}, \label{eq:unified_shared}\\
& g_{w,m}\!\Big(\sum_{l \in \mathcal{L}_m} A_{w,ml}\, p_{ml}\Big)
  \;\geq\; \ln \delta_w,\notag \\
  &\forall\, m \in \mathcal{M},\, \forall\, w \in \mathcal{W}_m,
  \label{eq:covert_hard}
\end{align}
\end{subequations}
where $A_{w,ml} \triangleq |h_{w,ml}|^{2}/(\Omega_m N_{0,w,m})$.
The two formulations of Problem~1 and Problem~2 are recovered by switching the two components:
\begin{itemize}
  \item \textbf{Problem~1.} The $G_w(\cdot)$ summation in the
        objective~\eqref{eq:unified} is set to zero, and the
        covertness constraint~\eqref{eq:covert_hard} is enforced via a
        Lagrange multiplier $\nu_{w,m} \geq 0$.
  \item \textbf{Problem~2.} The $G_w(\cdot)$ summation in the
        objective~\eqref{eq:unified} is retained, and the
        covertness constraint~\eqref{eq:covert_hard} is dropped
        (equivalently, $\nu_{w,m} \equiv 0$).
\end{itemize}

\subsection{Augmented Lagrangian and Dual Variables}
{\red Yin: Explain the idea of consensus ADMM here.}

\label{sec:auglag}
We then construct the augmented Lagrangian with chosen penalty structures.
Let $\rho > 0$ be the penalty parameter. We associate dual variables~$\lambda^d_n$ with~\eqref{eq:P1_flow}, $\mu_l \geq 0$ with~\eqref{eq:P1_cap}, $\nu_{w,m} \geq 0$ with the hard covertness constraint~\eqref{eq:covert_hard}, and $\omega^{(n)}_{ml}$ with~\eqref{eq:P1_consensus}.
The augmented Lagrangian is
\begin{align}
&\mathcal{L}_\rho(\mathbf{x},\mathbf{r}, \mathbf{y},\mathbf{y^{(n)}},\mathbf{p}, \boldsymbol{\lambda}, \boldsymbol{\omega}, \boldsymbol{\mu},\boldsymbol{\nu}) = \notag\\
&\sum_f U_f(x_f) + \sum_{m,w} G_w\!\big(g_{w,m}(\textstyle\sum_{l} A_{w,ml}\, p_{ml})\big) \notag \\
    & - \frac{\rho}{2} \sum_{n,d} \Big\| \!\sum_{l \in \mathcal{O}(n)} r^d_l \!\!- \!\!\sum_{l \in \mathcal{I}(n)} r^d_l \!\! - \!\!\sum_{f \in \mathcal{F}} \mathbf{1}_{\{s_f=n,\, d_f=d\}} x_f + \frac{\lambda^d_n}{\rho} \Big\|^2 \notag \\
    & - \frac{\rho}{2} \sum_{m,n,l} \Big\| y^{(n)}_{ml} - y_{ml} + \frac{\omega^{(n)}_{ml}}{\rho} \Big\|^2 \notag \\
     &-\frac{\rho}{2}\sum_l\Big[\sum_d r_l^d-\sum_m C_{ml}(p_{ml},y_{ml})+\frac{\mu_l}{\rho}\Big]_+^2 \notag\\
    &-\frac{\rho}{2}\sum_{m,w}\Big[\ln\delta_w-\widehat f_{w,m}\!\Big(\textstyle\sum_l A_{w,ml}p_{ml}\Big)+\frac{\nu_{w,m}}{\rho}\Big]_+^2,
    \label{eq:auglag}
\end{align}

For Problem~1, the $G_w(\cdot)$ term is dropped and the Lagrangian includes the $\nu_{w,m}$ covertness term. For Problem~2, $\nu_{w,m}$ is omitted and $G_w(g_{w,m}(\mathbf{p}_m))$ directly enters the objective.

\subsection{Distributed Algorithm}
\label{sec:alg_steps}
We develop a parallel algorithm to solve~\eqref{eq:unified}. The algorithm parameters are the penalty weight~$\rho > 0$, dual step size~$\tau > 0$, and proximal weights $\alpha_f, \beta_l, \gamma_m, \gamma_y > 0$. 

\medskip
\noindent\textbf{Algorithm~1:}
\begin{itemize}
\item \textbf{Congestion control:}
Because no two flows share the same source--destination pair, the flow-conservation residual at the source node reduces to a function of the scalar~$x_f$ alone. The source node of flow $f$ updates:
\begin{align}
    x^{k+1}_f = &\amax_{x_f^{\min} \leq x \leq x_f^{\max}} \Big[ U_f(x) - \lambda^{d_f, k}_{s_f}\, x \notag \\ 
    &- \frac{\rho}{2}\big(\sum_{l \in \mathcal{O}(s_f)} r^{d_f, k}_l - \sum_{l \in \mathcal{I}(s_f)} r^{d_f, k}_l - x\big)^2 \notag\\
    &- \frac{\alpha_f}{2}(x - x^k_f)^2 \Big].
    \tag{A1}\label{eq:step_x}
\end{align}
This is a scalar strictly concave maximization that admits an efficient closed-form solution. The proximal term $\frac{\alpha_f}{2}(x - x^k_f)^2$ stabilizes the update across iterations.
\item \textbf{Routing:}
The flow-conservation penalty in the augmented Lagrangian~\eqref{eq:auglag} couples~$r_l^d$ across all links incident to each node through the squared residual; the projected capacity penalty couples~$r_l^d$ across destinations on the same link. To decouple these cross-link and cross-destination interactions, we adopt the inexact Uzawa approach~\cite{wang2017fast}: we linearize the projected capacity penalty at the previous iterate and add a diagonal proximal term $\frac{\rho\beta_l}{2}(r_l^d - r_l^{d,k})^2$ that majorizes the off-diagonal entries from the flow-conservation penalty.

The gradient of the projected capacity penalty $-\frac{\rho}{2}[\sum_d r_l^d-\sum_m C_{ml}(p_{ml},y_{ml})+\mu_l/\rho]_+^2$ with respect to any $r_l^d$, evaluated at the previous iterate, is the same number for every destination~$d$ on link~$l$. Denoting that by~$\pi_l^k$,
\begin{equation}
\pi_l^k \triangleq \Big[\mu_l^k+\rho\Big(\sum_{d\in\mathcal N} r_l^{d,k}-\sum_{m\in\mathcal M} C_{ml}(p_{ml}^k,y_{ml}^k)\Big)\Big]_+,
\label{eq:projected_price}
\end{equation}
the linearized capacity term enters the routing subproblem through the linear price~$\pi_l^k$. For each link $l=(i\to j)\in \mathcal{L}$ and each destination $d\in\mathcal N$:
\begin{align}
    r^{d,k+1}_l = \amin_{r^d_l \geq 0} \bigg\{ &-\big(\lambda^{d,k}_i - \lambda^{d,k}_j - \pi_l^k\big)\, r^d_l \notag \\
    &+ \frac{\rho\beta_l}{2}(r^d_l - r^{d,k}_l)^2 \bigg\},
    \tag{A2}\label{eq:step_r}
\end{align}
which has the closed-form solution
\begin{equation}
    r^{d,k+1}_l = \bigg[ r^{d,k}_l + \frac{\lambda^{d,k}_i - \lambda^{d,k}_j - \pi_l^k}{\rho\beta_l} \bigg]_+.
    \tag{A2'}\label{eq:step_r_distributed}
\end{equation}


\item \textbf{Power control:}
The power vector~$\mathbf{p}_m$ is updated as a single block for each modality~$m$ with proximal-point optimization. Holding $\mathbf p_{m'}^k$ ($m'\neq m$) and $\mathbf y^k$ fixed, the update is
\begin{align}
    \mathbf{p}^{k+1}_m = &\amax_{0 \le \mathbf{p}_m \le \mathbf{p}^{\max}_m} \bigg[
    -\frac{\rho}{2}\sum_l\bigg[\sum_d r_l^{d,k}-C_{ml}(p_{ml},y_{ml}^k) \notag\\
    &-\sum_{m'\neq m}C_{m'l}(p_{m'l}^k,y_{m'l}^k)+\frac{\mu_l^k}{\rho}\bigg]_+^2 \notag\\
    &-\frac{\rho}{2}\sum_w\bigg[\ln\delta_w-\widehat f_{w,m}\!\Big(\textstyle\sum_l A_{w,ml}p_{ml}\Big) +\frac{\nu_{w,m}^k}{\rho}\bigg]_+^2 \notag\\
    &-\frac{\gamma_m}{2}\|\mathbf p_m-\mathbf p_m^k\|^2 \bigg].
    \tag{A3}\label{eq:step_p}
\end{align}
where $m'\in\mathcal M$ denotes a modality other than~$m$, with the corresponding terms $C_{m'l}(p_{m'l}^k, y_{m'l}^k)$ frozen at the previous iterate.  Subproblem~\eqref{eq:step_p} is jointly strongly concave in~$\mathbf p_m$ on the bounded box $[0,p_{ml}^{\max}]$ (composition of the concave non-increasing $-\tfrac{\rho}{2}[\cdot]_+^2$ with the convex inner argument, with the strongly concave proximal). It admits a unique solution per modality. For Problem~2, the covertness penalty is replaced by $G_w(\widehat f_{w,m}(\mathbf p_m))$ which directly enters the objective.
\item \textbf{Scheduling:}
Since the DEP is independent of $\{y_{ml}\}$, the scheduling layer is decoupled from the covertness constraints. For each link--modality pair $(m,l)$ with $l \in\mathcal{L}$, the global scheduling variable is updated by:
\begin{align}
    y^{k+1}_{ml} = \amax_{0 \le y \le 1} \bigg[
    &-\frac{\rho}{2}\bigg[\sum_d r_l^{d,k+1}-C_{ml}(p_{ml}^{k+1},y) \notag\\
    &\quad-\sum_{m'\neq m}C_{m'l}(p_{m'l}^{k+1},y_{m'l}^k)+\frac{\mu_l^k}{\rho}\bigg]_+^2 \notag\\
    &-\frac{\gamma_y}{2}(y - y^k_{ml})^2 \notag\\
    &-\frac{\rho}{2}\sum_{n\in\{i,j\}}\!\Big(y^{(n),k}_{ml}-y+\frac{\omega^{(n),k}_{ml}}{\rho}\Big)^{\!2} \bigg].
    \tag{A4a}\label{eq:step_y}
\end{align}
Each node~$n$ then projects its local copy onto the capped simplex:
\begin{equation}
    \mathbf{y}^{(n),k+1}_m = \amin_{\mathbf{y} \in \mathcal{S}_n} \sum_{l \in \mathcal{L}(n)} \Big( y^{(n)}_{ml} - y^{k+1}_{ml} + \frac{\omega^{(n),k}_{ml}}{\rho} \Big)^{\!2}.
    \tag{A4b}\label{eq:step_yn}
\end{equation}
\item \textbf{Dual updates:}
Update the congestion, capacity, consensus, and covertness prices:
\begin{align}
&\lambda^{d,k+1}_n = \lambda^{d,k}_n + \tau\rho\Big(\sum_{l \in \mathcal{O}(n)} r^{d,k+1}_l - \sum_{l \in \mathcal{I}(n)} r^{d,k+1}_l \notag\\
&\qquad\qquad\qquad\qquad - \sum_{f \in \mathcal{F}} \mathbf{1}_{\{s_f=n,\, d_f=d\}} x_f^{k+1}\Big),
\tag{A5a}\label{eq:dual_lambda}\\
&\mu^{k+1}_l = \Big[\mu^k_l + \tau\rho\Big(\sum_{d \in \mathcal{N}} r^{d,k+1}_l \notag\\
&\qquad\qquad\qquad- \sum_{m \in \mathcal{M}} C_{ml}(p_{ml}^{k+1}, y_{ml}^{k+1})\Big)\Big]_+,
\tag{A5b}\label{eq:dual_mu}\\
&\omega^{(n),k+1}_{ml} = \omega^{(n),k}_{ml} + \tau\rho\, \big(y^{(n),k+1}_{ml} - y^{k+1}_{ml}\big),
\tag{A5c}\label{eq:dual_omega}\\
&\nu^{k+1}_{w,m} = \Big[\nu^k_{w,m} + \tau\rho\,\Big(\ln \delta_w \notag\\
& \qquad\qquad\qquad\quad-g_{w,m}\big(\textstyle\sum_{l} A_{w,ml}\, p^{k+1}_{ml}\big)\Big)\Big]_+.
\tag{A5d}\label{eq:dual_nu}
\end{align}
where $[\cdot]_+$ denotes projection onto $\mathbb{R}_+$. The covertness dual update~\eqref{eq:dual_nu} is used for Problem~1 only. For Problem~2, $\nu_{w,m}$ is omitted, since the covertness utility $G_w(\cdot)$ directly enters the power control objective~\eqref{eq:step_p}.
\end{itemize}
}

\begin{figure*}[t]
\centering
\begin{minipage}[t]{0.23\textwidth}
    \centering
    \includegraphics[width=\linewidth]{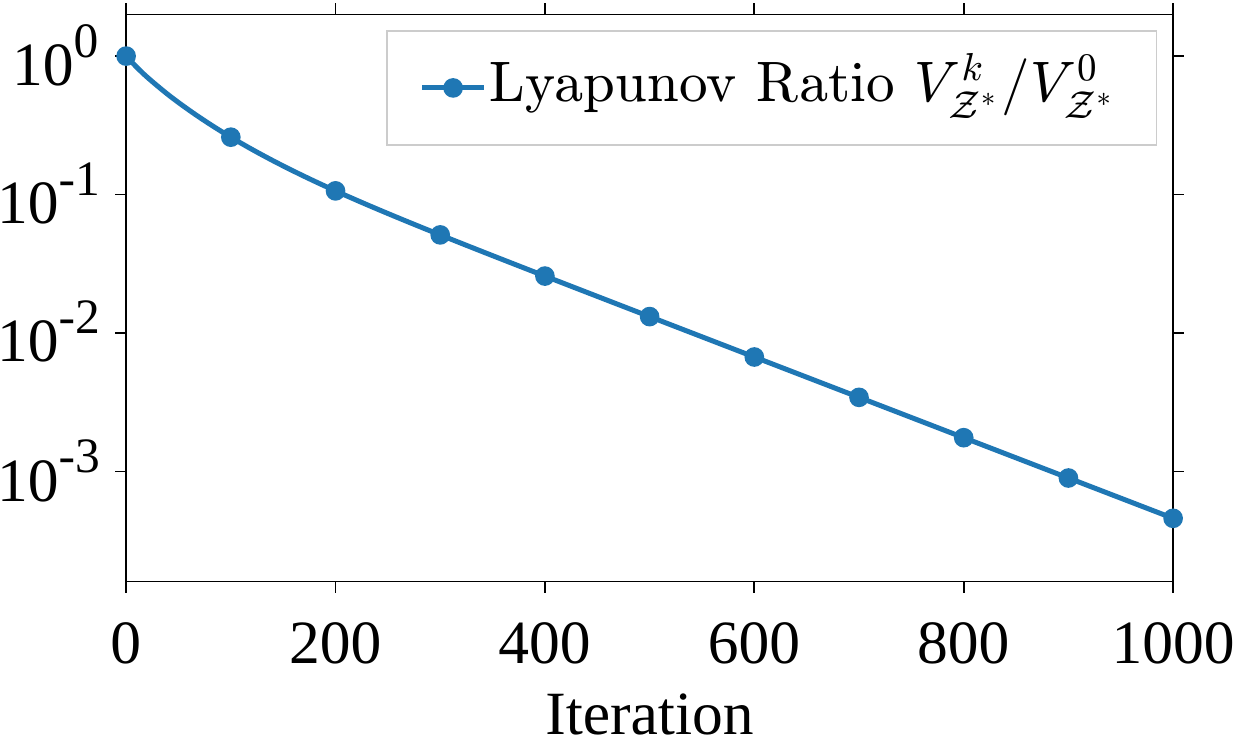}
    \captionof{figure}{Lyapunov ratio $V_{\mathcal{Z}^*}^k/V_{\mathcal{Z}^*}^0$ vs. $k$ under static channel.}
    \label{fig:lyapunov_fast}
\end{minipage}
\hfill
\begin{minipage}[t]{0.24\textwidth}
    \centering
    \includegraphics[width=\linewidth]{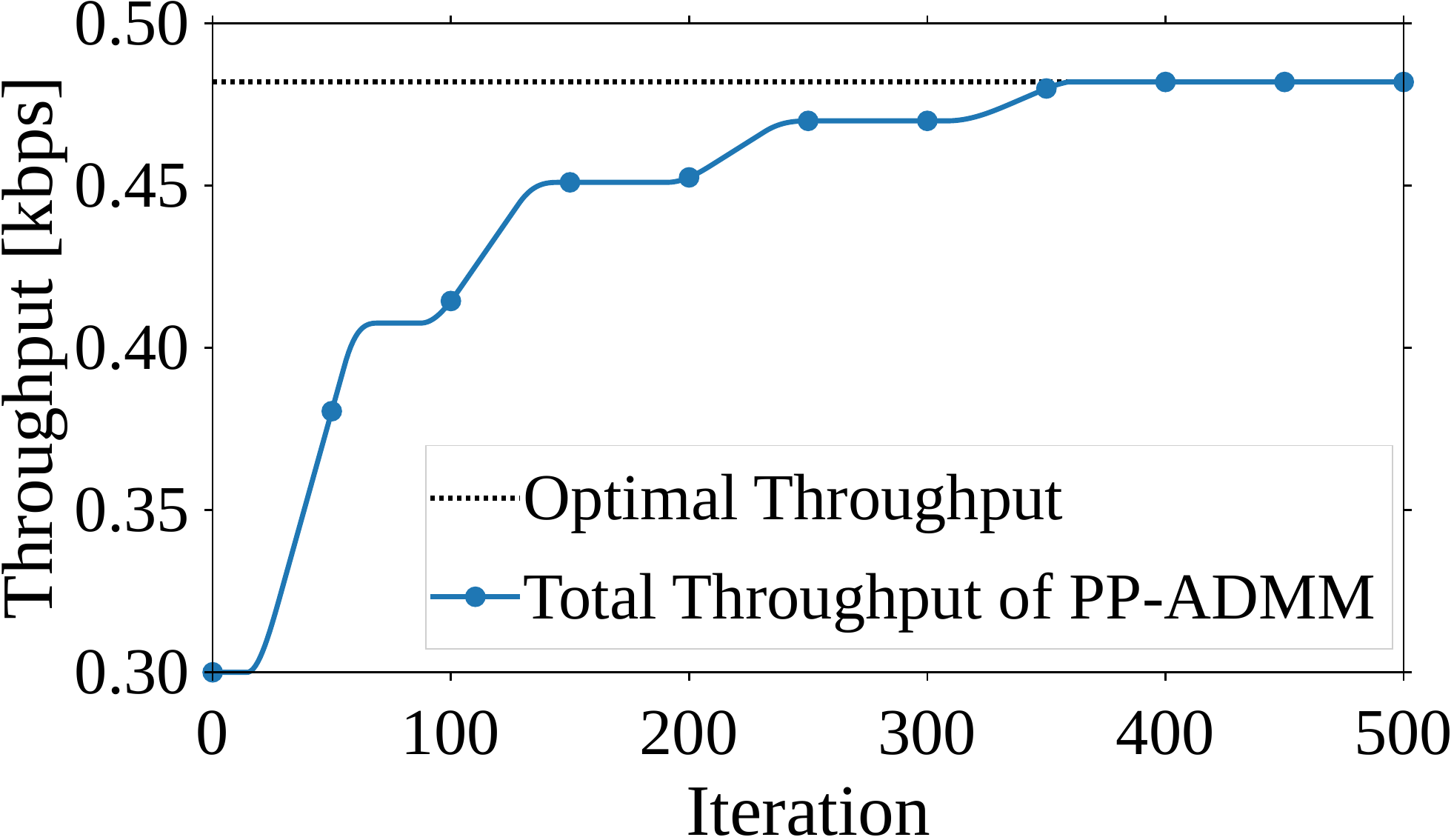}
    \captionof{figure}{Total throughput of two flows under static channel, converging to the  optimum.}
    \label{fig:utility_convergence}
\end{minipage}
\hfill
\begin{minipage}[t]{0.24\textwidth}
    \centering
    \includegraphics[width=0.98\linewidth]{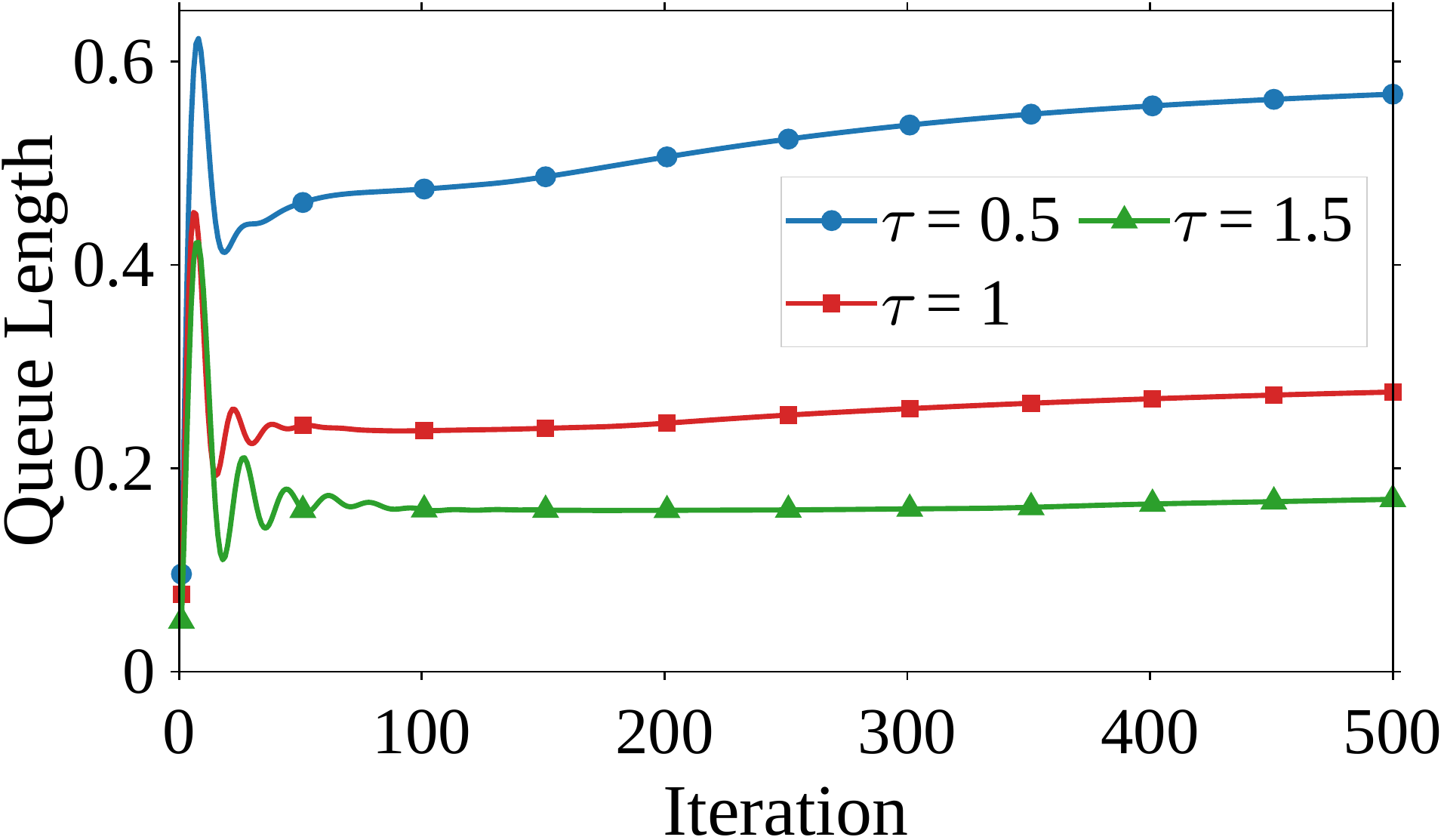}
    \captionof{figure}{Queue length under different dual step
sizes $\tau$ for a 5-node random
    network under static channels.}
    \label{fig:queue_length}
\end{minipage}
\hfill
\begin{minipage}[t]{0.24\textwidth}
    \centering
    \includegraphics[width=\linewidth]{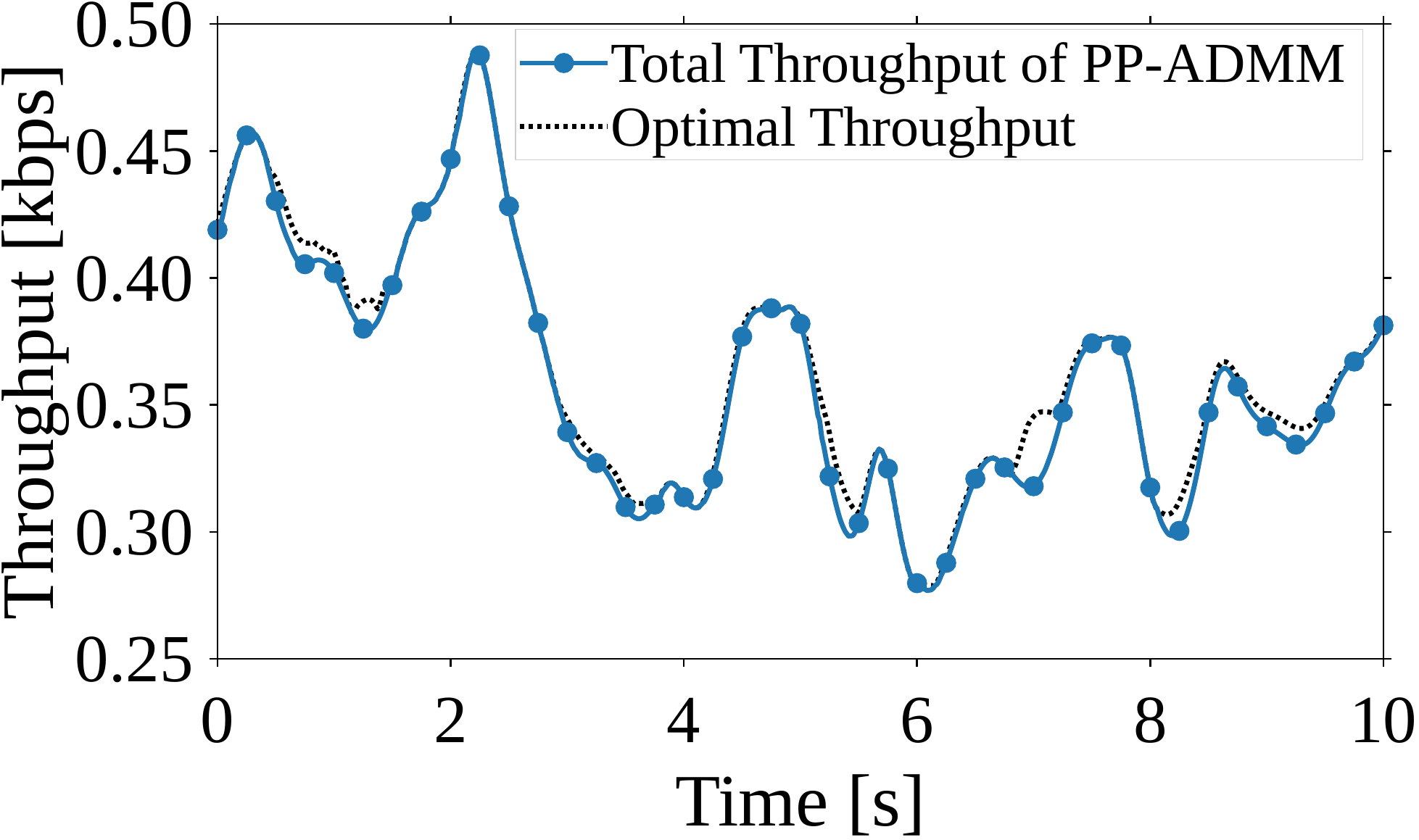}
    \captionof{figure}{Tracking of total throughput under Willie mobility and channel fading.}
    \label{fig:throughput_tracking}
\end{minipage}
\vspace{-5mm}
\end{figure*}

\section{Convergence Analysis}
\label{sec:convergence}
This section establishes global linear convergence of the PP-ADMM algorithm to the optimal solution set of Problem~\eqref{prob:unified}.
Let $\mathbf z\triangleq
(\mathbf x,\mathbf r,\mathbf p,\mathbf q,
\mathbf y^{(n)},\mathbf s,\boldsymbol\lambda,
\boldsymbol\mu,\boldsymbol\chi,\boldsymbol\omega)$
denote a primal--dual pair, and let $\mathbf z^k$ denote its value
after iteration $k$. Let $\|\cdot\|$ denote the Euclidean norm.

\begin{assumption}[Slater's Condition]
\label{assump:slater}
A strictly feasible primal solution exists for Problem~\eqref{prob:unified}.
\end{assumption}
The primal updates \eqref{eq:step_x}, \eqref{eq:routing}, \eqref{eq:nodeupdate}, and \eqref{eq:s_update}, together
with the dual updates \eqref{eq:lamupd}--\eqref{eq:omegaupd}, are equivalent to
standard parallel proximal ADMM with four primal blocks $\mathbf x$, $\mathbf r$,
$(\mathbf p,\mathbf q,\mathbf y^{(n)})$, and $\mathbf s$;
see
\ifreport
Appendix~B.
\else
\cite[App.~B]{tian2026covertfull}.
\fi
\begin{assumption}[Dual Step Size and Proximal Weight]
\label{assump:proximal}
Let $\rho>0$ and $0<\tau<2$. For some
$\delta\in(0,2-\tau)$, the proximal weight $\alpha>0$
satisfies
\begin{equation}
\alpha>
\rho\max_{1\le i\le4}
\lambda_{\max}\!\left(
\frac{4}{2-\tau-\delta}Q_i^{\mathsf T}Q_i-D_i
\right).
\label{eq:fb-alpha}
\end{equation}
Here, $Q_i$ is the coefficient matrix for block $i$ in
constraints \eqref{eq:P1_flow}, \eqref{eq:P1_rate}, \eqref{eq:p1-snragg}, and \eqref{eq:consensus}, $D_i$ contains the quadratic coefficients
from \eqref{eq:auglag} in the
corresponding primal updates
\eqref{eq:step_x}, \eqref{eq:routing}, \eqref{eq:nodeupdate}, and \eqref{eq:s_update} with the common factor $\rho/2$ removed,
and $\lambda_{\max}(M)$ denotes the largest eigenvalue
of a symmetric matrix $M$.
Explicit expressions for $Q_i$ and $D_i$ are given in
\ifreport
Appendix~\ref{app:fb-formulation}.
\else
\cite[App.~B]{tian2026covertfull}.
\fi
\end{assumption}

\ignore{

\begin{assumption}[KKT Metric Subregularity]
\label{assump:kkt-subreg}
The set $\mathcal{Z}$ is nonempty, and the complete KKT
residual map $\mathbf{R}$ defined in \eqref{eq:kkt_residual_map} in Appendix \ref{apx:linear_conv} is continuous on
$\mathcal{K}$. For every
$\mathbf{z}^{*}\in\mathcal{Z}\cap\mathcal{K}$, there exist
constants $\epsilon_{\mathbf{z}^{*}}>0$ and
$\kappa(\mathbf{z}^{*})>0$ such that
\begin{equation}
\inf_{\widehat{\mathbf{z}}\in\mathcal{Z}}
\left\|\mathbf{z}-\widehat{\mathbf{z}}\right\|
\le
\kappa(\mathbf{z}^{*})\left\|\mathbf{R}(\mathbf{z})\right\|
\end{equation}
for every $\mathbf{z}\in\mathcal{K}$ satisfying
$\|\mathbf{z}-\mathbf{z}^{*}\|
<\epsilon_{\mathbf{z}^{*}}$.
\end{assumption}
}

Let $\mathcal Z^*$ denote the set of optimal primal--dual pairs
of the problem under consideration. Let $\mathbf R(\mathbf z)$ denote the complete KKT residual, which measures
violation of the KKT optimality conditions. \ifreport
Detailed definitions are given in Appendix~\ref{apx:linear_convergence}.
\else
Due to space limitations, detailed definitions are given in
\cite[App.~C]{tian2026covertfull}.
\fi
It satisfies $\mathbf R(\mathbf z)=\mathbf 0$ if and only if
$\mathbf z\in\mathcal Z^*$.

\begin{assumption}[Local KKT Error Bound]
\label{assump:kkt_error}
For every $\mathbf z^*\in\mathcal Z^*$, there
exist constants $\epsilon_{\mathbf z^*}>0$ and
$\kappa_{\mathbf z^*}>0$ such that
\begin{equation}
\inf_{\widehat{\mathbf z}\in\mathcal Z^*}
\|\mathbf z-\widehat{\mathbf z}\|
\leq
\kappa_{\mathbf z^*}\|\mathbf R(\mathbf z)\|
\label{eq:kkt_error_bound}
\end{equation}
for every $\mathbf z$ with feasible local primal blocks
satisfying
$\|\mathbf z-\mathbf z^*\|<\epsilon_{\mathbf z^*}$.
\end{assumption}


For any $\mathbf z^*\in\mathcal Z^*$, define the
Lyapunov function
\begin{equation}
V(\mathbf z;\mathbf z^*)
\triangleq
\|\mathbf z-\mathbf z^*\|_G^2,
\label{eq:lyapunov}
\end{equation}
where $\|\mathbf v\|_G^2=\mathbf v^{\mathsf T}G\mathbf v$
and the fixed positive definite matrix $G$ is defined in
\ifreport
Appendix~\ref{apx:linear_convergence}.
\else
\cite[App.~C]{tian2026covertfull}.
\fi

\begin{theorem}[Global Linear Convergence]\label{thm:linear_convergence}
Under Assumptions \ref{assump:slater}--\ref{assump:kkt_error}, PP-ADMM converges globally and linearly to the
optimal primal--dual solution set of \eqref{prob:unified}. Specifically, for every
initial point $\mathbf{z}^0$, there exist
$\mathbf z^\infty\in\mathcal Z^*$, $\beta \in (0,1)$, and $C > 0$
such that the sequence $\{\mathbf{z}^k\}_{k\ge 0}$ generated by
PP-ADMM satisfies
\vspace{-2mm}\begin{align}
V_{\mathcal{Z}^*}^{k+1} &\le \beta\, V_{\mathcal{Z}^*}^{k}, \qquad k \ge 0, \label{eq:qlinear}\\
\big\|\mathbf{z}^k - \mathbf{z}^\infty\big\|^{2} &\le C \beta^k\, V_{\mathcal{Z}^*}^{0}, \qquad k \ge 0, \label{eq:rlinear}
\end{align}
where $V_{\mathcal Z^*}^k
\triangleq
\inf_{\mathbf z^*\in\mathcal Z^*}
V(\mathbf z^k;\mathbf z^*)$.
\end{theorem}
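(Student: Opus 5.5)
We follow the standard Lyapunov route for proximal ADMM and write the iteration abstractly, as in Appendix~B. The four primal blocks $\mathbf u_i$ ($\mathbf x$, $\mathbf r$, $(\mathbf p,\mathbf q,\mathbf y^{(n)})$, $\mathbf s$) each carry a concave objective $\phi_i$ and a closed convex local constraint set $\mathcal X_i$. They are coupled through the linear equality $\sum_i Q_i\mathbf u_i=\mathbf b$ with multiplier $\boldsymbol\theta=(\boldsymbol\lambda,\boldsymbol\mu,\boldsymbol\chi,\boldsymbol\omega)$. Each block solves its proximal subproblem with the other blocks frozen at iterate $k$, and the multiplier then takes the step $\boldsymbol\theta^{k+1}=\boldsymbol\theta^k+\tau\rho(\sum_iQ_i\mathbf u_i^{k+1}-\mathbf b)$.

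\textbf{Step 1: descent inequality.} Write the first-order optimality (variational inequality) condition of each block subproblem at $\mathbf u_i^{k+1}$. Write the KKT variational inequality at an arbitrary $\mathbf z^*\in\mathcal Z^*$, which exists by Assumption~\ref{assump:slater} and compactness of the local sets. Add the two and use monotonicity of $-\partial\phi_i$. The Jacobi cross terms $\rho Q_i^{\mathsf T}Q_j(\mathbf u_j^{k}-\mathbf u_j^{k+1})$ are handled by Young's inequality, which yields $\frac{4\rho}{2-\tau-\delta}Q_i^{\mathsf T}Q_i$-type terms. These are dominated by $\alpha I+\rho D_i$ exactly when \eqref{eq:fb-alpha} holds. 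Choosing $G=\mathrm{diag}\bigl(\alpha I+\rho D_i-\rho Q_i^{\mathsf T}Q_i\ (\text{suitably}),\ \tfrac{1}{\tau\rho}I\bigr)$ positive definite, we aim to obtain
\begin{equation*}
V(\mathbf z^{k+1};\mathbf z^*)\le V(\mathbf z^k;\mathbf z^*)-c\,\|\mathbf z^{k+1}-\mathbf z^k\|^2,\qquad c>0,
\end{equation*}
with the residual $\|\sum_iQ_i\mathbf u_i^{k+1}-\mathbf b\|^2$ absorbed via $\delta$ and the term $(2-\tau)$. This gives Fej\'er monotonicity in $\|\cdot\|_G$, boundedness of $\{\mathbf z^k\}$ in a compact set $\mathcal K$, and $\|\mathbf z^{k+1}-\mathbf z^k\|\to0$.

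\textbf{Step 2: residual bound.} From the subproblem optimality conditions, $\mathbf z^{k+1}$ exactly satisfies a perturbed KKT system. The perturbation is linear in $\mathbf z^{k+1}-\mathbf z^k$: proximal differences, Jacobi lag terms, and the dual increment $(\boldsymbol\theta^{k+1}-\boldsymbol\theta^k)/(\tau\rho)$. Hence $\|\mathbf R(\mathbf z^{k+1})\|\le\kappa_1\|\mathbf z^{k+1}-\mathbf z^k\|$, using the Lipschitz continuity of the projection-based residual and of $\nabla U_f$, $\nabla(G_w\circ g)$ on the compact $\mathcal K$; the iterates' local primal blocks are feasible by construction.

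\textbf{Step 3: globalizing the local error bound (main obstacle).} Assumption~\ref{assump:kkt_error} is only local, with point-dependent constants, so it must be made uniform along the trajectory. First, every limit point of $\{\mathbf z^k\}$ satisfies $\mathbf R=0$ by Step~2, so it lies in $\mathcal Z^*$. Fej\'er monotonicity with respect to every element of $\mathcal Z^*$ then gives convergence of the whole sequence to a single $\mathbf z^\infty\in\mathcal Z^*$. Consequently, for $k\ge k_0$ the iterates lie in the $\epsilon_{\mathbf z^\infty}$-ball, and $\mathrm{dist}(\mathbf z^{k+1},\mathcal Z^*)\le\kappa\kappa_1\|\mathbf z^{k+1}-\mathbf z^k\|$. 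For the finitely many $k<k_0$, a covering argument on the compact set $\mathcal K\cap\{\mathrm{dist}\ge\epsilon\}$ is needed: there $\|\mathbf R\|$ is bounded below by continuity, and $\mathrm{dist}(\cdot,\mathcal Z^*)$ is bounded above. This produces one global constant $\bar\kappa$ on $\mathcal K$, so the claimed rate holds from $k=0$ as stated.

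\textbf{Step 4: contraction and R-linear rate.} Take $\mathbf z^*$ to be the $G$-projection of $\mathbf z^k$ onto the closed convex set $\mathcal Z^*$. Step~1 gives $V_{\mathcal Z^*}^{k+1}\le V_{\mathcal Z^*}^k-c\|\mathbf z^{k+1}-\mathbf z^k\|^2$. Step~3 gives $V_{\mathcal Z^*}^{k+1}\le\lambda_{\max}(G)\bar\kappa^2\kappa_1^2\|\mathbf z^{k+1}-\mathbf z^k\|^2$. Combining the two yields \eqref{eq:qlinear} with $\beta=1/(1+c/(\lambda_{\max}(G)\bar\kappa^2\kappa_1^2))$. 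For \eqref{eq:rlinear}, the descent inequality gives $\|\mathbf z^{k+1}-\mathbf z^k\|^2\le V_{\mathcal Z^*}^k/c\le\beta^kV_{\mathcal Z^*}^0/c$. Summing the geometric tail $\sum_{j\ge k}\|\mathbf z^{j+1}-\mathbf z^j\|$ then gives $\|\mathbf z^k-\mathbf z^\infty\|^2\le C\beta^kV_{\mathcal Z^*}^0$ with $C=1/\bigl(c(1-\sqrt\beta)^2\bigr)$.
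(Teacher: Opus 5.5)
Your proposal has the same backbone as the paper's proof. It uses a $G$-norm Fej\'er descent (Lemma~\ref{lem:lyapunov_descent}) and a one-step residual bound from perturbed stationarity and nonexpansive projections (Lemma~\ref{lem:residual_bound}). It then globalizes the error bound by compactness (Lemma~\ref{lem:uniform_error_bound}) and finishes with the same contraction and geometric-tail argument, giving the same $\beta$ and $C$.

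\textbf{The choice of $G$.} The primal blocks of $G$ must be $H_i=P_i+\rho Q_i^{\mathsf T}Q_i=\alpha I+\rho D_i$, not $\alpha I+\rho D_i-\rho Q_i^{\mathsf T}Q_i=P_i$. After you substitute $\widehat{\boldsymbol\Lambda}=\boldsymbol\Lambda^k+\rho Q\boldsymbol\xi^{k+1}$, the block optimality conditions contain $H_i\mathbf d_i$, so only $H_i$-weighted distances telescope. In the $P_i$ metric an extra term $\rho\bigl(\|Q_i(\boldsymbol\xi_i^k-\boldsymbol\xi_i^*)\|^2-\|Q_i(\boldsymbol\xi_i^{k+1}-\boldsymbol\xi_i^*)\|^2\bigr)$ appears, and it has no sign. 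Your own domination condition $\alpha I+\rho D_i\succ\frac{4\rho}{2-\tau-\delta}Q_i^{\mathsf T}Q_i$ is the one that goes with $H_i$. The theorem's $V$ is also defined with this $G$.

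\textbf{Step~3 is where you genuinely differ.} The paper does not need the limit first. It covers the compact set $\mathcal Z^*\cap\mathcal K$ by finitely many error-bound balls. On the compact remainder it uses $\min\|\mathbf R\|>0$ together with a bounded distance to a reference saddle point. This yields one $\bar\kappa$ on the whole sublevel set, and convergence to $\mathbf z^\infty$ then follows from the Cauchy estimate.

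You instead first get $\mathbf z^k\to\mathbf z^\infty\in\mathcal Z^*$ by the Fej\'er/cluster-point (Opial) argument, and invoke Assumption~\ref{assump:kkt_error} only at $\mathbf z^\infty$. That is valid, but your two halves overlap:
\begin{itemize}
\item Once the tail is handled at $\mathbf z^\infty$, the finitely many early iterates need no covering. The maximum of the finite ratios $\mathrm{dist}(\mathbf z^{k+1},\mathcal Z^*)/\|\mathbf R(\mathbf z^{k+1})\|$ suffices, with distance zero when $\mathbf R=0$.
\item Conversely, a covering argument valid on all of $\mathcal K$ must cover $\mathcal Z^*\cap\mathcal K$ itself. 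Bounding $\|\mathbf R\|$ below on $\{\mathrm{dist}\ge\epsilon\}$ misses points at small positive distance that lie away from $\mathbf z^\infty$. Once you have that full cover, the tail argument is superfluous.
\end{itemize}
Your route needs only one local constant, but $\bar\kappa$ then depends on the trajectory through $k_0$; the paper's constant depends only on $\mathcal K$.

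\textbf{Two small remarks.} Step~2 needs no Lipschitz continuity of $\nabla U_f$ or $\nabla(G_w\circ g)$: the subproblem condition already evaluates $\nabla f_i$ at $\boldsymbol\xi_i^{k+1}$, and only the linear coupling terms lag. Also, $\mathcal Z^*\neq\varnothing$ comes from compactness of the full feasible set via $0\le r_{ml}^d\le q_{ml}$, not from compactness of the local sets, since $\mathcal R$ is unbounded.
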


\begin{IEEEproof}
\ifreport See Appendix~\ref{apx:linear_convergence}\else See \cite[App. C]{tian2026covertfull}\fi.
\end{IEEEproof}
Theorem~\ref{thm:linear_convergence} also applies to Problem~\eqref{prob:P1} after removing the {$G_w(g_{L_{w,m}}(s))$} objective terms. 
The technical
novelties of Theorem~\ref{thm:linear_convergence} are summarized at the end of
Section~\ref{sec:intro}.
\section{Numerical Analysis}

We consider a covert multi-hop, multi-modal
network with $|\mathcal N|=5$, $|\mathcal L|=8$,
$|\mathcal M|=2$, $|\mathcal F|=2$, and $|\mathcal W|=2$,
operating over VHF and UHF terrestrial modalities. Each link supports
all modalities, $\mathcal M_l=\mathcal M$, and both Willies monitor
all modalities, $\mathcal W_m=\mathcal W$. The throughput utility function is
$U_f(x_f)=\ln x_f$ and the covertness utility function is
$G_w(s)=-e^{-s}$. All results are obtained by solving~\eqref{prob:unified}.

Channel power gains are modeled via distance-dependent path loss with
exponents $2.5$ (VHF) and $2.8$ (UHF), rich-scattering small-scale
fading, and omnidirectional antennas at all nodes and Willies. The
carrier frequencies are $f_{\mathrm{VHF}}^c=150$ MHz and
$f_{\mathrm{UHF}}^c=450$ MHz. The effective VHF and UHF bandwidths
are $\Omega_{\mathrm{VHF}}=5$ kHz and
$\Omega_{\mathrm{UHF}}=2.5$ kHz, with normalized receiver noise PSDs
$N_{0,\mathrm{VHF}}=0.5$, $N_{0,\mathrm{UHF}}=1.5$, and
$N_{0,w,m}=N_{0,m}$. Each Willie observes each monitored modality
for $T_{w,m}=10$ ms with $\zeta_{w,m}=1$, yielding degrees of freedom
$L_{w,\mathrm{VHF}}=50$ and $L_{w,\mathrm{UHF}}=25$. The default
parameters are $\eta_{w,m}=0.7$, $\rho=4$, and $\tau=1$. The proximal
weight $\alpha$ satisfies Assumption~\ref{assump:proximal}.

\begin{figure}[!t]
    \centering
    \includegraphics[width=0.7\columnwidth]{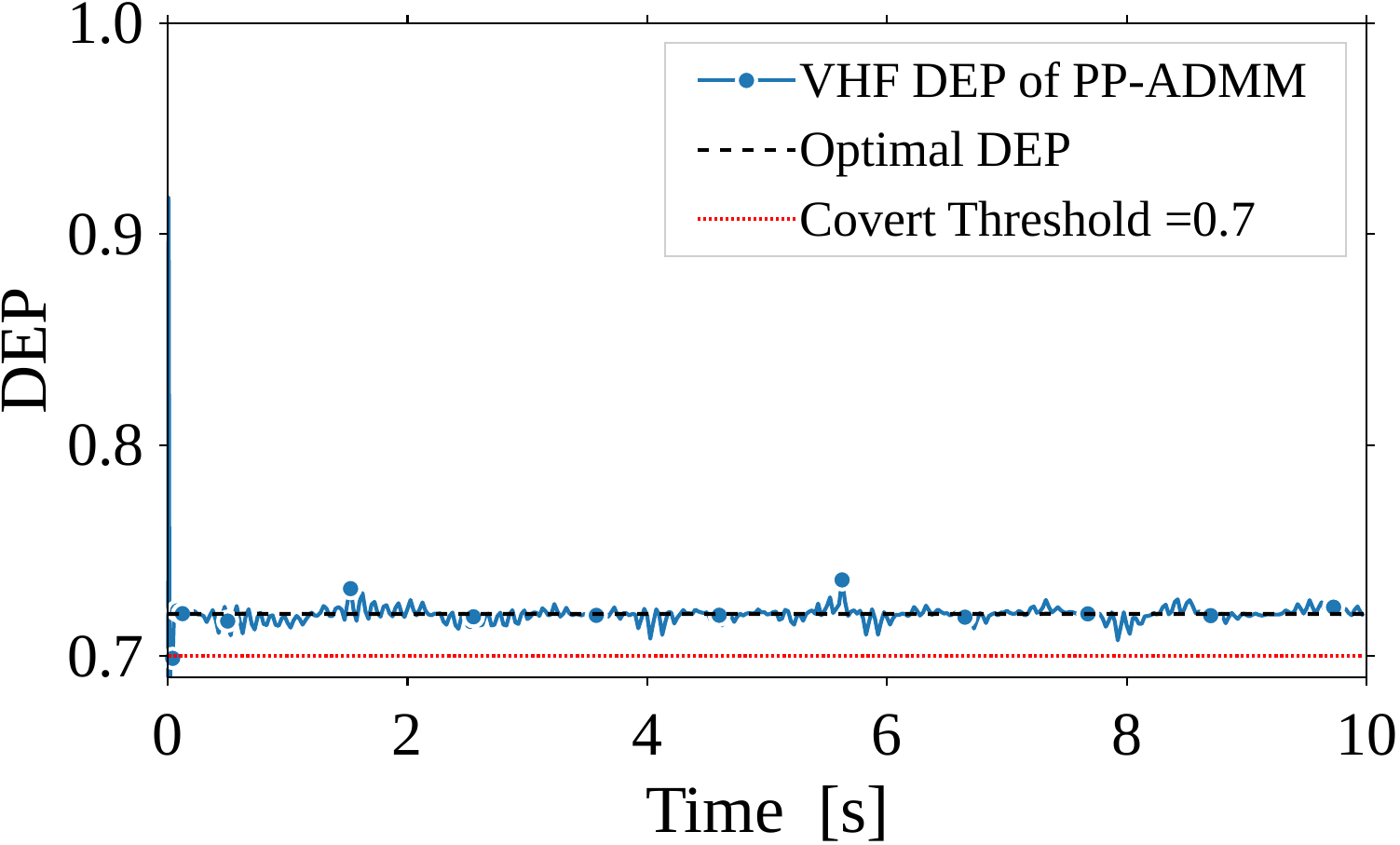}
    \caption{Tracking of VHF DEP of Willie 1 under Willie mobility and channel fading.}
    \label{fig:dep_tracking}
    \vspace{-3mm}
\end{figure}

\subsection{Linear Convergence and Scalability under Static Channels}

All distances and fading gains are fixed in the static-channel
experiments. Fig.~\ref{fig:lyapunov_fast} plots the Lyapunov ratio
$V_{\mathcal Z^*}^k/V_{\mathcal Z^*}^0$, confirming the exponentially fast 
convergence predicted by Theorem~\ref{thm:linear_convergence}.
Fig.~\ref{fig:utility_convergence} shows that the total throughput of the
two flows converges to the optimum.
Existing covert-routing works do not address the same joint
cross-layer problem and hence are not directly comparable.

Table~\ref{tab:network_size} reports the average iterations to
convergence and queue length over 100 random topologies per network
size $(|\mathcal N|,|\mathcal L|,|\mathcal F|)$, with $|\mathcal M|=2$ and $|\mathcal W|=2$. In each topology, the nodes and Willies are
placed uniformly at random, with constant node density, and the
closest node pairs are connected as bidirectional links, and channel gains follow the
distance-dependent path loss with exponents 2.5 (VHF) and 2.8 (UHF). Queue length
is the per-link magnitude of the flow-conservation multipliers
$\lambda_n^d$. Fig.~\ref{fig:queue_length} shows the queue evolution
for the 5-node network under different dual step sizes $\tau$.

\vspace{-2mm}
\subsection{Robust Tracking under Willie Mobility and Channel Fading}
\ignore{
Figs.~\ref{fig:throughput_vhf} and~\ref{fig:dep_convergence} illustrate the tracking performance of the aggregate throughput and VHF DEP, respectively, under Willie’s mobility and channel fading. The optimum for both metrics change over time. The proposed algorithm closely tracks the optimum over time for both metrics.
}
Figs.~\ref{fig:throughput_tracking}--\ref{fig:dep_tracking} show the
tracking performance of the total throughput and the DEP of Willie~1
for the VHF modality, respectively, for the 5-node network under
Willie mobility and channel fading. The legitimate nodes remain fixed,
while the Willies move at $v_W=5$ m/s. Under the Jakes fading model,
the maximum Doppler frequencies are
$f_{\mathrm{VHF}}^D=2.5$ Hz and $f_{\mathrm{UHF}}^D=7.5$ Hz.
Each algorithm iteration corresponds to a time slot of $1$ ms.
A video demonstration is provided
in~\cite{chakraborty2026video}. As the Willies move and channel conditions vary, the optimal values of both metrics evolve over time. The proposed algorithm closely tracks these time-varying optima, demonstrating effective tracking performance.




\ignore{
Fig.~\ref{fig:queue_length_tau} shows the Euclidean norm of
$Q_n^{d,k}$ under different dual step sizes $\tau$. The queues remain
bounded and approach steady regimes, supporting the bounded-iterate
behavior used in the convergence analysis. Larger $\tau$ increases
responsiveness but also steady-state queue levels, while stability
outside $\left[1,(1+\sqrt{5})/2\right)$ suggests that
Assumption~\ref{assump:prox} is conservative.}

\section{Conclusion}
\label{sec:conclusion}

This paper developed a  Parallel Proximal ADMM algorithm for distributed multi-hop, multi-modal covert network control, achieving linear convergence and robust tracking under Willie mobility. Our future work will consider multi-antenna transmissions, Willie cooperation, and higher mobility speeds.

\begin{table}[!t]
\centering
\vspace{2mm}
\caption{PP-ADMM convergence speed and queue length.}
\label{tab:network_size}
{\footnotesize
\setlength{\tabcolsep}{9pt}
\renewcommand{\arraystretch}{1.2}
\begin{tabular}{|c|c|c|}
\hline
$(|\mathcal N|,|\mathcal L|,|\mathcal F|)$ & Iterations & Queue length \\
\hline
$(5,8,2)$  & 1204 & 1.166 \\
$(10,18,4)$ & 1936 & 1.383 \\
$(20,38,8)$ & 4339 & 1.972 \\
\hline
\end{tabular}}
\vspace{-2.5mm}
\end{table}
\section*{Acknowledgment}
We thank Hao Chen and Ananthram Swami for helpful discussions on this study. 



\bibliographystyle{IEEEtran}
\bibliography{refs}

\ifreport
\appendices
\input{Appendix}
\else
\fi

\end{document}

%% file: Appendix.tex
\section{Proof of Lemma~\ref{lem:log_dep_curvature}}
\label{apx:log_dep_curvature}

\subsection{Proof of part (a):}
For notational convenience, write
\begin{equation}
D(s) \triangleq \operatorname{DEP}^{*}(s,L).
\end{equation}
By the definition of $h_L(s)$ in
Lemma~\ref{lem:log_dep_curvature}, $h_L(s)=\ln D(s)$. For $s>0$, define
\begin{equation}\label{eq:lem1_a}
a(s) \triangleq
L\left(1+\frac{1}{s}\right)\ln(1+s),
\end{equation}
and
\begin{equation}\label{eq:lem1_b}
b(s) \triangleq \frac{L}{s}\ln(1+s).
\end{equation}

\noindent Because the upper incomplete gamma function can be written as
\(
\Gamma(L,\cdot)
\triangleq
\Gamma(L)-\gamma(L,\cdot),
\)
equation~\eqref{eq:dep_star} is equivalently rewritten as
\begin{equation}
D(s)
=
\frac{\Gamma\!\left(L,a(s)\right)
+\gamma\!\left(L,b(s)\right)}
{\Gamma(L)}.
\label{eq:dep_upper_gamma}
\end{equation}
For $s>0$, the functions $a(s)$ and $b(s)$ are infinitely differentiable.
For fixed $L\geq1$, the functions $x\mapsto\Gamma(L,x)$ and
$x\mapsto\gamma(L,x)$ are infinitely differentiable for $x>0$. Moreover,
$\Gamma\bigl(L,a(s)\bigr)>0$ and $\gamma\bigl(L,b(s)\bigr)\geq0$. Hence,
$D(s)>0$, and $h_L(s)=\ln D(s)$ is twice continuously differentiable on
$(0,\infty)$.

Furthermore, $a(s)\to L$ and $b(s)\to L$ as $s\downarrow0$. Therefore, by
continuity of the incomplete gamma functions,
\begin{equation*}
\lim_{s\downarrow0}D(s)
= \frac{\Gamma(L,L)+\gamma(L,L)}{\Gamma(L)}
= 1 .
\end{equation*}
At $s=0$, Willie's aggregate received SNR is zero, so the observation
distributions under silence and transmission are identical. Hence,
$P_{\mathrm{FA}}(\theta)+P_{\mathrm{MD}}(\theta)=1$ for every threshold
$\theta$, and therefore $D(0)=1$. Consequently,
\begin{equation*}
\lim_{s\downarrow0}h_L(s)=\ln D(0)=0=h_L(0).
\end{equation*}
Thus, $h_L(s)$ is continuous on $[0,\infty)$. This proves
Lemma~\ref{lem:log_dep_curvature}(a).

\smallskip
\subsection{Proof of part (b):}
For $s>0$,
\begin{equation}
h_L''(s) = \frac{D''(s)}{D(s)} - \left(\frac{D'(s)}{D(s)}\right)^{2}.
\label{eq:dep_log_second_derivative}
\end{equation}
As $s\downarrow0$, the Taylor expansions of $a(s)$ and $b(s)$ are
\begin{equation}
a(s) = L+\frac{L}{2}s-\frac{L}{6}s^{2}+O(s^{3})
\label{eq:a_taylor}
\end{equation}
and
\begin{equation}
b(s) = L-\frac{L}{2}s+\frac{L}{3}s^{2}+O(s^{3}).
\label{eq:b_taylor}
\end{equation}
Therefore, $
a'(s)  \longrightarrow \frac{L}{2},\
   b'(s)  \longrightarrow -\frac{L}{2},\
a''(s) \longrightarrow -\frac{L}{3},\
   b''(s) \longrightarrow \frac{2L}{3}.$
Let
\begin{equation}
\varphi_L(t) \triangleq \frac{t^{L-1}e^{-t}}{\Gamma(L)}, \qquad t>0.
\label{eq:dep_q}
\end{equation}
The derivatives of the incomplete gamma functions are
\begin{equation}
\frac{\partial}{\partial x}\gamma(L,x)
=
x^{L-1}e^{-x}
=
\Gamma(L)\varphi_L(x).
\label{eq:gamma_lower_derivative}
\end{equation}
and
\begin{equation}
\frac{\partial}{\partial x}\Gamma(L,x)
=
-x^{L-1}e^{-x}
=
-\Gamma(L)\varphi_L(x).
\label{eq:gamma_upper_derivative}
\end{equation}

\noindent Hence, differentiating~\eqref{eq:dep_upper_gamma} gives
\begin{equation}
D'(s)
=
-\varphi_L(a(s))a'(s)
+
\varphi_L(b(s))b'(s).
\label{eq:dep_D_prime}
\end{equation}
and
\begin{align}
D''(s)
={}&
-\varphi_L'(a(s))(a'(s))^2
-\varphi_L(a(s))a''(s)
\notag\\
&+
\varphi_L'(b(s))(b'(s))^2
+\varphi_L(b(s))b''(s).
\label{eq:dep_D_double_prime}
\end{align}

Substituting the limits obtained from
\eqref{eq:a_taylor}--\eqref{eq:b_taylor} into
\eqref{eq:dep_D_prime}, and using the continuity of $\varphi_L$ at $L$, gives
\begin{align}
\lim_{s\downarrow0}D'(s)
&=
-\varphi_L(L)\frac{L}{2}
+\varphi_L(L)\left(-\frac{L}{2}\right)
\notag\\
&=
-L\varphi_L(L).
\label{eq:D_prime_limit}
\end{align}
Similarly, substituting the same limits into
\eqref{eq:dep_D_double_prime}, and using the continuity of $\varphi_L$ and $\varphi_L'$
at $L$, gives
\begin{align}
\lim_{s\downarrow0}D''(s)
={}&
-\varphi_L'(L)\frac{L^2}{4}
-\varphi_L(L)\left(-\frac{L}{3}\right)
\notag\\
&+
\varphi_L'(L)\frac{L^2}{4}
+\varphi_L(L)\frac{2L}{3}
\notag\\
={}&
L\varphi_L(L),
\label{eq:D_double_prime_limit}
\end{align}
where the two terms containing $\varphi_L'(L)$ cancel.

Moreover,
\begin{equation}
L\varphi_L(L)
=
\frac{L^Le^{-L}}{\Gamma(L)}
=
K.
\label{eq:Lq_equals_K}
\end{equation}

Taking $s\downarrow0$ in
\eqref{eq:dep_log_second_derivative} and using
$\lim_{s\downarrow0}D(s)=1$ together with
\eqref{eq:D_prime_limit}--\eqref{eq:Lq_equals_K}, we obtain
\begin{equation}
h_L''(0^{+}) \triangleq \lim_{s\downarrow0}h_L''(s)
             = K-K^{2}
             = K(1-K).
\label{eq:h_double_prime_zero}
\end{equation}
This proves Lemma~\ref{lem:log_dep_curvature}(b).

\smallskip
\subsection{Proof of part (c):}
We obtain the large-$s$ behavior of $D(s)$ in
\eqref{eq:dep_upper_gamma} by expanding separately the
upper incomplete gamma term $\Gamma(L,a(s))$ and the
lower incomplete gamma term $\gamma(L,b(s))$. We then combine the two
expansions and differentiate the logarithm of the resulting expression.

As $s\to\infty$,
\begin{equation}
a(s)
=
L\left(1+\frac{1}{s}\right)\ln(1+s)
\sim L\ln s
\longrightarrow\infty,
\label{eq:a_large_s}
\end{equation}
whereas
\begin{equation}
b(s)
=
\frac{L\ln(1+s)}{s}
\sim
\frac{L\ln s}{s}
\longrightarrow0.
\label{eq:b_large_s}
\end{equation}

Given $L$, the large-argument expansion in
\cite[\href{https://dlmf.nist.gov/8.11.E2}{(8.11.2)}--
\href{https://dlmf.nist.gov/8.11.E3}{(8.11.3)}]{NIST:DLMF}
gives
\begin{equation}
\Gamma(L,x)
=
x^{L-1}e^{-x}
\left[
1+O\left(x^{-1}\right)
\right],
\qquad x\to\infty.
\label{eq:gamma_upper_standard}
\end{equation}
Similarly, using $\gamma(L,x)=\Gamma(L)-\Gamma(L,x)$, the series expansion in
\cite[\href{https://dlmf.nist.gov/8.7.E3}{(8.7.3)}]{NIST:DLMF}
gives
\begin{equation}
\gamma(L,x)
=
\frac{x^L}{L}
\left[
1+O(x)
\right],
\qquad x\to0.
\label{eq:gamma_lower_standard}
\end{equation}

We first expand the lower incomplete gamma term $\gamma(L,b(s))$. Substituting
\eqref{eq:b_large_s} into \eqref{eq:gamma_lower_standard} gives
\begin{equation}
\begin{aligned}
\gamma(L,b(s))
&=
\frac{b(s)^L}{L}
\left[
1+O\bigl(b(s)\bigr)
\right]\\
&=
L^{L-1}
\!\left(\frac{\ln(1+s)}{s}\right)^L
\!\!\left[
1+
O\left(\frac{\ln(1+s)}{s}\right)\!
\right].
\end{aligned}
\label{eq:gamma_lower_term}
\end{equation}

We next expand the upper incomplete gamma term $\Gamma(L,a(s))$. The definitions of
$a(s)$ and $b(s)$ give
\begin{equation}
a(s)
=
(1+s)b(s)
=
L\ln(1+s)+b(s).
\label{eq:a_b_relations}
\end{equation}
It follows from \eqref{eq:a_b_relations} that
\begin{equation}
\begin{aligned}
e^{-a(s)}
&=
e^{-L\ln(1+s)}e^{-b(s)}\\
&=
(1+s)^{-L}e^{-b(s)}.
\end{aligned}
\label{eq:exp_Q_identity}
\end{equation}
Combining \eqref{eq:a_b_relations} and \eqref{eq:exp_Q_identity}
yields
\begin{equation}
\begin{aligned}
a(s)^{L-1}e^{-a(s)}
&=
\bigl((1+s)b(s)\bigr)^{L-1}
(1+s)^{-L}e^{-b(s)}\\
&=
\frac{b(s)^{L-1}e^{-b(s)}}{1+s}.
\end{aligned}
\label{eq:upper_prefactor_identity}
\end{equation}
\noindent $\!\!$Substituting~\eqref{eq:upper_prefactor_identity} into
\eqref{eq:gamma_upper_standard} and using
\eqref{eq:a_large_s}--\eqref{eq:b_large_s} gives
\begin{equation}
\begin{aligned}
\Gamma(L,a(s))
={}&
a(s)^{L-1}e^{-a(s)}
\left[
1+O\left(\frac{1}{a(s)}\right)
\right]\\
={}&
L^{L-1}
\left(\frac{\ln(1+s)}{s}\right)^L
\frac{1}{\ln(1+s)}\\
&\times
\left[
1+
O\left(
\frac{1}{\ln(1+s)}
+
\frac{\ln(1+s)}{s}
\right)
\right].
\end{aligned}
\label{eq:gamma_upper_term}
\end{equation}

Define
\begin{equation}
\varepsilon_L(s)
\triangleq
\frac{\Gamma(L)}{L^{L-1}}
\left(\frac{s}{\ln(1+s)}\right)^{L}
D(s)
-1-\frac{1}{\ln(1+s)}.
\label{eq:rho_definition}
\end{equation}
The function $\varepsilon_L(s)$ is twice continuously differentiable for all
sufficiently large $s$. Rearranging~\eqref{eq:rho_definition} gives
\begin{equation}
D(s)
=
\frac{L^{L-1}}{\Gamma(L)}
\left(\frac{\ln(1+s)}{s}\right)^{L}
\left[
1+\frac{1}{\ln(1+s)}+\varepsilon_L(s)
\right].
\label{eq:dep_asymptotic}
\end{equation}

To differentiate $D(s)$ in~\eqref{eq:dep_asymptotic} twice, we first
derive the orders of $\varepsilon_L'(s)$ and
$\varepsilon_L''(s)$. Direct differentiation of the exact definitions
\eqref{eq:lem1_a}--\eqref{eq:lem1_b} gives
\begin{equation}
\begin{aligned}
a'(s)&=O(s^{-1}),
&
a''(s)&=O(s^{-2}),\\
b'(s)&=
O\left(\frac{\ln(1+s)}{s^{2}}\right),
&
b''(s)&=
O\left(\frac{\ln(1+s)}{s^{3}}\right).
\end{aligned}
\label{eq:ab_derivative_bounds}
\end{equation}
Differentiating the exact definition in \eqref{eq:rho_definition}, with
$D'(s)$ and $D''(s)$ given by \eqref{eq:dep_D_prime}--\eqref{eq:dep_D_double_prime}, and estimating the
resulting expressions using \eqref{eq:a_large_s}--\eqref{eq:gamma_upper_term} and~\eqref{eq:ab_derivative_bounds}, gives,
\begin{equation}
\begin{aligned}
\varepsilon_L(s)
&=
O\left(
\frac{1}{\ln^{2}(1+s)}
+
\frac{\ln(1+s)}{s}
\right),\\
\varepsilon_L'(s)
&=
O\left(
\frac{1}{s\ln^{3}(1+s)}
+
\frac{\ln(1+s)}{s^{2}}
\right),\\
\varepsilon_L''(s)
&=
O\left(
\frac{1}{s^{2}\ln^{3}(1+s)}
+
\frac{\ln(1+s)}{s^{3}}
\right).
\end{aligned}
\label{eq:rho_derivative_bounds}
\end{equation}
The first bound in~\eqref{eq:rho_derivative_bounds} implies that
$\varepsilon_L(s)\to0$. Hence,
\begin{equation}
1+\frac{1}{\ln(1+s)}+\varepsilon_L(s)
\longrightarrow 1.
\label{eq:rho_log_argument_limit}
\end{equation}
Differentiating twice and using
\eqref{eq:rho_derivative_bounds}--\eqref{eq:rho_log_argument_limit}
yields
\begin{equation}
\frac{d^{2}}{ds^{2}}
\ln\left[
1+\frac{1}{\ln(1+s)}+\varepsilon_L(s)
\right]
=
o(s^{-2}).
\label{eq:log_remainder_second_derivative}
\end{equation}

Finally, taking the logarithm of both sides of
\eqref{eq:dep_asymptotic} gives
\begin{equation}
\begin{aligned}
h_L(s)
={}&
\ln\left(\frac{L^{L-1}}{\Gamma(L)}\right)
+
L\ln\bigl(\ln(1+s)\bigr)
-
L\ln s\\
&+
\ln\left[
1+\frac{1}{\ln(1+s)}+\varepsilon_L(s)
\right].
\end{aligned}
\label{eq:h_asymptotic_expansion}
\end{equation}
The first term in~\eqref{eq:h_asymptotic_expansion} is independent
of $s$. Direct differentiation of
$\ln\bigl(\ln(1+s)\bigr)$ gives
\begin{equation}
\frac{d^{2}}{ds^{2}}
\ln\bigl(\ln(1+s)\bigr)
=
-\frac{\ln(1+s)+1}
{(1+s)^{2}\ln^{2}(1+s)}
=
o(s^{-2}).
\label{eq:loglog_second_derivative}
\end{equation}
Differentiating~\eqref{eq:h_asymptotic_expansion} twice and using
\eqref{eq:log_remainder_second_derivative} and
\eqref{eq:loglog_second_derivative} gives
\begin{equation}
\begin{aligned}
h_L''(s)
&=
\frac{L}{s^{2}}
-
\frac{L\bigl(\ln(1+s)+1\bigr)}
{(1+s)^{2}\ln^{2}(1+s)}
+
o(s^{-2})\\
&=
\frac{L}{s^{2}}
+
o(s^{-2})
\sim
\frac{L}{s^{2}}.
\end{aligned}
\label{eq:h_double_prime_asymptotic}
\end{equation}
Because $L>0$, \eqref{eq:h_double_prime_asymptotic} implies that
$h_L''(s)>0$ for sufficiently large $s$. Therefore, $h_L(s)$ is
convex for sufficiently large $s$. This proves part~(c) and completes
the proof of Lemma~\ref{lem:log_dep_curvature}.

\section{Equivalence and Four-Block Formulation}
\label{app:fb-formulation}
This appendix has four parts.
Part A proves that the link-level constraints
\eqref{eq:P1_cap}--\eqref{eq:P1_sched} are equivalent to the endpoint-local
constraints \eqref{eq:consensus}--\eqref{eq:local_exclusive}.
Part B groups the primal variables of Problems
\eqref{prob:P1} and \eqref{prob:unified} into four blocks and writes the equalities
\eqref{eq:P1_flow}, \eqref{eq:P1_rate}, \eqref{eq:p1-snragg}, and \eqref{eq:consensus} in matrix form.
Part C constructs the proximal matrices for the
local primal updates \eqref{eq:step_x}, \eqref{eq:routing}, \eqref{eq:nodeupdate}, and \eqref{eq:s_update}.
Part D proves that the four-block updates
\eqref{eq:fb-primal-update}--\eqref{eq:fb-dual-update}
generate the same primal--dual iterates as these local
primal updates and the dual updates \eqref{eq:lamupd}-\eqref{eq:omegaupd}.

\subsection{Equivalence of the Original Constraints \eqref{eq:P1_cap}--\eqref{eq:P1_sched} and the Endpoint-Local Constraints \eqref{eq:consensus}--\eqref{eq:local_exclusive}:}
For each $n\in\mathcal N$ and $m\in\mathcal M$, let
$\mathbf y_m^{(n)}
\triangleq
\big(y_{ml}^{(n)}\big)_{
l\in(\mathcal I(n)\cup\mathcal O(n))\cap\mathcal L_m}$,
and let $\mathcal S_{mn}$ denote the local scheduling set
specified by \eqref{eq:local_exclusive}.
\begin{proposition}[Equivalence of \eqref{eq:P1_cap}--\eqref{eq:P1_sched} and
\eqref{eq:consensus}--\eqref{eq:local_exclusive}]
\label{prop:local_equivalence}
For both Problem~\eqref{prob:P1} and Problem~\eqref{prob:unified}, replacing
\eqref{eq:P1_cap}--\eqref{eq:P1_sched} with \eqref{eq:consensus}--\eqref{eq:local_exclusive}, while retaining the remaining
constraints, does not change the feasible network variables or
the objective value. At every feasible point, the scheduling
variable is recovered as
$y_{ml}=y_{ml}^{(i)}=y_{ml}^{(j)}$.
\end{proposition}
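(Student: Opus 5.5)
The plan is to prove the two inclusions between feasible sets directly, using the maps $y_{ml}\mapsto(y_{ml}^{(i)},y_{ml}^{(j)})=(y_{ml},y_{ml})$ and $(y_{ml}^{(i)},y_{ml}^{(j)})\mapsto y_{ml}\triangleq y_{ml}^{(i)}$. All other variables $(\mathbf x,\mathbf r,\mathbf q,\mathbf p,\mathbf s)$ are left unchanged. The objectives of Problems~\eqref{prob:P1} and~\eqref{prob:unified} depend only on $\mathbf x$ and $\mathbf s$, and never on the scheduling variables. Hence preserving these variables preserves the objective value, and it suffices to show that the two constraint systems admit exactly the same $(\mathbf x,\mathbf r,\mathbf q,\mathbf p,\mathbf s)$.

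First, take a point satisfying \eqref{eq:P1_cap}--\eqref{eq:P1_sched} together with the remaining constraints. Set $y_{ml}^{(i)}=y_{ml}^{(j)}=y_{ml}$ for each $l=(i,j)$ and $m\in\mathcal M_l$. Then \eqref{eq:consensus} holds trivially.

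Constraint \eqref{eq:local_cap} is \eqref{eq:P1_cap} with $y_{ml}$ replaced by $y_{ml}^{(i)}=y_{ml}$. The lower bound $q_{ml}\ge0$ follows from \eqref{eq:P1_rate} and \eqref{eq:P1_rbnd}, since $q_{ml}=\sum_d r_{ml}^d\ge0$.

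For \eqref{eq:local_exclusive}, fix a node $n$ and a modality $m$. The sum $\sum_{l\in(\mathcal I(n)\cup\mathcal O(n))\cap\mathcal L_m}y_{ml}^{(n)}$ runs over the same index set as \eqref{eq:P1_sched}, and each $y_{ml}^{(n)}$ equals $y_{ml}$. Nonnegativity transfers in the same way. This index check is the one point that needs care: node $n$ holds a copy $y_{ml}^{(n)}$ for exactly those links on which it is an endpoint, i.e., for $l\in\mathcal I(n)\cup\mathcal O(n)$. Moreover, every link is distinct-endpoint, so no link appears twice in the sum.

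Conversely, take a point satisfying \eqref{eq:consensus}--\eqref{eq:local_exclusive} and define $y_{ml}\triangleq y_{ml}^{(i)}$, which equals $y_{ml}^{(j)}$ by \eqref{eq:consensus}. Then \eqref{eq:local_cap} gives \eqref{eq:P1_cap} directly.

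For a given node $n$, every link $l\in(\mathcal I(n)\cup\mathcal O(n))\cap\mathcal L_m$ has $n$ as one of its endpoints. By consensus, the copy $y_{ml}^{(n)}$ therefore equals $y_{ml}$, whether $n$ is the transmitter or the receiver of $l$. Consequently \eqref{eq:local_exclusive} at node $n$ is exactly \eqref{eq:P1_sched} at node $n$, including the constraint $y_{ml}\ge0$.

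The two maps are mutually inverse on the scheduling coordinates, because on the consensus set both copies coincide. This gives the recovery formula $y_{ml}=y_{ml}^{(i)}=y_{ml}^{(j)}$ at every feasible point.

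Finally, the argument never uses the objective beyond the fact that it does not involve $\mathbf y$. It therefore applies verbatim to both Problem~\eqref{prob:P1} and Problem~\eqref{prob:unified}. The same holds for the hard covertness constraint \eqref{eq:p1-covert} and the SNR definition \eqref{eq:p1-snragg}, which involve only $\mathbf p$ and $\mathbf s$.

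The main obstacle is not depth but bookkeeping. I would make explicit that the index set of copies held by $n$ coincides with the index set in \eqref{eq:P1_sched}. I would also note that the redundant constraint $q_{ml}\ge0$ in \eqref{eq:local_cap} is implied by \eqref{eq:P1_rate} and \eqref{eq:P1_rbnd}, so adding it does not shrink the feasible set.
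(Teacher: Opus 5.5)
Your proposal is correct and follows essentially the same route as the paper. The paper also proves both directions with the maps $y_{ml}^{(i)}=y_{ml}^{(j)}=y_{ml}$ and $y_{ml}\triangleq y_{ml}^{(i)}=y_{ml}^{(j)}$, takes $q_{ml}\ge0$ from \eqref{eq:P1_rate} and \eqref{eq:P1_rbnd}, and preserves the objective because neither objective involves $q_{ml}$ or the local scheduling copies.
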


\begin{IEEEproof}
We show this equivalence in both directions. Step 1 constructs the local
scheduling copies from a feasible point of Problem~\eqref{prob:P1} or~\eqref{prob:unified}.
Step 2 recovers the variables satisfying \eqref{eq:P1_cap}--\eqref{eq:P1_sched} from
variables satisfying \eqref{eq:consensus}--\eqref{eq:local_exclusive}.

\subsubsection{Step 1: From \eqref{eq:P1_cap}--\eqref{eq:P1_sched} to \eqref{eq:consensus}--\eqref{eq:local_exclusive}}
Consider any feasible solution of Problem~\eqref{prob:P1} or~\eqref{prob:unified}. Keep
$x_f$, $r_{ml}^{d}$, $p_{ml}$, $q_{ml}$, and $s_{w,m}$
unchanged. For every $l=(i,j)\in\mathcal L$ and
$m\in\mathcal M_l$, define
\begin{equation}
y_{ml}^{(i)}
\triangleq
y_{ml}^{(j)}
\triangleq
y_{ml}.
\label{eq:local_copy_construction}
\end{equation}
The local copies in \eqref{eq:local_copy_construction} satisfy the consensus constraint
\eqref{eq:consensus}. Since the original scheduling variables satisfy \eqref{eq:P1_sched},
we also have
$\mathbf y_m^{(n)}\in\mathcal S_{mn}$ for every node $n$ and
modality $m$, which gives \eqref{eq:local_exclusive}.

Moreover, \eqref{eq:P1_rate} and \eqref{eq:P1_rbnd} give $q_{ml}\geq0$. Together with
\eqref{eq:P1_cap} and \eqref{eq:local_copy_construction}, this gives
\begin{equation}
0\leq q_{ml}
\leq
C_{ml}\!\left(p_{ml},y_{ml}^{(i)}\right),
\end{equation}
which is \eqref{eq:local_cap}. Hence, the constructed variables satisfy
\eqref{eq:consensus}--\eqref{eq:local_exclusive}.

\subsubsection{Step 2: From \eqref{eq:consensus}--\eqref{eq:local_exclusive} to \eqref{eq:P1_cap}--\eqref{eq:P1_sched}}
Consider any variables satisfying \eqref{eq:consensus}--\eqref{eq:local_exclusive} and the
remaining constraints of Problem~\eqref{prob:P1} or~\eqref{prob:unified}. For every $l=(i,j)\in\mathcal L$ and $m\in\mathcal M_l$, define
\(
y_{ml}
\triangleq
y_{ml}^{(i)}
=
y_{ml}^{(j)},
\)
where the equality follows from \eqref{eq:consensus}.
Constraints \eqref{eq:P1_rate} and
\eqref{eq:local_cap} then give
\begin{equation}
\!\!\sum_{d\in\mathcal N}r_{ml}^{d}
=
q_{ml}
\leq
C_{ml}(p_{ml},y_{ml}),
\quad
\forall\,l\in\mathcal L,\ m\in\mathcal M_l,
\label{eq:capacity_recovery}
\end{equation}
which recovers \eqref{eq:P1_cap}. Moreover, \eqref{eq:local_exclusive} gives
\begin{align}
\!&\sum_{l\in(\mathcal I(n)\cup\mathcal O(n))
\cap\mathcal L_m}
\!y_{ml}
 =\!\!
\sum_{l\in(\mathcal I(n)\cup\mathcal O(n))
\cap\mathcal L_m}
\!y_{ml}^{(n)}
\leq1,
\nonumber\\[-1mm]
&\hspace{33mm}
\forall\,n\in\mathcal N,\ m\in\mathcal M,
\label{eq:scheduling_recovery}
\end{align}
which recovers \eqref{eq:P1_sched}. All remaining constraints are unchanged.

Finally, $q_{ml}$ and the local scheduling copies do not appear
in either objective. Therefore, the corresponding feasible
points have the same objective value, and an optimal solution
of either representation gives an optimal solution of the
other.
\end{IEEEproof}

\subsection{Equivalent Four-Block Reformulation
of Problems \eqref{prob:P1} and \eqref{prob:unified}:}

We group the primal variables
$\mathbf x,\mathbf r,\mathbf p,\mathbf q,
\mathbf y^{(n)},\mathbf s$
of the endpoint-local Problems \eqref{prob:P1} and \eqref{prob:unified}
into four blocks.
We then write the equalities
\eqref{eq:P1_flow}, \eqref{eq:P1_rate}, \eqref{eq:p1-snragg}, and \eqref{eq:consensus} in matrix form.

Let $\mathbf y$ collect all endpoint-local scheduling
copies. Define the four primal blocks by
\begin{equation}
\begin{aligned}
\boldsymbol\xi_1&=\mathbf x,
&\boldsymbol\xi_2&=\mathbf r,\\
\boldsymbol\xi_3&=\mathbf h
\triangleq
\begin{bmatrix}
\mathbf p\\
\mathbf q\\
\mathbf y
\end{bmatrix},
&\boldsymbol\xi_4&=\mathbf s.
\end{aligned}
\label{eq:fb-blocks}
\end{equation}
The four blocks contain the congestion-control, routing,
joint power--rate--scheduling, and aggregate-SNR variables,
respectively. Collect the primal and dual variables as
\begin{equation}
\begin{aligned}
\boldsymbol\xi
&\triangleq
\begin{bmatrix}
\boldsymbol\xi_1\\
\boldsymbol\xi_2\\
\boldsymbol\xi_3\\
\boldsymbol\xi_4
\end{bmatrix},
&
\boldsymbol\Lambda
&\triangleq
\begin{bmatrix}
\boldsymbol\lambda\\
\boldsymbol\mu\\
\boldsymbol\chi\\
\boldsymbol\omega
\end{bmatrix},\\
\mathbf z
&\triangleq
\begin{bmatrix}
\boldsymbol\xi\\
\boldsymbol\Lambda
\end{bmatrix}.
\end{aligned}
\label{eq:fb-state}
\end{equation}

Using the residual signs in \eqref{eq:auglag}, constraints
\eqref{eq:P1_flow}, \eqref{eq:P1_rate}, \eqref{eq:p1-snragg}, and \eqref{eq:consensus} are equivalent to
$\mathbf c(\boldsymbol\xi)=0$, where
\begin{equation}
\mathbf c(\boldsymbol\xi)\triangleq
\begin{bmatrix}
\left(
\begin{aligned}
&\sum_{l\in O(n)}\sum_{m\in\mathcal M_l}r_{ml}^{d}-\sum_{l\in I(n)}\sum_{m\in\mathcal M_l}r_{ml}^{d}\\
&-\sum_{f\in\mathcal F}
  \mathbf1_{\{s_f=n,\,d_f=d\}}x_f
\end{aligned}
\right)_{\substack{n,d\in\mathcal N\\n\ne d}}
\\[2mm]
\bigg(
\displaystyle\sum_{d\in\mathcal N}r_{ml}^{d}-q_{ml}
\bigg)_{{l\in\mathcal L,\ m\in\mathcal M_l}}
\\[2mm]
\bigg(
s_{w,m}
-\displaystyle\sum_{l\in\mathcal L_m}A_{w,ml}p_{ml}
\bigg)_{{m\in\mathcal M,\ w\in\mathcal W_m}}
\\[2mm]
\bigg(
y_{ml}^{(i)}-y_{ml}^{(j)}
\bigg)_{{l=(i,j)\in\mathcal L,\ m\in\mathcal M_l}}
\end{bmatrix}.
\label{eq:fb-equality-residual}
\end{equation}
Each indexed expression denotes a column vector over
the indicated indices. We use the same fixed ordering
of these indices throughout the formulation and proofs.

The residual in~\eqref{eq:fb-equality-residual} is linear
in the four primal blocks:
\begin{equation}
\begin{aligned}
\mathbf c(\boldsymbol\xi)
&=\sum_{i=1}^{4}Q_i\boldsymbol\xi_i
=Q\boldsymbol\xi,\quad
Q=[\,Q_1\ Q_2\ Q_3\ Q_4\,].
\end{aligned}
\label{eq:fb-equalities}
\end{equation}
The following identities, for arbitrary block vectors
of the corresponding dimensions, define the fixed
coefficient matrices $Q_i$:
\begin{equation}
\begin{aligned}
Q_1\mathbf x
&=
\begin{bmatrix}
\bigg(
-\displaystyle\sum_{{f\in\mathcal F,\
s_f=n,\ d_f=d}}x_f
\bigg)_{{n,\ d\in\mathcal N,\ n\ne d}}\\
0\\
0\\
0
\end{bmatrix},
\\[2mm]
Q_2\mathbf r
&=
\begin{bmatrix}
\left(
\begin{aligned}
&\sum_{l\in O(n)}\sum_{m\in\mathcal M_l}r_{ml}^{d}\\&-\sum_{l\in I(n)}\sum_{m\in\mathcal M_l}r_{ml}^{d}
\end{aligned}
\right)_{n,\ d\in\mathcal N,\ n\ne d}
\\[1mm]
\bigg(
\displaystyle\sum_{d\in\mathcal N}r_{ml}^{d}
\bigg)_{{l\in\mathcal L,\ m\in\mathcal M_l}}\\
0\\
0
\end{bmatrix},
\\[2mm]
Q_3
\begin{bmatrix}
\mathbf p\\
\mathbf q\\
\mathbf y
\end{bmatrix}
&=
\begin{bmatrix}
0\\
(-q_{ml})_{{l\in\mathcal L,\ m\in\mathcal M_l}}
\\[1mm]
\bigg(
-\displaystyle\sum_{l\in\mathcal L_m}A_{w,ml}p_{ml}
\bigg)_{{m\in\mathcal M,\ w\in\mathcal W_m}}
\\[1mm]
\bigg(
y_{ml}^{(i)}-y_{ml}^{(j)}
\bigg)_{{l=(i,j)\in\mathcal L,\ m\in\mathcal M_l}}
\end{bmatrix},
\\[2mm]
Q_4\mathbf s
&=
\begin{bmatrix}
0\\
0\\
(s_{w,m})_{\substack{m\in\mathcal M\\w\in\mathcal W_m}}\\
0
\end{bmatrix}.
\end{aligned}
\label{eq:fb-A}
\end{equation}
The four block rows follow the order in
\eqref{eq:fb-equality-residual}.
Each zero denotes a zero vector of the corresponding
dimension. The columns of $Q_3$ follow the order
$(\mathbf p,\mathbf q,\mathbf y)$.

The remaining constraints are imposed within each block.
Let $\mathcal X$ be the admitted-rate box in (1), and let
$\mathcal R$ be the set of nonnegative routing vectors.
Let $\mathcal H$ be the product of the node--modality
feasible sets in \eqref{eq:nodeupdate}, with coordinates ordered as
$(\mathbf p,\mathbf q,\mathbf y)$.
Let $\mathcal S_{\mathrm{snr}}$ be the product of the
scalar feasible sets in \eqref{eq:s_update}. Define
\begin{equation}
(\mathcal C_1,\mathcal C_2,\mathcal C_3,\mathcal C_4)
=(\mathcal X,\mathcal R,\mathcal H,\mathcal S_{\mathrm{snr}}).
\label{eq:fb-local-sets}
\end{equation}

These sets are nonempty under Assumption 1 and are closed
and convex. In particular, the capacity constraints
$q_{ml}\le C_{ml}(p_{ml},y_{ml}^{(i)})$ are convex because
$C_{ml}$ is continuous and concave. Similarly,
$g_{L_{w,m}}(s_{w,m})\ge\ln\eta_{w,m}$ defines a closed
convex set because $g_{L_{w,m}}$ is continuous and concave.
The remaining local constraints are linear.

To include the local constraints in the objective,
define the indicator function of a set $\mathcal C$ by
\begin{equation}
\iota_{\mathcal C}(\mathbf v)
\triangleq
\begin{cases}
0, & \mathbf v\in\mathcal C,\\
+\infty, & \mathbf v\notin\mathcal C.
\end{cases}
\label{eq:fb-indicator}
\end{equation}
For the local feasible variables, define
\begin{equation}
\begin{aligned}
f_1(\mathbf x)
 &=-\sum_{f\in\mathcal F}U_f(x_f),\\
f_2(\mathbf r)&=0,\qquad f_3(\mathbf h)=0,\\
f_4(\mathbf s)
 &=-\sum_{m\in\mathcal M}\sum_{w\in\mathcal W_m}
 G_w\bigl(g_{L_{w,m}}(s_{w,m})\bigr).
\end{aligned}
\label{eq:fb-smooth-objectives}
\end{equation}
The functions $f_i$ are convex on their respective sets.
For $f_1$, this follows from the concavity of $U_f$.
For $f_4$, it follows because $G_w$ is concave and
nondecreasing and $g_{L_{w,m}}$ is concave.

Let
\begin{equation}
\Phi_i(\mathbf v)
\triangleq
f_i(\mathbf v)+\iota_{\mathcal C_i}(\mathbf v),
\qquad i=1,\ldots,4,
\label{eq:fb-block-objectives}
\end{equation}
with $\Phi_i(\mathbf v)=+\infty$ outside $\mathcal C_i$.
Since each $f_i$ is finite, continuous, and convex on the
nonempty closed convex set $\mathcal C_i$, each $\Phi_i$
is proper, closed, and convex.

Minimizing the negative of the utility in Problem \eqref{prob:unified}
therefore gives
\begin{equation}
\begin{aligned}
\underset{\boldsymbol\xi_1,\ldots,\boldsymbol\xi_4}
 {\operatorname{minimize}}\quad&
 \sum_{i=1}^{4}\Phi_i(\boldsymbol\xi_i)\\
\operatorname{subject~to}\quad&
 \sum_{i=1}^{4}Q_i\boldsymbol\xi_i=0.
\end{aligned}
\label{eq:fb-problem}
\end{equation}
This is the standard separable convex form for multi-block
ADMM~\cite{deng2017parallel}, with four blocks.
The objective is a sum of four convex functions, and the
displayed constraints are linear equalities.
The local constraints are included through the indicator
functions in $\Phi_i$.
For Problem \eqref{prob:P1}, set $f_4=0$.

We use the multiplier term
$\langle\boldsymbol\Lambda,Q\boldsymbol\xi\rangle$
in this minimization formulation.
This sign agrees with the dual updates \eqref{eq:lamupd}--\eqref{eq:omegaupd}.

\subsection{Proximal Matrices for the Four Primal Blocks $\mathbf x$, $\mathbf r$,
$(\mathbf p,\mathbf q,\mathbf y^{(n)})$, and $\mathbf s$:}

We choose the proximal matrices so that the four-block
primal update~\eqref{eq:fb-primal-update} gives the local
updates \eqref{eq:step_x}, \eqref{eq:routing}, \eqref{eq:nodeupdate}, and \eqref{eq:s_update}.
First, define the matrices $D_i$ from the squared terms
of \eqref{eq:auglag} appearing in the primal updates
\eqref{eq:step_x}, \eqref{eq:routing}, \eqref{eq:nodeupdate}, and \eqref{eq:s_update}, after factoring out $\rho/2$.
For the routing variable $r_{ml}^{d}$ on a directed
link $l=(i,j)$, set
\begin{equation}
\begin{aligned}
&D_1=I,\qquad D_4=I,\\
&D_3=
\begin{bmatrix}
D_p&0&0\\
0&I_q&0\\
0&0&I_y
\end{bmatrix},\\
&[D_2]_{(m,l,d),(m,l,d)}
=1+\mathbf1_{\{i\ne d\}}+\mathbf1_{\{j\ne d\}}.
\end{aligned}
\label{eq:fb-D}
\end{equation}
All off-diagonal entries of $D_2$ are zero, and its
rows and columns follow the ordering of $\mathbf r$.
The rows and columns of $D_3$ follow the order
$(\mathbf p,\mathbf q,\mathbf y)$, and $D_p$ is
defined in~\eqref{eq:fb-Dp}.
The identity matrices in $D_1$ and $D_4$ have the
dimensions of $\mathbf x$ and $\mathbf s$, respectively;
$I_q$ and $I_y$ have the dimensions of $\mathbf q$
and $\mathbf y$, respectively.

Each admitted-rate variable appears in one
flow-conservation equality, which gives $D_1=I$.
For each diagonal entry of $D_2$, the term $1$ comes
from the aggregate-rate equality \eqref{eq:P1_rate}. The remaining
two terms come from the flow-conservation
equalities \eqref{eq:P1_flow} at the endpoints of the link.


Let $\operatorname{tx}(l)$ denote the transmitter of link $l$.
The power matrix is
\begin{equation}
\!\!\![D_p]_{ml,m'l'}=
\!\begin{cases}
\displaystyle\sum_{w\in\mathcal W_m}\!A_{w,ml}A_{w,ml'},
&\begin{array}{l}
\!\!\!m=m',\\
\!\!\!\operatorname{tx}(l)=\operatorname{tx}(l'),
\end{array}\\
0,&\text{otherwise}.
\end{cases}
\label{eq:fb-Dp}
\end{equation}
Thus, $D_p$ retains the cross terms between powers
updated jointly by one node on one modality.
Entries between different node--modality updates are zero.
Each aggregate rate and each endpoint scheduling copy
appears in one equality row, which gives $I_q$ and $I_y$.
Each aggregate-SNR variable also appears in one equality
row, which gives $D_4=I$.
The capacity constraints remain in $\mathcal H$ and are
not part of the squared equality penalties in \eqref{eq:auglag}.

To express the connection with the local updates, let
$\mathbf u_b$ contain the variables in one local update:
one $x_f$, one $r_{ml}^{d}$, one node--modality vector
in \eqref{eq:nodeupdate}, or one $s_{w,m}$.
Let $Q_b^{\mathrm{loc}}$ be the submatrix of $Q$
containing the columns corresponding to the variables
in $\mathbf u_b$, in the same order as in $\mathbf u_b$.
Then $Q_b^{\mathrm{loc}}\mathbf u_b$ is their contribution
to $Q\boldsymbol\xi$.
Let $\mathcal I_i$ index the local updates contained
in block $i$.
For an increment $\mathbf d_i$ with local components
$\mathbf d_b$,
\begin{equation}
\begin{aligned}
Q_i\mathbf d_i
 &=\sum_{b\in\mathcal I_i}Q_b^{\mathrm{loc}}\mathbf d_b,\\
\mathbf d_i^{\mathsf T}D_i\mathbf d_i
 &=\sum_{b\in\mathcal I_i}\|Q_b^{\mathrm{loc}}\mathbf d_b\|^2.
\end{aligned}
\label{eq:fb-local-identity}
\end{equation}
The second identity also shows that $D_i\succeq0$.

Define the four proximal matrices by
\begin{equation}
P_i\triangleq
\alpha I+\rho D_i-\rho Q_i^{\mathsf T}Q_i,
\qquad i=1,\ldots,4.
\label{eq:fb-P}
\end{equation}
They satisfy
\begin{equation}
P_i+\rho Q_i^{\mathsf T}Q_i
=\alpha I+\rho D_i.
\label{eq:fb-cancellation}
\end{equation}
This identity makes the total quadratic term of each
four-block update equal to the sum of its local
quadratic terms, as shown next.
For a symmetric matrix $M$, write
$\|\mathbf v\|_M^2=\mathbf v^{\mathsf T}M\mathbf v$.
Under Assumption \ref{assump:proximal}, each $P_i$ is positive definite,
as shown in Appendix~\ref{app:fb-descent}.

\subsection{Equivalence of Four-Block PP-ADMM
to Updates \eqref{eq:step_x}, \eqref{eq:routing}, \eqref{eq:nodeupdate}, \eqref{eq:s_update}, and \eqref{eq:lamupd}--\eqref{eq:omegaupd}:}

We now show that the four-block primal and dual updates
\eqref{eq:fb-primal-update}--\eqref{eq:fb-dual-update},
with the proximal matrices in~\eqref{eq:fb-P},
give the local primal updates \eqref{eq:step_x}, \eqref{eq:routing}, \eqref{eq:nodeupdate}, and \eqref{eq:s_update}
and the dual updates \eqref{eq:lamupd}--\eqref{eq:omegaupd}.

For the standard problem~\eqref{eq:fb-problem}, the
four primal updates are
\begin{equation}
\begin{aligned}
\boldsymbol\xi_i^{k+1}
=\arg\min_{\mathbf v}\ &\Bigg\{
\Phi_i(\mathbf v)
+\frac12\|\mathbf v-\boldsymbol\xi_i^k\|_{P_i}^2\\
&+\frac{\rho}{2}
\left\|Q_i\mathbf v+
\sum_{j\ne i}Q_j\boldsymbol\xi_j^k
+\frac{\boldsymbol\Lambda^k}{\rho}\right\|^2
\Bigg\},
\end{aligned}
\label{eq:fb-primal-update}
\end{equation}
for $i=1,\ldots,4$, followed by
\begin{equation}
\boldsymbol\Lambda^{k+1}
=\boldsymbol\Lambda^k+\tau\rho Q\boldsymbol\xi^{k+1}.
\label{eq:fb-dual-update}
\end{equation}
These are the proximal parallel ADMM updates
in~\cite{deng2017parallel}, with four primal blocks.
All four primal updates use iteration-$k$ values of
the other blocks and can therefore be computed in parallel.

Fix a block $i$ and write
\[
\mathbf a^k
=Q\boldsymbol\xi^k+\boldsymbol\Lambda^k/\rho,
\qquad
\mathbf d_i=\mathbf v-\boldsymbol\xi_i^k.
\]
By~\eqref{eq:fb-local-identity} and
\eqref{eq:fb-cancellation}, the quadratic terms in
\eqref{eq:fb-primal-update} satisfy
\begin{equation}
\begin{aligned}
&\frac{\rho}{2}\|\mathbf a^k+Q_i\mathbf d_i\|^2
+\frac12\mathbf d_i^{\mathsf T}P_i\mathbf d_i\\
&=
\frac{\rho}{2}\|\mathbf a^k\|^2
+\sum_{b\in\mathcal I_i}
\Bigl[
\rho\langle\mathbf a^k,Q_b^{\mathrm{loc}}\mathbf d_b\rangle\\
&\hspace{30mm}
+\frac{\rho}{2}\|Q_b^{\mathrm{loc}}\mathbf d_b\|^2
+\frac{\alpha}{2}\|\mathbf d_b\|^2
\Bigr].
\end{aligned}
\label{eq:fb-expansion}
\end{equation}
The first term in~\eqref{eq:fb-expansion} is constant
in the update variables.
Each term in the sum depends on the variables of one
local update and equals its quadratic terms in
\eqref{eq:step_x}, \eqref{eq:routing}, \eqref{eq:nodeupdate}, or \eqref{eq:s_update}, up to a constant.

The objective and feasible set also separate across
these local updates. Hence, minimizing over an entire
block is equivalent to solving its local updates
separately. Since $\alpha>0$, each local minimization
has a unique solution, so the primal updates coincide.

The primal variables produced by the four-block
update~\eqref{eq:fb-primal-update} therefore equal those
produced by \eqref{eq:step_x}, \eqref{eq:routing}, \eqref{eq:nodeupdate}, and \eqref{eq:s_update}.
Their equality residuals
$Q\boldsymbol\xi^{k+1}$ in~\eqref{eq:fb-equalities}
are consequently identical.
The four-block dual update~\eqref{eq:fb-dual-update}
then agrees with \eqref{eq:lamupd}--\eqref{eq:omegaupd}.
Starting from the same initial point and using the
same parameters, these updates generate the same
primal--dual iterates at every iteration.

\section{Proof of Theorem~\ref{thm:linear_convergence}}
\label{apx:linear_convergence}
\subsection{Notation and KKT Residual}
We first define the KKT residual and the matrix $G$
used in the proof. 
Let
\begin{equation}
\mathcal D
\triangleq
\left\{\mathbf z:
\boldsymbol\xi_i\in\mathcal C_i,\ i=1,\ldots,4\right\},
\label{eq:fb-domain}
\end{equation}
where the dual coordinates are unrestricted.
The coupling equalities need not hold at a point
in $\mathcal D$.

For a nonempty closed convex set $\mathcal C$, define
the Euclidean projection by
\begin{equation}
\Pi_{\mathcal C}(\mathbf u)
\triangleq
\arg\min_{\mathbf v\in\mathcal C}\|\mathbf u-\mathbf v\|^2.
\label{eq:fb-projection}
\end{equation}
The minimizer is unique. Moreover, for $\mathbf u\in\mathcal C$, the equality
$\mathbf u=\Pi_{\mathcal C}(\mathbf u+\mathbf v)$
holds if and only if
\begin{equation}
\langle\mathbf v,\mathbf w-\mathbf u\rangle\le0
\qquad\text{for all }\mathbf w\in\mathcal C.
\label{eq:fb-projection-condition}
\end{equation}

The projected residual uses the gradients of the utility
terms and the quadratic penalty in \eqref{eq:auglag}.
These gradients are continuous on the local domains
by the assumptions on $U_f$ and $G_w$ and the matching
derivatives in \eqref{eq:g_definition}.
The capacity constraints remain in $\mathcal C_3$
and are not differentiated.

For $\mathbf z\in\mathcal D$, define the KKT residual
$\mathbf R(\mathbf z)$ of Problem~\eqref{eq:fb-problem} by
\begin{equation}
\mathbf R(\mathbf z)\triangleq
\begin{bmatrix}
\mathbf x-\Pi_{\mathcal X}\!\left(
\mathbf x+\nabla_{\mathbf x}\mathcal L_\rho(\mathbf z)
\right)
\\[2mm]
\mathbf r-\Pi_{\mathcal R}\!\left(
\mathbf r+\nabla_{\mathbf r}\mathcal L_\rho(\mathbf z)
\right)
\\[2mm]
\begin{bmatrix}
\mathbf p\\
\mathbf q\\
\mathbf y
\end{bmatrix}
-\Pi_{\mathcal H}\!\left(
\begin{bmatrix}
\mathbf p\\
\mathbf q\\
\mathbf y
\end{bmatrix}
+
\begin{bmatrix}
\nabla_{\mathbf p}\mathcal L_\rho(\mathbf z)\\
\nabla_{\mathbf q}\mathcal L_\rho(\mathbf z)\\
\nabla_{\mathbf y}\mathcal L_\rho(\mathbf z)
\end{bmatrix}
\right)
\\[2mm]
\mathbf s-\Pi_{\mathcal S_{\mathrm{snr}}}\!\left(
\mathbf s+\nabla_{\mathbf s}\mathcal L_\rho(\mathbf z)
\right)
\\[2mm]
\left(
\begin{aligned}
&\sum_{l\in O(n)}\sum_{m\in\mathcal M_l}r_{ml}^{d}-\sum_{l\in I(n)}\sum_{m\in\mathcal M_l}r_{ml}^{d}\\
&-\sum_{f\in\mathcal F}
  \mathbf1_{\{s_f=n,\,d_f=d\}}x_f
\end{aligned}
\right)_{\substack{n,d\in\mathcal N\\n\ne d}}
\\[2mm]
\bigg(
\displaystyle\sum_{d\in\mathcal N}r_{ml}^{d}-q_{ml}
\bigg)_{{l\in\mathcal L,\ m\in\mathcal M_l}}
\\[2mm]
\bigg(
s_{w,m}
-\displaystyle\sum_{l\in\mathcal L_m}A_{w,ml}p_{ml}
\bigg)_{m\in\mathcal M,\ w\in\mathcal W_m}
\\[2mm]
\bigg(
y_{ml}^{(i)}-y_{ml}^{(j)}
\bigg)_{l=(i,j)\in\mathcal L,\ m\in\mathcal M_l}
\end{bmatrix}.
\label{eq:fb-kkt-residual}
\end{equation}
The feasible sets are defined in~\eqref{eq:fb-local-sets},
and $\mathcal L_\rho$ is given in \eqref{eq:auglag}.
The first four block rows are the projection differences
for $\mathbf x$, $\mathbf r$,
$(\mathbf p,\mathbf q,\mathbf y)$, and $\mathbf s$.
The last four block rows are the residuals of
\eqref{eq:P1_flow}, \eqref{eq:P1_rate}, \eqref{eq:p1-snragg}, and \eqref{eq:consensus}, respectively, with the
same ordering as in~\eqref{eq:fb-equality-residual}.

By \eqref{eq:fb-projection-condition}, each of the first four block rows is zero
exactly when the corresponding block satisfies the
first-order optimality condition for maximizing \eqref{eq:auglag}
over its feasible set, with the other blocks fixed.
When the last four block rows are also zero, these
conditions are the KKT conditions of
Problem~\eqref{eq:fb-problem}.
Convexity and Assumption 1 make these conditions
necessary and sufficient for primal--dual optimality.
Hence, for $\mathbf z\in\mathcal D$,
$\mathbf R(\mathbf z)=0$ holds if and only if
$\mathbf z\in\mathcal Z^*$.

For each primal group, define
\begin{equation}
H_i
\triangleq P_i+\rho Q_i^{\mathsf T}Q_i
=\alpha I+\rho D_i,
\qquad i=1,\ldots,4.
\label{eq:fb-H}
\end{equation}
Define the primal matrix $G_\xi$ and the complete
Lyapunov matrix $G$ by
\begin{equation}
\begin{aligned}
G_{\xi}
&\triangleq
\begin{bmatrix}
H_1 & 0   & 0   & 0\\
0   & H_2 & 0   & 0\\
0   & 0   & H_3 & 0\\
0   & 0   & 0   & H_4
\end{bmatrix},\\
G
&\triangleq
\begin{bmatrix}
G_{\xi} & 0\\
0 & \dfrac{1}{\rho\tau}I
\end{bmatrix}.
\end{aligned}
\label{eq:fb-G}
\end{equation}
Since $D_i\succeq0$ and $\alpha,\rho,\tau>0$,
we have $G\succ0$.

The Lyapunov function in~\eqref{eq:lyapunov} is
\begin{equation}
\begin{aligned}
V(\mathbf z;\mathbf z^*)
&=
\frac{1}{\rho\tau}
\|\boldsymbol\Lambda-\boldsymbol\Lambda^*\|^2+
\sum_{i=1}^{4}
\|\boldsymbol\xi_i-\boldsymbol\xi_i^*\|_{H_i}^2\\
&=\|\mathbf z-\mathbf z^*\|_G^2.
\end{aligned}
\label{eq:fb-expanded-lyapunov}
\end{equation}
By~\eqref{eq:fb-local-identity},
\begin{equation}
\begin{aligned}
&\sum_{i=1}^{4}
\|\boldsymbol\xi_i-\boldsymbol\xi_i^*\|_{H_i}^2\\
&=
\alpha\|\boldsymbol\xi-\boldsymbol\xi^*\|^2
+\rho\sum_{i=1}^{4}\sum_{b\in\mathcal I_i}
\|Q_b^{\mathrm{loc}}(\mathbf u_b-\mathbf u_b^*)\|^2.
\end{aligned}
\label{eq:fb-metric-identity}
\end{equation}
Thus, the four-block expression collects the same local
quadratic terms without changing their values.

\subsection{Proof of Theorem~\ref{thm:linear_convergence}}
The proof proceeds in four steps. Step 1 extends the local KKT error bound \eqref{eq:kkt_error_bound} to the
uniform bound on any compact set $\mathcal K\subseteq\mathcal D$ containing a saddle point.
Step 2 establishes sufficient decrease in the Lyapunov function
$V_{\mathcal Z^*}^{k}$ and derives a compact set containing all
PP-ADMM iterates $\mathbf z^k$ with $k\ge1$. Step 3 bounds the KKT residual
$\|\mathbf R(\mathbf z^{k+1})\|$ by the change between two consecutive
iterates. Step 4 combines these results to prove linear
convergence.

Throughout this proof, the network parameters and problem data
are fixed. Starting from any finite initial point $\mathbf z^0$,
the proximal terms make every local primal block subproblem
coercive and strongly convex in the minimization form. Hence, every primal
update has a unique finite solution. The dual updates are
affine functions of finite primal variables and finite previous
dual variables. Therefore, PP-ADMM generates a well-defined
deterministic sequence.

\subsubsection{Step 1: A uniform KKT error bound holds over compact
sets}
Assumption \ref{assump:kkt_error} gives a local error bound near every saddle point
$\mathbf z^*\in\mathcal Z^*$. The following lemma shows that
this local bound can be made uniform over any compact set
containing at least one saddle point.

\begin{lemma}[Uniform KKT Error Bound]\label{lem:uniform_error_bound}
Under Assumption \ref{assump:kkt_error}, let $\mathcal K\subseteq\mathcal D$ be any compact set
satisfying $\mathcal Z^*\cap\mathcal K\ne\varnothing$. Then the KKT residual
uniformly bounds how far any primal--dual point
$\mathbf z\in\mathcal K$ is from the saddle-point set
$\mathcal Z^*$. Specifically, there exists $\bar\kappa>0$ such that
\begin{equation}
 \inf_{\widehat{\mathbf z}\in\mathcal Z^*}
 \|\mathbf z-\widehat{\mathbf z}\|
 \le
 \bar\kappa\|\mathbf R(\mathbf z)\|,
 \qquad \forall\,\mathbf z\in\mathcal K.
 \label{eq:uniform-kkt-bound}
\end{equation}
\end{lemma}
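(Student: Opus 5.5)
The plan is a standard compactness argument: patch the local bounds of Assumption~\ref{assump:kkt_error} together with finitely many balls near $\mathcal Z^*\cap\mathcal K$, and use a strictly positive lower bound on $\|\mathbf R\|$ on the remainder of $\mathcal K$.

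\textbf{Preliminaries.} I would first record two facts.

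First, $\mathbf R$ is continuous on $\mathcal D$. The projections $\Pi_{\mathcal X},\Pi_{\mathcal R},\Pi_{\mathcal H},\Pi_{\mathcal S_{\mathrm{snr}}}$ are nonexpansive. The gradients of $\mathcal L_\rho$ are continuous on the local domains, as noted before \eqref{eq:fb-kkt-residual}. The equality residuals are linear.

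Second, $\mathcal Z^*\cap\mathcal D=\{\mathbf z\in\mathcal D:\mathbf R(\mathbf z)=0\}$. Since $\mathcal D$ is closed, this set is closed, and therefore $\mathcal Z^*\cap\mathcal K$ is compact and nonempty.

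With this, the function $d(\mathbf z)\triangleq\inf_{\widehat{\mathbf z}\in\mathcal Z^*}\|\mathbf z-\widehat{\mathbf z}\|$ is finite and $1$-Lipschitz. Fix any $\mathbf z_0^*\in\mathcal Z^*\cap\mathcal K$. Then $d(\mathbf z)\le\|\mathbf z-\mathbf z_0^*\|\le M\triangleq\operatorname{diam}(\mathcal K)<\infty$ for all $\mathbf z\in\mathcal K$.

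\textbf{Near the saddle points.} For each $\mathbf z^*\in\mathcal Z^*\cap\mathcal K$, let $B(\mathbf z^*,\epsilon_{\mathbf z^*})$ be the open ball of radius $\epsilon_{\mathbf z^*}$ from Assumption~\ref{assump:kkt_error}. These balls cover the compact set $\mathcal Z^*\cap\mathcal K$, so a finite subcover exists, indexed by $\mathbf z_1^*,\dots,\mathbf z_N^*$. Let $\mathcal U$ denote the union of these $N$ balls.

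Take any $\mathbf z\in\mathcal K\cap\mathcal U$. Then $\mathbf z\in\mathcal D$, so its local primal blocks are feasible, and $\mathbf z$ lies in some ball $B(\mathbf z_j^*,\epsilon_{\mathbf z_j^*})$. Assumption~\ref{assump:kkt_error} then gives $d(\mathbf z)\le\kappa_{\mathbf z_j^*}\|\mathbf R(\mathbf z)\|\le\kappa_{\max}\|\mathbf R(\mathbf z)\|$, where $\kappa_{\max}\triangleq\max_j\kappa_{\mathbf z_j^*}$.

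\textbf{Away from the saddle points.} The set $\mathcal K\setminus\mathcal U$ is compact, since it is a closed subset of a compact set. It contains no point of $\mathcal Z^*$, because $\mathcal U$ covers $\mathcal Z^*\cap\mathcal K$.

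If $\mathcal K\setminus\mathcal U$ is empty, there is nothing to prove in this case. Otherwise, $\|\mathbf R\|$ is continuous on this set and strictly positive there, because $\mathbf R(\mathbf z)=0$ if and only if $\mathbf z\in\mathcal Z^*$ on $\mathcal D$. Hence $\|\mathbf R\|$ attains a minimum $r_0>0$ on $\mathcal K\setminus\mathcal U$. For every $\mathbf z\in\mathcal K\setminus\mathcal U$ this gives $d(\mathbf z)\le M\le (M/r_0)\|\mathbf R(\mathbf z)\|$.

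\textbf{Conclusion and main obstacle.} Setting $\bar\kappa\triangleq\max\{\kappa_{\max},M/r_0\}$ yields \eqref{eq:uniform-kkt-bound} on all of $\mathcal K$.

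The only delicate point is the continuity of $\mathbf R$, together with the exact characterization $\mathbf R^{-1}(0)\cap\mathcal D=\mathcal Z^*\cap\mathcal D$. Continuity of $\mathbf R$ rests on continuity of $\nabla_{\mathbf s}\mathcal L_\rho$, which in turn needs $g_L$ to be $C^1$ at the junction $s_L$ of the two pieces in \eqref{eq:g_definition}. This holds because the tangent-line extension matches $h_L'(s_L)$. I would verify this explicitly. The remaining steps are routine compactness arguments.
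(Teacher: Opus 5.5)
Your proposal is correct and follows essentially the same route as the paper's proof: a finite subcover of the compact set $\mathcal Z^*\cap\mathcal K$ by the balls from Assumption~\ref{assump:kkt_error}, combined with a strictly positive minimum of $\|\mathbf R\|$ on the compact remainder $\mathcal K\setminus\mathcal U$ and a bounded distance to a fixed saddle point in $\mathcal K$. The only difference is cosmetic: you bound that distance by $\operatorname{diam}(\mathcal K)$, whereas the paper uses $d_{\mathrm{ref}}=\max_{\mathbf z\in\mathcal K}\|\mathbf z-\mathbf z^{\mathrm{ref}}\|$; your remark that continuity of $\mathbf R$ relies on $g_L$ being $C^1$ at $s_L$ matches the paper's own justification.
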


\begin{IEEEproof} See Appendix \ref{apx:uniform_error_bound}.
\end{IEEEproof}
Thus, the KKT residual $\|\mathbf R(\mathbf z)\|$
controls how close a point $\mathbf z\in\mathcal K$ is to the
saddle-point set $\mathcal Z^*$.

\subsubsection{Step 2: Each PP-ADMM iteration gives sufficient descent
of the Lyapunov function \eqref{eq:lyapunov}}
The following lemma shows that the weighted error
$\|\mathbf z^k-\mathbf z^*\|_{\mathbf G}^{2}$ decreases from
$\mathbf z^k$ to $\mathbf z^{k+1}$ for every saddle point
$\mathbf z^*\in\mathcal Z^*$. Since $V_{\mathcal Z^*}^{k}$ is
the smallest such weighted error over $\mathcal Z^*$, this
result provides sufficient descent of the Lyapunov function.

\begin{lemma}[Lyapunov Sufficient Descent]\label{lem:lyapunov_descent}
Under Assumption \ref{assump:proximal}, there exists $\nu>0$ such that, for every
$\mathbf z^*\in\mathcal Z^*$ and every $k\geq0$,
\begin{equation}
 \|\mathbf z^k-\mathbf z^*\|_{\mathbf G}^{2}
 -
 \|\mathbf z^{k+1}-\mathbf z^*\|_{\mathbf G}^{2}
 \geq
 \nu\|\mathbf z^{k+1}-\mathbf z^k\|_{\mathbf G}^{2}.
 \label{eq:sufficient-descent}
\end{equation}
\end{lemma}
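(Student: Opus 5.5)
Our plan follows the standard Jacobi-proximal ADMM analysis of Deng et al.~\cite{deng2017parallel}, adapted to the four-block form \eqref{eq:fb-problem} with proximal matrices $P_i$ in \eqref{eq:fb-P}.

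\textbf{Step 1: per-block optimality.} Fix $\mathbf z^*\in\mathcal Z^*$ and write the optimality condition of each block update \eqref{eq:fb-primal-update}. Using $\widetilde{\boldsymbol\Lambda}\triangleq\boldsymbol\Lambda^k+\rho Q\boldsymbol\xi^{k+1}$, block $i$ gives
\[
-Q_i^{\mathsf T}\widetilde{\boldsymbol\Lambda}+\rho Q_i^{\mathsf T}\sum_{j\ne i}Q_j(\boldsymbol\xi_j^{k+1}-\boldsymbol\xi_j^k)-P_i(\boldsymbol\xi_i^{k+1}-\boldsymbol\xi_i^k)\in\partial\Phi_i(\boldsymbol\xi_i^{k+1}).
\]
The saddle point satisfies $-Q_i^{\mathsf T}\boldsymbol\Lambda^*\in\partial\Phi_i(\boldsymbol\xi_i^*)$ and $Q\boldsymbol\xi^*=0$.

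\textbf{Step 2: monotonicity inequality.} Add monotonicity of $\partial\Phi_i$ over $i$. With $\mathbf d_i=\boldsymbol\xi_i^{k+1}-\boldsymbol\xi_i^k$, this gives
\[
\sum_i\langle \mathbf d_i,P_i(\boldsymbol\xi_i^{k+1}-\boldsymbol\xi_i^*)\rangle
+\langle\widetilde{\boldsymbol\Lambda}-\boldsymbol\Lambda^*,Q\boldsymbol\xi^{k+1}\rangle
\le\rho\sum_i\Big\langle Q_i(\boldsymbol\xi_i^{k+1}-\boldsymbol\xi_i^*),\sum_{j\ne i}Q_j\mathbf d_j\Big\rangle .
\]

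\textbf{Step 3: convert to squared norms.} The dual update gives $Q\boldsymbol\xi^{k+1}=(\boldsymbol\Lambda^{k+1}-\boldsymbol\Lambda^k)/(\tau\rho)$. Apply the polarization identity $2\langle a-b,a-c\rangle=\|a-c\|^2-\|b-c\|^2+\|a-b\|^2$ in the $P_i$-norms and in the $\frac1{\rho\tau}$-weighted dual norm. The terms $\rho\langle Q_i(\cdot),Q_i\mathbf d_i\rangle$ are missing from the right-hand side because of the $j\ne i$ restriction. Add them to both sides, so that $P_i$ becomes $H_i=P_i+\rho Q_i^{\mathsf T}Q_i$ as in \eqref{eq:fb-H}. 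The right-hand side then becomes $\rho\langle Q(\boldsymbol\xi^{k+1}-\boldsymbol\xi^*),Q\mathbf d\rangle=\frac1\tau\langle\boldsymbol\Lambda^{k+1}-\boldsymbol\Lambda^k,Q\mathbf d\rangle$, using $Q\boldsymbol\xi^*=0$. This yields
\[
\|\mathbf z^k-\mathbf z^*\|_G^2-\|\mathbf z^{k+1}-\mathbf z^*\|_G^2\ge
\sum_i\|\mathbf d_i\|_{P_i}^2+\tfrac{2-\tau}{\rho\tau^2}\|\boldsymbol\Lambda^{k+1}-\boldsymbol\Lambda^k\|^2-\tfrac{2}{\tau}\langle\boldsymbol\Lambda^{k+1}-\boldsymbol\Lambda^k,Q\mathbf d\rangle .
\]
Exact constants and signs will need careful bookkeeping.

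\textbf{Step 4: control the cross term.} This is the main obstacle, since it is where the Jacobi (parallel) structure costs something. Bound the cross term by Young's inequality with weight tuned by $\delta$. Also use $\|Q\mathbf d\|^2\le4\sum_i\|Q_i\mathbf d_i\|^2$, valid for four blocks. The result is a right-hand side of at least
\[
\sum_i\|\mathbf d_i\|^2_{P_i-\frac{4\rho}{2-\tau-\delta}Q_i^{\mathsf T}Q_i+\epsilon\rho Q_i^{\mathsf T}Q_i}+\tfrac{\delta}{\rho\tau^2}\|\Delta\boldsymbol\Lambda\|^2 .
\]
Since $P_i=\alpha I+\rho D_i-\rho Q_i^{\mathsf T}Q_i$, condition \eqref{eq:fb-alpha} is exactly what makes $P_i-\frac{4\rho}{2-\tau-\delta}Q_i^{\mathsf T}Q_i\succ0$, after absorbing the $-\rho Q_i^{\mathsf T}Q_i$ term consistently with the factor chosen. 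The same computation shows $P_i\succ0$. Each of these matrices is then $\succeq c\,H_i$ for some $c>0$, because $H_i$ is bounded. Taking $\nu$ as the minimum of these $c$'s and $\delta/\tau$ gives \eqref{eq:sufficient-descent} uniformly in $\mathbf z^*$ and $k$. If the Young split does not exactly match the constant $4/(2-\tau-\delta)$, I would rebalance the Young weight between the dual and primal parts. The slack $\delta$ is there precisely to leave a positive multiple of $\|\Delta\boldsymbol\Lambda\|^2$.
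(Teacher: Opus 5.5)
Your route is the paper's proof. It uses the same block optimality conditions with $\boldsymbol\Lambda^k+\rho Q\boldsymbol\xi^{k+1}$ and monotonicity of $\partial\Phi_i$ against the saddle-point inclusions. It then absorbs the missing $j=i$ terms so that $P_i$ becomes $H_i$, and uses $Q\boldsymbol\xi^{k+1}=\mathbf d_\Lambda/(\rho\tau)$ to produce the coefficient $(2-\tau)/(\rho\tau^2)$. It finishes with a Young split of the cross term and $T\succeq\nu G$. Your single Young inequality with weight $2-\tau-\delta$, followed by $\|Q\mathbf d\|^2\le4\sum_i\|Q_i\mathbf d_i\|^2$, gives exactly the same constants as the paper's four Young inequalities with $\epsilon_i=(2-\tau-\delta)/4$.

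One bookkeeping error needs fixing, because as written it breaks the link to Assumption~\ref{assump:proximal}. After you add $\rho\langle Q_i(\boldsymbol\xi_i^{k+1}-\boldsymbol\xi_i^*),Q_i\mathbf d_i\rangle$ to both sides, the polarization identity is applied in the $H_i$-norm. So the primal term in your Step~3 display is $\sum_i\|\mathbf d_i\|_{H_i}^2$, not $\sum_i\|\mathbf d_i\|_{P_i}^2$. Your $P_i$ version is true, since $H_i\succeq P_i$, but it discards $\rho\|Q_i\mathbf d_i\|^2$, and that is exactly the margin the assumption relies on.

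Condition \eqref{eq:fb-alpha} says precisely that $K_i\triangleq H_i-\frac{4\rho}{2-\tau-\delta}Q_i^{\mathsf T}Q_i=\alpha I+\rho D_i-\frac{4\rho}{2-\tau-\delta}Q_i^{\mathsf T}Q_i\succ0$. The matrix you wrote, $P_i-\frac{4\rho}{2-\tau-\delta}Q_i^{\mathsf T}Q_i=K_i-\rho Q_i^{\mathsf T}Q_i$, would need the stronger condition $\alpha I+\rho D_i-(1+\frac{4}{2-\tau-\delta})\rho Q_i^{\mathsf T}Q_i\succ0$. Condition \eqref{eq:fb-alpha} does not guarantee this, so your claim that \eqref{eq:fb-alpha} is ``exactly'' what makes it positive definite is false.

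With $H_i$ in place, everything closes:
\begin{itemize}
\item your unexplained term $\epsilon\rho Q_i^{\mathsf T}Q_i$ is just $\rho Q_i^{\mathsf T}Q_i$, i.e.\ $\epsilon=1$;
\item the weights are exactly $K_i$, and no rebalancing of the Young split is needed;
\item $P_i=K_i+\rho(\frac{4}{2-\tau-\delta}-1)Q_i^{\mathsf T}Q_i\succ0$ follows;
\item $\nu=\min\{\min_i\lambda_{\min}(H_i^{-1/2}K_iH_i^{-1/2}),\,\delta/\tau\}$, which coincides with the paper's $\lambda_{\min}(G^{-1/2}TG^{-1/2})$ because $T$ and $G$ are block diagonal with matching blocks.
\end{itemize}
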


\begin{IEEEproof}
    See Appendix \ref{app:fb-descent}.
\end{IEEEproof} 

We next show that $\mathcal Z^*$ is nonempty. The bounds in
\eqref{eq:P1_xbnd} and \eqref{eq:P1_pbnd} show that $x_f$ and
$p_{ml}$ are bounded. The constraints $y_{ml}^{(n)}\geq 0$ and
\eqref{eq:local_exclusive} imply
$0\leq y_{ml}^{(n)}\leq 1$. Since $p_{ml}$ is
bounded, the SNR equality \eqref{eq:p1-snragg} also bounds
$s_{w,m}$. Since $C_{ml}(p,y)$ is continuous over the bounded
domain of $(p,y)$, the capacity constraints bound $q_{ml}$.
Moreover, at every feasible point, \eqref{eq:P1_rate} and
$r_{ml}^{d}\geq 0$ give
\begin{equation*}
 0\leq r_{ml}^{d}\leq q_{ml}.
\end{equation*}
Therefore, the primal feasible set is closed and bounded, and
hence compact. The objective is continuous over this set, so a
primal optimal solution exists. Assumption~\ref{assump:slater}
guarantees strong duality and that the dual optimum is attained
by at least one finite dual vector. Hence,
$\mathcal Z^*\neq\emptyset$.

Fix any $\mathbf z^*\in\mathcal Z^*$, and write
$\mathbf z^*=
(\boldsymbol\xi^*,\boldsymbol\Lambda^*)$.
Dropping the nonnegative right-hand side of
\eqref{eq:sufficient-descent} and applying the resulting
inequality recursively give
\begin{equation}\label{eq:thm1_z_bound}
\!\! \|\mathbf z^{k+1}-\mathbf z^*\|_{\mathbf G}^2
 \leq
 \|\mathbf z^k-\mathbf z^*\|_{\mathbf G}^2
 \leq
 \|\mathbf z^0-\mathbf z^*\|_{\mathbf G}^2,
 \quad k\geq 0.
\end{equation}
Since $\mathbf G\succ\mathbf 0$, its smallest eigenvalue
$\lambda_{\min}(\mathbf G)$ is positive. Therefore,
\begin{equation}
 \|\mathbf z^k-\mathbf z^*\|^2
 \leq
 \frac{1}{\lambda_{\min}(\mathbf G)}
 \|\mathbf z^0-\mathbf z^*\|_{\mathbf G}^2,
 \qquad k\geq 0.
\end{equation}
Because $\mathbf z^k$ contains all primal and dual variables,
this inequality uniformly bounds the complete primal--dual
sequence. In particular, since the dual block of $\mathbf G$
is $\frac{1}{\rho\tau}\mathbf I$,
\begin{equation}
 \frac{1}{\rho\tau}
 \|\boldsymbol\Lambda^k-\boldsymbol\Lambda^*\|^2
 \leq
 \|\mathbf z^k-\mathbf z^*\|_{\mathbf G}^2
 \leq
 \|\mathbf z^0-\mathbf z^*\|_{\mathbf G}^2.
\end{equation}
Hence,
\begin{equation}
 \|\boldsymbol\Lambda^k\|
 \leq
 \|\boldsymbol\Lambda^*\|
 +
 \sqrt{\rho\tau}\,
 \|\mathbf z^0-\mathbf z^*\|_{\mathbf G}
 <\infty,
 \qquad k\geq 0.
\end{equation}
Thus, the complete sequence of dual iterates is uniformly
bounded.

Define
\begin{equation}
\mathcal K
\triangleq
\left\{
\mathbf z\in\mathcal D:
\|\mathbf z-\mathbf z^*\|_G^2
\le
\|\mathbf z^0-\mathbf z^*\|_G^2
\right\}.
\label{eq:thm1_K_def}
\end{equation}
Since $\mathcal D$ is closed and $G\succ0$,
$\mathcal K$ is compact. It contains the fixed saddle point $\mathbf z^*$
and every PP-ADMM iterate $\mathbf z^k$ with $k\ge1$.

\subsubsection{Step 3: One PP-ADMM iteration bounds the KKT residual}
The iterate change controlled in Step 2 provides the following
bound on the KKT residual
$\|\mathbf R(\mathbf z^{k+1})\|$.

\begin{lemma}[One-Step KKT Residual Bound]\label{lem:residual_bound}
Under Assumption \ref{assump:proximal}, there exists $C_R>0$ such that
\begin{equation}
 \|\mathbf R(\mathbf z^{k+1})\|
 \leq
 C_R\|\mathbf z^{k+1}-\mathbf z^k\|_{\mathbf G},
 \qquad k\geq0.
 \label{eq:one-step-residual}
\end{equation}
\end{lemma}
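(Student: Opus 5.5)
The plan is to use the optimality condition of each block subproblem in \eqref{eq:fb-primal-update}, written in projection form, and compare it with the corresponding projection row of $\mathbf R(\mathbf z^{k+1})$. The dual rows are handled separately. Throughout, I use that the projection $\Pi_{\mathcal C}$ is nonexpansive and that the gradients of the smooth parts of $\mathcal L_\rho$ are Lipschitz on the compact set $\mathcal K$ from Step~2, which contains every iterate $\mathbf z^k$ with $k\ge1$.

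\textbf{Dual rows.} The last four block rows of $\mathbf R(\mathbf z^{k+1})$ stack into $Q\boldsymbol\xi^{k+1}$. By \eqref{eq:fb-dual-update},
\[
Q\boldsymbol\xi^{k+1}=(\boldsymbol\Lambda^{k+1}-\boldsymbol\Lambda^k)/(\tau\rho).
\]
Its norm is therefore at most $\sqrt{1/(\rho\tau)}\,\|\mathbf z^{k+1}-\mathbf z^k\|_G$, because the dual block of $G$ is $\frac{1}{\rho\tau}I$.

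\textbf{Primal rows.} Fix block $i$. The first-order condition of \eqref{eq:fb-primal-update} at the unique minimizer $\boldsymbol\xi_i^{k+1}\in\mathcal C_i$ reads, by \eqref{eq:fb-projection-condition},
\[
\boldsymbol\xi_i^{k+1}=\Pi_{\mathcal C_i}\big(\boldsymbol\xi_i^{k+1}-\nabla f_i(\boldsymbol\xi_i^{k+1})-Q_i^{\mathsf T}\mathbf v^k-P_i(\boldsymbol\xi_i^{k+1}-\boldsymbol\xi_i^k)\big),
\]
where
\[
\mathbf v^k=\boldsymbol\Lambda^k+\rho\big(Q_i\boldsymbol\xi_i^{k+1}+\textstyle\sum_{j\ne i}Q_j\boldsymbol\xi_j^k\big).
\]
For block $3$, where $f_3=0$ and $\mathcal C_3$ carries the capacity constraints, this is the normal-cone inclusion, which is again equivalent to the projection identity.

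The corresponding row of $\mathbf R(\mathbf z^{k+1})$ uses $\nabla_{\boldsymbol\xi_i}\mathcal L_\rho(\mathbf z^{k+1})=-\nabla f_i(\boldsymbol\xi_i^{k+1})-Q_i^{\mathsf T}\big(\boldsymbol\Lambda^{k+1}+\rho Q\boldsymbol\xi^{k+1}\big)$, after translating from the max form \eqref{eq:auglag} to the min form \eqref{eq:fb-problem}. Subtracting the fixed-point identity and applying nonexpansiveness of $\Pi_{\mathcal C_i}$ bounds the $i$-th projection row by the norm of the difference of the two projection arguments:
\[
\|Q_i^{\mathsf T}\|\,\Big(\|\boldsymbol\Lambda^{k+1}-\boldsymbol\Lambda^k\|+\rho\sum_{j\ne i}\|Q_j\|\,\|\boldsymbol\xi_j^{k+1}-\boldsymbol\xi_j^k\|\Big)+\|P_i\|\,\|\boldsymbol\xi_i^{k+1}-\boldsymbol\xi_i^k\|.
\]
Summing over $i$, combining with the dual-row bound, and using $\|\cdot\|\le\lambda_{\min}(G)^{-1/2}\|\cdot\|_G$ gives \eqref{eq:one-step-residual}. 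The constant $C_R$ depends only on $\rho,\tau,\alpha$, $\|Q_i\|$ and $\|P_i\|$.

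\textbf{Main obstacle: the sign and form conventions for block 4.} The gradient terms $\nabla f_i$ cancel exactly, since both expressions are evaluated at $\boldsymbol\xi_i^{k+1}$, so no Lipschitz constant is needed there. The real work is checking consistency on block $4$, whose objective $G_w\circ g_L$ is only $C^1$ at $s_L$. I would verify that the residual in \eqref{eq:fb-kkt-residual} uses exactly the gradient appearing in the subproblem optimality condition; the matching derivatives in \eqref{eq:g_definition} guarantee that this gradient exists and is continuous. Likewise, the indicator of the covertness set must be represented identically in $\mathcal C_4$ in both places, so that the same projection appears.

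If the paper's $\mathbf R$ instead evaluates $\nabla\mathcal L_\rho$ with unit step in a different metric, I would absorb this through the standard equivalence of projected residuals under different positive step sizes. That equivalence changes only the constant $C_R$.
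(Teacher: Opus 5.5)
Your proof is correct and follows the paper's argument. The paper also writes each block's optimality condition as $\mathbf e_i\in-\nabla_{\boldsymbol\xi_i}\mathcal L_\rho(\mathbf z^{k+1})+N_{\mathcal C_i}(\boldsymbol\xi_i^{k+1})$, where $\mathbf e_i=-H_i\mathbf d_i+\rho Q_i^{\mathsf T}Q\mathbf d+Q_i^{\mathsf T}\mathbf d_\Lambda$ is exactly the difference of your two projection arguments. It then uses nonexpansiveness of $\Pi_{\mathcal C_i}$ and $Q\boldsymbol\xi^{k+1}=\mathbf d_\Lambda/(\rho\tau)$, differing from you only in packaging the constant as $C_R=\|M_RG^{-1/2}\|_2$ rather than through termwise operator-norm bounds; and, as you note yourself, the $\nabla f_i$ terms cancel, so your Lipschitz-on-$\mathcal K$ remark and the block-4 concern are unnecessary beyond differentiability of $G_w\circ g_L$.
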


\begin{IEEEproof}
    See Appendix \ref{apx:residual_bound}.
\end{IEEEproof}

Therefore, when the PP-ADMM iterate changes approach zero, the KKT
residual also approaches zero.

\subsubsection{Step 4: The three bounds imply linear convergence}
We first show that a closest saddle point exists for every
iterate. The residual $\mathbf R (\mathbf z)$ is continuous on the closed set
$\mathcal D$. Therefore,
\[
\mathcal Z^*
=
\{\mathbf z\in\mathcal D:\mathbf R(\mathbf z)=0\}
\]
is closed. Moreover, since $\mathbf G\succ0$,
$\|\mathbf z^k-\mathbf z^*\|_{\mathbf G}^{2}$ is continuous and
grows without bound as $\|\mathbf z^*\|\rightarrow\infty$.
Therefore, the infimum defining $V_{\mathcal Z^*}^{k}$ is
attained. For every $k$, choose
$\mathbf z^{k,*}\in\mathcal Z^*$ such that
\begin{equation}
 V_{\mathcal Z^*}^{k}
 =
 \|\mathbf z^k-\mathbf z^{k,*}\|_{\mathbf G}^{2}.
 \label{eq:closest-saddle}
\end{equation}

Apply Lemma \ref{lem:lyapunov_descent} with
$\mathbf z^*=\mathbf z^{k,*}$. Since
$\mathbf z^{k,*}\in\mathcal Z^*$ is also a candidate in the
definition of $V_{\mathcal Z^*}^{k+1}$, we have
\begin{align}
 V_{\mathcal Z^*}^{k}-V_{\mathcal Z^*}^{k+1}
 &\geq
 \|\mathbf z^k-\mathbf z^{k,*}\|_{\mathbf G}^{2}
 -
 \|\mathbf z^{k+1}-\mathbf z^{k,*}\|_{\mathbf G}^{2}
 \nonumber\\
 &\geq
 \nu\|\mathbf z^{k+1}-\mathbf z^k\|_{\mathbf G}^{2}.
 \label{eq:V-descent}
\end{align}
Thus, the decrease in the Lyapunov function bounds the change
between $\mathbf z^k$ and $\mathbf z^{k+1}$.

We next use this iterate change to bound the remaining error.
By \eqref{eq:thm1_z_bound} and the definition of $\mathcal K$ in \eqref{eq:thm1_K_def},
$\mathbf z^{k+1}\in\mathcal K$.
Using
$\|\mathbf u\|_{\mathbf G}^{2}
\leq\lambda_{\max}(\mathbf G)\|\mathbf u\|^{2}$,
the uniform KKT error bound \eqref{eq:uniform-kkt-bound}, and
the one-step residual bound \eqref{eq:one-step-residual}, we
obtain
\begin{align}
 V_{\mathcal Z^*}^{k+1}
 &\leq
 \lambda_{\max}(\mathbf G)
 \left(
 \inf_{\mathbf z^*\in\mathcal Z^*}
 \|\mathbf z^{k+1}-\mathbf z^*\|
 \right)^2
 \nonumber\\
 &\leq
 \lambda_{\max}(\mathbf G)\bar\kappa^2
 \|\mathbf R(\mathbf z^{k+1})\|^2
 \nonumber\\
 &\leq
 \lambda_{\max}(\mathbf G)\bar\kappa^2 C_R^2
 \|\mathbf z^{k+1}-\mathbf z^k\|_{\mathbf G}^{2}.
 \label{eq:V-residual-bound}
\end{align}

Define
\begin{equation}
 C_V
 \triangleq
 \lambda_{\max}(\mathbf G)\bar\kappa^2 C_R^2>0.
 \label{eq:CV}
\end{equation}
Then \eqref{eq:V-residual-bound} becomes
\begin{equation}
 V_{\mathcal Z^*}^{k+1}
 \leq
 C_V\|\mathbf z^{k+1}-\mathbf z^k\|_{\mathbf G}^{2}.
 \label{eq:V-step-bound}
\end{equation}

Combining \eqref{eq:V-descent} and
\eqref{eq:V-step-bound} gives
\begin{equation}
 V_{\mathcal Z^*}^{k}-V_{\mathcal Z^*}^{k+1}
 \geq
 \nu\|\mathbf z^{k+1}-\mathbf z^k\|_{\mathbf G}^{2}
 \geq
 \frac{\nu}{C_V}V_{\mathcal Z^*}^{k+1}.
 \label{eq:combined-contraction}
\end{equation}
Hence,
\begin{equation}
 V_{\mathcal Z^*}^{k}
 \geq
 \left(1+\frac{\nu}{C_V}\right)
 V_{\mathcal Z^*}^{k+1}.
 \label{eq:contraction-rearranged}
\end{equation}
Equivalently,
\begin{equation}
V_{\mathcal Z^*}^{k+1}
\le
\beta V_{\mathcal Z^*}^{k},
\qquad
\beta\triangleq\frac{C_V}{C_V+\nu}\in(0,1).
\label{eq:q-contraction}
\end{equation}
This proves
\eqref{eq:qlinear}. Applying \eqref{eq:q-contraction} repeatedly
yields
\begin{equation}
 V_{\mathcal Z^*}^{k}
 \leq
 \beta^kV_{\mathcal Z^*}^{0}.
 \label{eq:Q-linear}
\end{equation}

It remains to prove convergence of the complete sequence
$\{\mathbf z^k\}_{k\geq0}$ to a single saddle point. Since
$V_{\mathcal Z^*}^{k+1}\geq0$, \eqref{eq:V-descent} and
\eqref{eq:Q-linear} give
\begin{equation}
 \nu\|\mathbf z^{k+1}-\mathbf z^k\|_{\mathbf G}^{2}
 \leq
 V_{\mathcal Z^*}^{k}-V_{\mathcal Z^*}^{k+1}
 \leq
 V_{\mathcal Z^*}^{k}
 \leq
 \beta^kV_{\mathcal Z^*}^{0}.
 \label{eq:step-decay-squared}
\end{equation}
Therefore,
\begin{equation}
 \|\mathbf z^{k+1}-\mathbf z^k\|_{\mathbf G}
 \leq
 \sqrt{\frac{V_{\mathcal Z^*}^{0}}{\nu}}\,\beta^{k/2}.
 \label{eq:step-decay}
\end{equation}

For any integers $r>s$, the triangle inequality and
\eqref{eq:step-decay} give
\begin{align}
 \|\mathbf z^r-\mathbf z^s\|_{\mathbf G}
 &\leq
 \sum_{j=s}^{r-1}
 \|\mathbf z^{j+1}-\mathbf z^j\|_{\mathbf G}
 \nonumber\\
 &\leq
 \sqrt{\frac{V_{\mathcal Z^*}^{0}}{\nu}}
 \sum_{j=s}^{\infty}\beta^{j/2}
 \nonumber\\
 &=
 \sqrt{\frac{V_{\mathcal Z^*}^{0}}{\nu}}\,
 \frac{\beta^{s/2}}{1-\sqrt \beta}.
 \label{eq:Cauchy-bound}
\end{align}
Because $0<\beta<1$, the last expression approaches zero as
$s\rightarrow\infty$. Thus, any two sufficiently late iterates
are arbitrarily close. Hence, $\{\mathbf z^k\}_{k\geq0}$ is a
Cauchy sequence and converges to some point $\mathbf z^\infty$.

Lemma \ref{lem:residual_bound} and \eqref{eq:step-decay} further
give
\begin{equation}
 \|\mathbf R(\mathbf z^{k+1})\|
 \leq
 C_R\sqrt{\frac{V_{\mathcal Z^*}^{0}}{\nu}}\,
 \beta^{k/2}
 \longrightarrow0.
 \label{eq:residual-convergence}
\end{equation}
Since $\mathbf z^{k+1}\in\mathcal D$ and $\mathcal D$
is closed, $\mathbf z^\infty\in\mathcal D$.
Continuity of $\mathbf R$ on $\mathcal D$ and \eqref{eq:residual-convergence}
then give $\mathbf R(\mathbf z^\infty)=0$.
Hence, $\mathbf z^\infty\in\mathcal Z^*$.

Finally, letting $r\rightarrow\infty$ and setting $s=k$ in
\eqref{eq:Cauchy-bound} give
\begin{equation}
 \|\mathbf z^k-\mathbf z^\infty\|_{\mathbf G}
 \leq
 \sqrt{\frac{V_{\mathcal Z^*}^{0}}{\nu}}\,
 \frac{\beta^{k/2}}{1-\sqrt \beta}.
 \label{eq:sequence-bound}
\end{equation}
Using
$\|\mathbf u\|^2\leq
\lambda_{\min}(\mathbf G)^{-1}\|\mathbf u\|_{\mathbf G}^2$
and squaring \eqref{eq:sequence-bound} yield
\begin{equation}
 \!\!\!\|\mathbf z^k-\mathbf z^\infty\|^{2}
 \leq
 C\beta^kV_{\mathcal Z^*}^{0},
 \quad
 C\triangleq
 \frac{1}{
 \lambda_{\min}(\mathbf G)\nu(1-\sqrt \beta)^2}.
 \label{eq:R-linear}
\end{equation}
This proves \eqref{eq:rlinear}. Therefore,
$V_{\mathcal Z^*}^{k}$ converges Q-linearly, and the complete
primal--dual sequence $\{\mathbf z^k\}_{k\geq0}$ converges
R-linearly to a saddle point of the problem under
consideration. By Proposition~\ref{prop:local_equivalence}, the
primal component of this saddle point is an optimal solution of
Problem \eqref{prob:P1} or \eqref{prob:unified}, with the same
optimal objective value.

\section{Proof of Lemma~\ref{lem:uniform_error_bound}}\label{apx:uniform_error_bound}

\begin{IEEEproof}
The proof proceeds in three steps. First, we apply the local
KKT error bound \eqref{eq:kkt_error_bound} to points
$\mathbf z\in\mathcal K$ near the saddle points
$\mathbf z^*\in\mathcal Z^*\cap\mathcal K$. Second, we bound
the remaining points in $\mathcal K$ outside these
neighborhoods. Finally, we combine the two bounds to obtain one
constant $\bar\kappa$ valid for every
$\mathbf z\in\mathcal K$.

\subsubsection{Step 1: Apply the Local Bound Near the Saddle Points}
The residual mapping $\mathbf R(\mathbf z)$ is continuous on
$\mathcal D$ because Euclidean projection onto a nonempty
closed convex set is continuous and the remaining
functions in $\mathbf R(\mathbf z)$ are continuous on this domain.
Moreover,
\[
\mathcal Z^*
=
\{\mathbf z\in\mathcal D:\mathbf R(\mathbf z)=0\}.
\]
Since $\mathcal D$ is closed, $\mathcal Z^*$ is closed.
By the conditions of Lemma 2, $\mathcal K$ is compact
and $\mathcal Z^*\cap\mathcal K$ is nonempty.
Hence, $\mathcal Z^*\cap\mathcal K$ is compact.

For every $\mathbf z^*\in\mathcal Z^*\cap\mathcal K$,
Assumption \ref{assump:kkt_error} gives constants
$\epsilon_{\mathbf z^*}>0$ and
$\kappa_{\mathbf z^*}>0$ for which
\eqref{eq:kkt_error_bound} holds. The corresponding open
neighborhoods cover $\mathcal Z^*\cap\mathcal K$. Since this
set is compact, finitely many neighborhoods are sufficient.
Thus, there exist saddle points
$\mathbf z_1^*,\ldots,\mathbf z_J^*
\in\mathcal Z^*\cap\mathcal K$ such that
\begin{equation}
 \mathcal Z^*\cap\mathcal K
 \subseteq
 \bigcup_{i=1}^{J}
 \left\{
 \mathbf z:
 \|\mathbf z-\mathbf z_i^*\|
 <
 \epsilon_{\mathbf z_i^*}
 \right\}.
 \label{eq:finite-cover}
\end{equation}

Define
\begin{equation}
 \mathcal U
 \triangleq
 \bigcup_{i=1}^{J}
 \left\{
 \mathbf z:
 \|\mathbf z-\mathbf z_i^*\|
 <
 \epsilon_{\mathbf z_i^*}
 \right\},
 \qquad
 \kappa_1
 \triangleq
 \max_{1\leq i\leq J}\kappa_{\mathbf z_i^*}.
 \label{eq:neighborhood-union}
\end{equation}
Every $\mathbf z\in\mathcal K\cap\mathcal U$ belongs to at
least one of these neighborhoods. Applying
\eqref{eq:kkt_error_bound} in that neighborhood gives
\begin{equation}
 \inf_{\mathbf z^*\in\mathcal Z^*}
 \|\mathbf z-\mathbf z^*\|
 \leq
 \kappa_1\|\mathbf R(\mathbf z)\|,
 \qquad
 \mathbf z\in\mathcal K\cap\mathcal U.
 \label{eq:near-saddle-bound}
\end{equation}

\subsubsection{Step 2: Establish the Error Bound
Outside the Neighborhoods in \eqref{eq:neighborhood-union}}
Define
\[
 \mathcal K_0\triangleq\mathcal K\setminus\mathcal U.
\]
If $\mathcal K_0=\emptyset$, set
$\bar\kappa\triangleq\kappa_1$. Then
\eqref{eq:near-saddle-bound} proves
\eqref{eq:uniform-kkt-bound}, so it remains to consider
$\mathcal K_0\neq\emptyset$.

The set $\mathcal K_0$ is compact because it is a closed subset
of the compact set $\mathcal K$. It contains no saddle point
because $\mathcal U$ covers $\mathcal Z^*\cap\mathcal K$.
Therefore,
$\mathbf R(\mathbf z)\neq\mathbf 0$ for every
$\mathbf z\in\mathcal K_0$.

Since $\mathbf R(\mathbf z)$ is continuous, its norm attains a
minimum over $\mathcal K_0$. Define
\begin{equation}
 R_{\mathrm{min}}
 \triangleq
 \min_{\mathbf z\in\mathcal K_0}
 \|\mathbf R(\mathbf z)\|.
 \label{eq:minimum-residual}
\end{equation}
We have $R_{\mathrm{min}}>0$. Otherwise, the minimum would be
attained at some $\mathbf z\in\mathcal K_0$ satisfying
$\mathbf R(\mathbf z)=\mathbf 0$. This would imply
$\mathbf z\in\mathcal Z^*\cap\mathcal K\subseteq\mathcal U$,
which contradicts $\mathbf z\in\mathcal K_0$.

Choose any reference saddle point
$\mathbf z^{\mathrm{ref}}\in\mathcal Z^*\cap\mathcal K$.
Since $\mathcal K$ is compact, define
\begin{equation}
 d_{\mathrm{ref}}
 \triangleq
 \max_{\mathbf z\in\mathcal K}
 \|\mathbf z-\mathbf z^{\mathrm{ref}}\|
 <\infty.
 \label{eq:maximum-reference-distance}
\end{equation}
For every $\mathbf z\in\mathcal K_0$, the reference point
$\mathbf z^{\mathrm{ref}}$ is a candidate in the infimum over
$\mathcal Z^*$. Hence,
\begin{align}
 \inf_{\mathbf z^*\in\mathcal Z^*}
 \|\mathbf z-\mathbf z^*\|
 &\leq
 \|\mathbf z-\mathbf z^{\mathrm{ref}}\|
 \nonumber\\
 &\leq d_{\mathrm{ref}}
 \nonumber\\
 &\leq
 \frac{d_{\mathrm{ref}}}{R_{\mathrm{min}}}
 \|\mathbf R(\mathbf z)\|,
 \qquad
 \mathbf z\in\mathcal K_0,
 \label{eq:outside-neighborhood-bound}
\end{align}
where the last inequality follows from
\eqref{eq:minimum-residual}.

\subsubsection{Step 3: Combine the Two Bounds}
Define
\begin{equation}
\bar\kappa
\triangleq
\max\left\{
\kappa_1,\frac{d_{\mathrm{ref}}}{R_{\mathrm{min}}}
\right\}
>0.
\label{eq:uniform-error-constant}
\end{equation}
Equation \eqref{eq:near-saddle-bound} applies to
$\mathbf z\in\mathcal K\cap\mathcal U$, while
\eqref{eq:outside-neighborhood-bound} applies to
$\mathbf z\in\mathcal K_0$. These two sets cover $\mathcal K$.
Therefore,
\begin{equation}
 \inf_{\mathbf z^*\in\mathcal Z^*}
 \|\mathbf z-\mathbf z^*\|
 \leq
 \bar\kappa\|\mathbf R(\mathbf z)\|,
 \qquad
 \forall\,\mathbf z\in\mathcal K.
 \label{eq:uniform-kkt-bound-proof}
\end{equation}
This proves \eqref{eq:uniform-kkt-bound}.
\end{IEEEproof}

\section{Proof of Lemma~\ref{lem:lyapunov_descent}}\label{app:fb-descent}

\begin{IEEEproof}
The proof proceeds in four steps.
Step 1 compares the optimality conditions of the primal
updates~\eqref{eq:fb-primal-update} with the KKT conditions
of Problem~\eqref{eq:fb-problem}.
Step 2 uses this comparison and the dual
update~\eqref{eq:fb-dual-update} to bound the decrease
of the Lyapunov function~\eqref{eq:lyapunov}.
Step 3 bounds the products between primal and dual
iterate changes in~\eqref{eq:fb-descent-cross} and applies
the parameter bound~\eqref{eq:fb-alpha}.
Step 4 compares the resulting bound with the matrix
$G$ in~\eqref{eq:fb-G} to obtain the positive descent
constant in Lemma~\ref{lem:lyapunov_descent}.

Fix an iteration $k\ge0$ and a saddle point
$\mathbf z^*
=(\boldsymbol\xi^*,\boldsymbol\Lambda^*)
\in\mathcal Z^*$ of Problem~\eqref{eq:fb-problem}.
Define the primal and dual iterate changes by
\begin{equation}
\begin{aligned}
&\mathbf d_i
\triangleq
\boldsymbol\xi_i^{k+1}-\boldsymbol\xi_i^k,
\qquad i=1,\ldots,4,\\
&\mathbf d_\Lambda
\triangleq
\boldsymbol\Lambda^{k+1}-\boldsymbol\Lambda^k,\\
&\mathbf d
\triangleq
\begin{bmatrix}
\mathbf d_1\\
\mathbf d_2\\
\mathbf d_3\\
\mathbf d_4
\end{bmatrix}.
\end{aligned}
\label{eq:fb-iterate-changes}
\end{equation}

\subsubsection{Step 1: Compare the Primal Updates
\eqref{eq:fb-primal-update} with the KKT Conditions
of Problem~\eqref{eq:fb-problem}}

The optimality condition of the $i$th primal
update~\eqref{eq:fb-primal-update} is
\begin{equation}
\begin{aligned}
0\in{}&
\partial\Phi_i(\boldsymbol\xi_i^{k+1})
+Q_i^{\mathsf T}\boldsymbol\Lambda^k\\
&+\rho Q_i^{\mathsf T}
\left(
Q\boldsymbol\xi^{k+1}
-\sum_{j\ne i}Q_j\mathbf d_j
\right)
+P_i\mathbf d_i.
\end{aligned}
\label{eq:fb-update-optimality}
\end{equation}
Here, the definition of $\mathbf d_j$
in~\eqref{eq:fb-iterate-changes} gives
\[
Q\boldsymbol\xi^{k+1}
-\sum_{j\ne i}Q_j\mathbf d_j
=
Q_i\boldsymbol\xi_i^{k+1}
+\sum_{j\ne i}Q_j\boldsymbol\xi_j^k.
\]

At the saddle point $\mathbf z^*$, the KKT
conditions of Problem~\eqref{eq:fb-problem} are
\begin{equation}
\begin{aligned}
-Q_i^{\mathsf T}\boldsymbol\Lambda^*
&\in\partial\Phi_i(\boldsymbol\xi_i^*),
\qquad i=1,\ldots,4,\\
Q\boldsymbol\xi^*&=0.
\end{aligned}
\label{eq:fb-saddle-kkt}
\end{equation}
For convenience, define
\begin{equation}
\widehat{\boldsymbol\Lambda}
\triangleq
\boldsymbol\Lambda^k+\rho Q\boldsymbol\xi^{k+1}.
\label{eq:fb-auxiliary-dual}
\end{equation}
Using~\eqref{eq:fb-auxiliary-dual} and
$H_i=P_i+\rho Q_i^{\mathsf T}Q_i$
from~\eqref{eq:fb-H}, we can rewrite
\eqref{eq:fb-update-optimality} as
\begin{equation}
-Q_i^{\mathsf T}\widehat{\boldsymbol\Lambda}
+\rho Q_i^{\mathsf T}Q\mathbf d-H_i\mathbf d_i
\in\partial\Phi_i(\boldsymbol\xi_i^{k+1}).
\label{eq:fb-update-subgradient}
\end{equation}

Each $\Phi_i$ is convex, so its subdifferential is
monotone. Comparing the subgradients
in~\eqref{eq:fb-update-subgradient}
and~\eqref{eq:fb-saddle-kkt}, and summing over
$i=1,\ldots,4$, gives
\begin{equation}
\begin{aligned}
&\sum_{i=1}^{4}
\left\langle
\boldsymbol\xi_i^{k+1}-\boldsymbol\xi_i^*,
H_i\mathbf d_i
\right\rangle\\
&\quad\le
-\left\langle
Q\boldsymbol\xi^{k+1},
\widehat{\boldsymbol\Lambda}-\boldsymbol\Lambda^*
\right\rangle
+\rho\left\langle
Q\boldsymbol\xi^{k+1},Q\mathbf d
\right\rangle.
\end{aligned}
\label{eq:fb-monotonicity}
\end{equation}
In obtaining~\eqref{eq:fb-monotonicity}, we used
the feasibility condition
$Q\boldsymbol\xi^*=0$
in~\eqref{eq:fb-saddle-kkt}.

\subsubsection{Step 2: Bound the Decrease of the
Lyapunov Function~\eqref{eq:lyapunov}}

The dual update~\eqref{eq:fb-dual-update} and
the definition~\eqref{eq:fb-auxiliary-dual} imply
\begin{equation}
\begin{aligned}
Q\boldsymbol\xi^{k+1}
&=\frac{\mathbf d_\Lambda}{\rho\tau},\\
\widehat{\boldsymbol\Lambda}
-\boldsymbol\Lambda^{k+1}
&=\frac{1-\tau}{\tau}\mathbf d_\Lambda.
\end{aligned}
\label{eq:fb-dual-identities}
\end{equation}

For any symmetric matrix $M$ and vectors of
the corresponding dimension,
\begin{equation}
\begin{aligned}
\|\mathbf a-\mathbf c\|_M^2
-\|\mathbf b-\mathbf c\|_M^2
={}&
2\left\langle
\mathbf a-\mathbf b,M(\mathbf b-\mathbf c)
\right\rangle\\
&+\|\mathbf a-\mathbf b\|_M^2.
\end{aligned}
\label{eq:fb-norm-identity}
\end{equation}
Apply~\eqref{eq:fb-norm-identity} to the four primal
terms and the dual term in the Lyapunov
expression~\eqref{eq:fb-expanded-lyapunov}.
Using the iterate changes
in~\eqref{eq:fb-iterate-changes}, we obtain
\begin{align}
&V(\mathbf z^k;\mathbf z^*)
-V(\mathbf z^{k+1};\mathbf z^*)
=
\sum_{i=1}^{4}\|\mathbf d_i\|_{H_i}^2
+\frac{1}{\rho\tau}\|\mathbf d_\Lambda\|^2
\notag\\
&-2\sum_{i=1}^{4}
\langle
\boldsymbol\xi_i^{k+1}-\boldsymbol\xi_i^*,
H_i\mathbf d_i
\rangle
-\frac{2}{\rho\tau}
\langle
\boldsymbol\Lambda^{k+1}-\boldsymbol\Lambda^*,
\mathbf d_\Lambda
\rangle.
\label{eq:fb-lyapunov-difference}
\end{align}

Substitute the bound~\eqref{eq:fb-monotonicity}
into~\eqref{eq:fb-lyapunov-difference}, and then use
the dual identities~\eqref{eq:fb-dual-identities}.
The inner products involving
$\boldsymbol\Lambda^{k+1}-\boldsymbol\Lambda^*$
cancel. Hence,
\begin{equation}
\begin{aligned}
&V(\mathbf z^k;\mathbf z^*)
-V(\mathbf z^{k+1};\mathbf z^*)\\
&\ge
\sum_{i=1}^{4}\|\mathbf d_i\|_{H_i}^2
+\frac{2-\tau}{\rho\tau^2}\|\mathbf d_\Lambda\|^2
-\frac{2}{\tau}\sum_{i=1}^{4}
\left\langle\mathbf d_\Lambda,Q_i\mathbf d_i\right\rangle.
\end{aligned}
\label{eq:fb-descent-cross}
\end{equation}
The first two terms on the right are nonnegative.
The last line contains inner products between the
dual change $\mathbf d_\Lambda$ and the primal
changes $Q_i\mathbf d_i$. These inner products can
have either sign, so we bound them next.

\subsubsection{Step 3: Bound the Primal--Dual Inner
Products in~\eqref{eq:fb-descent-cross}}

Let $\delta$ be the constant in Assumption 2 and set
\begin{equation}
\epsilon_i
\triangleq\frac{2-\tau-\delta}{4},
\qquad i=1,\ldots,4.
\label{eq:fb-young-weights}
\end{equation}
Since $0<\delta<2-\tau$ and $0<\tau<2$,
we have $0<\epsilon_i<1$ and
\[
\sum_{i=1}^{4}\epsilon_i=2-\tau-\delta.
\]

Apply Young's inequality to each inner product
in~\eqref{eq:fb-descent-cross}. This gives
\begin{equation}
\begin{aligned}
-\frac{2}{\tau}
\left\langle\mathbf d_\Lambda,Q_i\mathbf d_i\right\rangle
\ge{}&
-\frac{\epsilon_i}{\rho\tau^2}
 \|\mathbf d_\Lambda\|^2-\frac{\rho}{\epsilon_i}\|Q_i\mathbf d_i\|^2.
\end{aligned}
\label{eq:fb-young}
\end{equation}
Define the remaining primal weights by
\begin{equation}
\begin{aligned}
K_i
&\triangleq
H_i-\frac{\rho}{\epsilon_i}Q_i^{\mathsf T}Q_i\\
&=
\alpha I+\rho D_i
-\frac{4\rho}{2-\tau-\delta}Q_i^{\mathsf T}Q_i.
\end{aligned}
\label{eq:fb-K}
\end{equation}
The second equality follows from~\eqref{eq:fb-H}
and~\eqref{eq:fb-young-weights}.
The parameter bound~(30) in
Assumption 2 implies $K_i\succ0$.

The same bound also verifies the positivity of the
proximal matrices used in~\eqref{eq:fb-primal-update}.
Indeed, since $\epsilon_i<1$,
definitions~\eqref{eq:fb-P} and~\eqref{eq:fb-K} give
\begin{equation}
P_i
=
K_i+\rho(\epsilon_i^{-1}-1)Q_i^{\mathsf T}Q_i
\succ0.
\end{equation}

Substituting~\eqref{eq:fb-young}
into~\eqref{eq:fb-descent-cross} and collecting
the primal terms according to~\eqref{eq:fb-K},
we obtain
\begin{equation}
\begin{aligned}
\!\!\!\!\!V(\mathbf z^k;\mathbf z^*)
-\!V(\mathbf z^{k+1};\mathbf z^*)\ge\!
\sum_{i=1}^{4}\|\mathbf d_i\|_{K_i}^2
+\!\frac{\delta}{\rho\tau^2}\|\mathbf d_\Lambda\|^2.
\end{aligned}
\label{eq:fb-positive-descent}
\end{equation}
Here, the coefficient of the dual term is
$\delta/(\rho\tau^2)$ because
$2-\tau-\sum_{i=1}^{4}\epsilon_i=\delta$.
All matrix weights on the right of
\eqref{eq:fb-positive-descent} are positive definite.

\subsubsection{Step 4: Obtain the $G$-Norm Descent
Bound in Lemma~\ref{lem:lyapunov_descent}}

Collect the weights in~\eqref{eq:fb-positive-descent}
into the matrix
\begin{equation}
T\triangleq
\begin{bmatrix}
K_1 & 0   & 0   & 0   & 0\\
0   & K_2 & 0   & 0   & 0\\
0   & 0   & K_3 & 0   & 0\\
0   & 0   & 0   & K_4 & 0\\
0   & 0   & 0   & 0   & \dfrac{\delta}{\rho\tau^2}I
\end{bmatrix}.
\label{eq:fb-T}
\end{equation}
By~\eqref{eq:fb-state} and~\eqref{eq:fb-iterate-changes},
\begin{align}
\mathbf z^{k+1}-\mathbf z^k
=
\begin{bmatrix}
\mathbf d\\
\mathbf d_\Lambda
\end{bmatrix}
=
\begin{bmatrix}
\mathbf d_1\\
\mathbf d_2\\
\mathbf d_3\\
\mathbf d_4\\
\mathbf d_\Lambda
\end{bmatrix}.
\end{align}
Thus, the right-hand side of
\eqref{eq:fb-positive-descent} equals
$\|\mathbf z^{k+1}-\mathbf z^k\|_T^2$.

The matrix $T$ in~\eqref{eq:fb-T} and the Lyapunov
matrix $G$ in~\eqref{eq:fb-G} are positive definite.
Define
\begin{equation}
\nu
\triangleq
\lambda_{\min}(G^{-1/2}TG^{-1/2})>0,
\label{eq:fb-descent-constant}
\end{equation}
where $\lambda_{\min}(M)$ denotes the smallest
eigenvalue of a symmetric matrix $M$.
It follows that $T\succeq\nu G$. Therefore,
\eqref{eq:fb-positive-descent} gives
\begin{equation}
\begin{aligned}
V(\mathbf z^k;\mathbf z^*)
-V(\mathbf z^{k+1};\mathbf z^*)&\ge
\|\mathbf z^{k+1}-\mathbf z^k\|_T^2\\&\ge
\nu\|\mathbf z^{k+1}-\mathbf z^k\|_G^2.
\end{aligned}
\label{eq:fb-final-descent}
\end{equation}
The matrices $T$ and $G$ depend only on the fixed
network data and algorithm parameters.
Hence, $\nu$ is independent of $k$ and the chosen
saddle point $\mathbf z^*$.
This proves Lemma~\ref{lem:lyapunov_descent}.
\end{IEEEproof}

\section{Proof of Lemma~\ref{lem:residual_bound}}\label{apx:residual_bound}

\begin{IEEEproof}
The proof proceeds in three steps.
Step 1 rewrites the optimality condition of the primal
update~\eqref{eq:fb-primal-update} using the gradient
of \eqref{eq:auglag} at $\mathbf z^{k+1}$ and the changes between
iterations $k$ and $k+1$.
Step 2 uses the primal optimality condition
in ~\eqref{eq:fb-perturbed-stationarity} 
to bound
the four projection differences in the KKT
residual~\eqref{eq:fb-kkt-residual}.
Step 3 uses the dual update~\eqref{eq:fb-dual-update}
to bound the residuals of equalities
\eqref{eq:P1_flow}, \eqref{eq:P1_rate}, \eqref{eq:p1-snragg}, and \eqref{eq:consensus}, and combines these
bounds to prove Lemma~\ref{lem:residual_bound}.

Fix $k\ge0$, and let $\mathbf z^{k+1}$ contain the
primal variables generated by~\eqref{eq:fb-primal-update}
and the dual variables generated by~\eqref{eq:fb-dual-update}.
We use the normal cone to express the local constraints
$\boldsymbol\xi_i\in\mathcal C_i$.
For a nonempty closed convex set $\mathcal C$ and
$\mathbf u\in\mathcal C$, define
\begin{equation}
N_{\mathcal C}(\mathbf u)
\triangleq
\left\{
\mathbf v:
\langle\mathbf v,\mathbf w-\mathbf u\rangle\le0
\text{ for all }\mathbf w\in\mathcal C
\right\}.
\label{eq:fb-normal-cone}
\end{equation}

\subsubsection{Step 1: Rewrite the Primal Optimality
Condition~\eqref{eq:fb-update-optimality}
Using the Gradient of \eqref{eq:auglag} at $\mathbf z^{k+1}$}

Recall that $\Phi_i=f_i+\iota_{\mathcal C_i}$.
Writing the local constraint through its normal cone,
the optimality condition~\eqref{eq:fb-update-optimality}
becomes
\begin{equation}
\begin{aligned}
0\in{}&
\nabla f_i(\boldsymbol\xi_i^{k+1})
+N_{\mathcal C_i}(\boldsymbol\xi_i^{k+1})
+Q_i^{\mathsf T}\boldsymbol\Lambda^k\\
&+\rho Q_i^{\mathsf T}
\left(
Q\boldsymbol\xi^{k+1}
-\sum_{j\ne i}Q_j\mathbf d_j
\right)
+P_i\mathbf d_i.
\end{aligned}
\label{eq:fb-residual-optimality}
\end{equation}

The function $f_i$ is the negative of the corresponding
utility term in \eqref{eq:auglag}. Hence, the gradient of \eqref{eq:auglag}
with respect to block $\boldsymbol\xi_i$, evaluated
at the complete iterate $\mathbf z^{k+1}$, is
\begin{equation}
\begin{aligned}
\nabla_{\boldsymbol\xi_i}\mathcal L_\rho(\mathbf z^{k+1})
={}&\!
-\nabla f_i(\boldsymbol\xi_i^{k+1})
-\!Q_i^{\mathsf T}\boldsymbol\Lambda^{k+1}-\!\rho Q_i^{\mathsf T}Q\boldsymbol\xi^{k+1}.
\end{aligned}
\label{eq:fb-lagrangian-gradient}
\end{equation}
In particular, all primal and dual variables in
\eqref{eq:fb-lagrangian-gradient} are evaluated
at iteration $k+1$.

Substitute~\eqref{eq:fb-lagrangian-gradient}
into~\eqref{eq:fb-residual-optimality}.
Using
$\boldsymbol\Lambda^{k+1}-\boldsymbol\Lambda^k
=\mathbf d_\Lambda$
and $H_i=P_i+\rho Q_i^{\mathsf T}Q_i$
from~\eqref{eq:fb-H}, collect the terms involving
iterate changes into
\begin{equation}
\mathbf e_i
\triangleq
-H_i\mathbf d_i
+\rho Q_i^{\mathsf T}Q\mathbf d
+Q_i^{\mathsf T}\mathbf d_\Lambda.
\label{eq:fb-perturbation}
\end{equation}
The optimality condition then gives
\begin{equation}
\mathbf e_i
\in
-\nabla_{\boldsymbol\xi_i}\mathcal L_\rho(\mathbf z^{k+1})
+N_{\mathcal C_i}(\boldsymbol\xi_i^{k+1}).
\label{eq:fb-perturbed-stationarity}
\end{equation}
The vector $\mathbf e_i$ is a fixed linear function
of the primal and dual iterate changes.
We next use~\eqref{eq:fb-perturbed-stationarity}
to bound the four projection differences
in~\eqref{eq:fb-kkt-residual}.

\subsubsection{Step 2: Bound the Four Projection
Differences in the KKT Residual~\eqref{eq:fb-kkt-residual}}
We bound the first four block rows of
\eqref{eq:fb-kkt-residual} at $\mathbf z^{k+1}$.
For $\mathbf u\in\mathcal C$, the condition
$\mathbf v\in N_{\mathcal C}(\mathbf u)$ holds
if and only if
\begin{equation}
\mathbf u=\Pi_{\mathcal C}(\mathbf u+\mathbf v).
\label{eq:fb-normal-projection}
\end{equation}
By~\eqref{eq:fb-perturbed-stationarity},
\[
\nabla_{\boldsymbol\xi_i}\mathcal L_\rho(\mathbf z^{k+1})
+\mathbf e_i
\in N_{\mathcal C_i}(\boldsymbol\xi_i^{k+1}).
\]
Applying~\eqref{eq:fb-normal-projection} therefore gives
\begin{equation}
\boldsymbol\xi_i^{k+1}
=
\Pi_{\mathcal C_i}\!\left(
\boldsymbol\xi_i^{k+1}
+\nabla_{\boldsymbol\xi_i}\mathcal L_\rho(\mathbf z^{k+1})
+\mathbf e_i
\right).
\label{eq:fb-perturbed-projection}
\end{equation}

Subtract
$\Pi_{\mathcal C_i}\!\Big(
\boldsymbol\xi_i^{k+1}
+\nabla_{\boldsymbol\xi_i}\mathcal L_\rho(\mathbf z^{k+1})
\Big)$
from both sides of~\eqref{eq:fb-perturbed-projection}.
Then
\begin{align}
&\boldsymbol\xi_i^{k+1}
-\Pi_{\mathcal C_i}\!\left(
\boldsymbol\xi_i^{k+1}
+\nabla_{\boldsymbol\xi_i}\mathcal L_\rho(\mathbf z^{k+1})
\right)
\notag\\
&\quad=
\Pi_{\mathcal C_i}\!\left(
\boldsymbol\xi_i^{k+1}
+\nabla_{\boldsymbol\xi_i}\mathcal L_\rho(\mathbf z^{k+1})
+\mathbf e_i
\right)
\notag\\
&\qquad-
\Pi_{\mathcal C_i}\!\left(
\boldsymbol\xi_i^{k+1}
+\nabla_{\boldsymbol\xi_i}\mathcal L_\rho(\mathbf z^{k+1})
\right).
\label{eq:fb-projection-difference}
\end{align}
The two projection inputs differ only by $\mathbf e_i$.
Euclidean projection onto a closed convex set does
not increase distances. Thus, for $i=1,\ldots,4$,
\begin{align}
&\left\|
\boldsymbol\xi_i^{k+1}
-\Pi_{\mathcal C_i}\!\left(
\boldsymbol\xi_i^{k+1}
+\nabla_{\boldsymbol\xi_i}\mathcal L_\rho(\mathbf z^{k+1})
\right)
\right\|\le\|\mathbf e_i\|.
\label{eq:fb-stationarity-bound}
\end{align}

\subsubsection{Step 3: Use the Dual Update
\eqref{eq:fb-dual-update} to Bound the Complete KKT Residual}
The vector $Q\boldsymbol\xi^{k+1}$ contains the
residuals of the equalities \eqref{eq:P1_flow}, \eqref{eq:P1_rate}, \eqref{eq:p1-snragg}, and
\eqref{eq:consensus}, as written in~\eqref{eq:fb-equality-residual}
and~\eqref{eq:fb-equalities}.
The dual update~\eqref{eq:fb-dual-update} expresses
these residuals through the dual iterate change:
\begin{equation}
Q\boldsymbol\xi^{k+1}
=\frac{1}{\rho\tau}\mathbf d_\Lambda.
\label{eq:fb-equality-step}
\end{equation}

The definition~\eqref{eq:fb-kkt-residual} and the
bound~\eqref{eq:fb-stationarity-bound} for each
$i=1,\ldots,4$ give
\begin{align}
&\|\mathbf R(\mathbf z^{k+1})\|^2
\notag\\
&=
\sum_{i=1}^{4}
\left\|
\boldsymbol\xi_i^{k+1}
-\Pi_{\mathcal C_i}\!\left(
\boldsymbol\xi_i^{k+1}
+\nabla_{\boldsymbol\xi_i}\mathcal L_\rho(\mathbf z^{k+1})
\right)
\right\|^2
+\|Q\boldsymbol\xi^{k+1}\|^2
\notag\\
&\le
\sum_{i=1}^{4}\|\mathbf e_i\|^2
+\|Q\boldsymbol\xi^{k+1}\|^2.
\label{eq:fb-complete-residual-bound}
\end{align}

To express the right-hand side in terms of the
iterate changes, define
\begin{equation}
M_R\triangleq
\begin{bmatrix}
\rho Q^{\mathsf T}Q-G_\xi&Q^{\mathsf T}\\
0&\dfrac{1}{\rho\tau}I
\end{bmatrix},
\label{eq:fb-residual-matrix}
\end{equation}
where $G_\xi$ is the primal matrix in~\eqref{eq:fb-G}.
Stacking~\eqref{eq:fb-perturbation} over the four
blocks and using~\eqref{eq:fb-equality-step} gives
\begin{equation}
\begin{bmatrix}
\mathbf e_1\\
\mathbf e_2\\
\mathbf e_3\\
\mathbf e_4\\
Q\boldsymbol\xi^{k+1}
\end{bmatrix}
=
M_R
\begin{bmatrix}
\mathbf d\\
\mathbf d_\Lambda
\end{bmatrix}.
\label{eq:fb-stacked-residual-bound}
\end{equation}
Since
\[
\begin{bmatrix}
\mathbf d\\
\mathbf d_\Lambda
\end{bmatrix}
=
\mathbf z^{k+1}-\mathbf z^k,
\]
equations~\eqref{eq:fb-complete-residual-bound}
and~\eqref{eq:fb-stacked-residual-bound} imply
\begin{align}
\|\mathbf R(\mathbf z^{k+1})\|
&\le
\|M_R(\mathbf z^{k+1}-\mathbf z^k)\|
\notag\\
&\le
\|M_RG^{-1/2}\|_2
\|\mathbf z^{k+1}-\mathbf z^k\|_G.
\label{eq:fb-residual-step}
\end{align}
Here, $\|M\|_2=\sup_{\|\mathbf v\|=1}\|M\mathbf v\|$.
The last inequality follows by writing
\begin{equation}
M_R(\mathbf z^{k+1}-\mathbf z^k)
=
M_RG^{-1/2}G^{1/2}(\mathbf z^{k+1}-\mathbf z^k).
\end{equation}
Define
\begin{equation}
C_R\triangleq\|M_RG^{-1/2}\|_2.
\label{eq:fb-residual-constant}
\end{equation}
The matrices $M_R$ and $G$ depend only on the fixed
network data and algorithm parameters.
Since $G\succ0$, the constant $C_R$ is finite and
independent of $k$.
It is positive because $M_R$ contains the nonzero
block $(\rho\tau)^{-1}I$.
Substituting~\eqref{eq:fb-residual-constant}
into~\eqref{eq:fb-residual-step} proves Lemma \ref{lem:residual_bound}.
\end{IEEEproof}